\theoremstyle{plain}
\newtheorem{Th}{Theorem}[section]
\newtheorem{Lem}[Th]{Lemma}
\newtheorem{Ex}{Example}[section]
\newtheorem{Rem}{Remark}[section]
\numberwithin{equation}{section}
\DeclareMathOperator{\ima}{Im}
\newcommand{\derx}{\mathrm{Der}^{\qp_x}(\hm A)}
\newcommand{\derd}{\mathrm{Der}^D(\hm A)}
\newcommand{\diff}[2]{\frac{\partial #1}{\partial #2}}
\newcommand{\qa}{\alpha}
\newcommand{\qb}{\beta}
\newcommand{\qd}{\delta}
\newcommand{\qg}{\gamma}
\newcommand{\qs}{\sigma}
\newcommand{\qt}{\tau}
\newcommand{\qth}{\theta}
\newcommand{\qe}{\varepsilon}
\newcommand{\qp}{\partial}
\newcommand{\Qg}{\Gamma}
\newcommand{\ql}{\lambda}
\newcommand{\Ql}{\Lambda}
\newcommand{\vard}[2]{\frac{\delta #1}{\delta #2}}
\newcommand{\hm}[1]{\hat{\mathcal #1}}
\newcommand{\fk}[2]{\left[#1, #2\right]}
\begin{document}

\title[Linear Reciprocal Transformations and Bihamiltonian Structures]{Variational Bihamiltonian Cohomologies and Integrable Hierarchies III: Linear Reciprocal Transformations}

\author{Si-Qi Liu, Zhe Wang, Youjin Zhang}
\keywords{Reciprocal transformations, Bihamiltonian structures, Integrable hierarchies}

\begin{abstract}
For an integrable hierarchy which possesses a bihamiltonian structure with semisimple hydrodynamic limit, we prove that the linear reciprocal transformation with respect to any of its symmetry transforms it to another bihamiltonian integrable hierarchy. Moreover, we show that the central invariants of the bihamiltonian structure  are preserved under such a linear reciprocal transformation.
\end{abstract}

\date{\today}

\maketitle
\tableofcontents


\section{Introduction}\label{intro}

Reciprocal transformations are transformations of independent variables of evolutionary PDEs, which originally appear in the study of gas dynamics, see \cite{rogers1982backlund} and references therein. Such transformations also play important roles in the study of the theory of integrable hierarchies. For example, they are used to  establish relations between different integrable hierarchies and reveal geometric structures behind them. 

Consider a system of evolutionary PDEs of the form
\begin{equation}
\label{AB}
\diff{u^\qa}{t} = A^\qa_\qb(u) u^{\qb}_x+\qe \left(B^\qa_\qb(u) u^\qb_{xx}+C^\qa_{\qb\qg}(u)u^{\qb}_xu^\qg_x\right)+\dots,
\end{equation}
here and henceforth we assume summations over repeated upper and lower Greek indices.
If it has a nontrivial conservation law
\begin{align*}
\diff{a(u,u_x,u_{xx},\dots)}{t} = \diff{b(u,u_x,u_{xx},\dots)}{x},
\end{align*}
then we can define a reciprocal transformation $(x,t)\mapsto (y,s)$ as follows:
\[
dy = a(u)dx+b(u)dt,\quad s = t,
\]
and we call such a reciprocal transformation a nonlinear one. We can also define a linear reciprocal transformation by exchanging the spatial and time variables, i.e., the transformation $(x,t)\mapsto(y,s)$ given by
\[
y = t,\quad s = x.
\]
We can compose these two kinds of reciprocal transformations to obtain general ones. Typical examples of nonlinear reciprocal transformations include the one which transforms the Camassa-Holm equation to the first negative flow of the KdV hierarchy, and the one which transforms the 2-component Camassa-Holm equation to the first negative flow of the AKNS hierarchy \cite{camassa1993integrable,camassa1994new,chen2006two,fuchssteiner1996some,fuchssteiner1981symplectic}.
An example of linear reciprocal transformation is given by the transformation between the extended Toda hierarchy and the nonlinear Schr\"{o}dinger hierarchy \cite{carlet2004extended}. Linear reciprocal transformations are also used in \cite{dubrovin2001normal} to establish relations between a certain type of bihamiltonian integrable hierarchies and Frobenius manifolds.

An important problem concerned with reciprocal transformations is whether such transformations preserve the Hamiltonian property of Hamiltonian evolutionary PDEs. This problem is studied and an affirmative answer is obtained in \cite{ferapontov2003reciprocal,pavlov1995cons,tsarev1991geometry,xue2006bihamiltonian} 
when the system \eqref{AB} is of hydrodynamic type, i.e., of the form
\[
\diff{u^\qa}{t} = A^\qa_\qb(u)u^\qb_x.
\]
It is shown in \cite{ferapontov2003reciprocal} that after performing a nonlinear reciprocal transformation, a Hamiltonian system of hydrodynamic type is transformed to a Hamiltonian system but with a non-local Hamiltonian structure of a certain special type \cite{fera1995conformal}. These non-local Hamiltonian structures are generalized to Jacobi structures \cite{liu2011jacobi} which are certain non-local Hamiltonian structures for deformations of non-local Hamiltonian systems of hydrodynamic type, and their transformation rules under nonlinear reciprocal transformations are given in \cite{liu2011jacobi}. More general non-local Hamiltonian structures are introduced and studied in \cite{lorenzoni2020weakly} and the problem of their classification under reciprocal transformations is studied in \cite{lorenzoni2023miura}.

Unlike nonlinear ones, linear reciprocal transformations preserve the locality of Hamiltonian structures of hydrodynamic type, as it is shown in  \cite{pavlov1995cons,tsarev1991geometry}. Furthermore, if the original system admits a semisimple bihamiltonian structure of hydrodynamic type, then the transformed system again possesses a semisimple bihamiltonian structure of hydrodynamic type \cite{xue2006bihamiltonian}. However, it is not easy to generalize this result to the bihamiltonian dispersionful systems with hydrodynamic limits. 

In this paper we study the transformation rule of the bihamiltonian structure of a bihamiltonian integrable hierarchy which possesses hydrodynamic limit under 
linear reciprocal transformations.
As an illustrative example, we recall here the linear reciprocal transformation which relates the Toda hierarchy to the nonlinear Schr\"{o}dinger hierarchy, as it is given in \cite{carlet2004extended}. The Toda  hierarchy can be described by the following Lax pairs:
\begin{equation}
\label{DB}
\qe\diff{L}{t_k} = [L^k_+,L],\quad k\geq 0,
\end{equation}
with the Lax operator
\[L = \Ql+v+e^u\Ql^{-1},\]
here $\Ql = \exp(\qe \qp_x)$ is the shift operator along the spatial variable. For example, we have
\[
\diff{v}{t_1} = \frac1\qe(\Ql-1)e^{u(x)},\quad \diff{u}{t_1} = \frac1\qe(1-\Ql^{-1})v(x).\]
Introduce new dependent variables 
\[
\varphi(x) = \Ql^{-1}v(x),\quad \rho(x) = e^{u(x)},
\]
then we have
\begin{equation}
\label{DC}
\qe\diff{\varphi}{t_1} = \rho-\rho^-,\quad \qe\diff{\rho}{t_1} = \rho(\varphi^+-\varphi).
\end{equation}
Here and henceforth we use the notation
\[
\varphi^+:=\Ql \varphi,\quad \rho^+:=\Ql\rho,\quad \varphi^-:=\Ql^{-1} \varphi,\quad \rho^-:=\Ql^{-1}\rho.
\]
Similarly, we have the following second flow of the Toda hierarchy \eqref{DB}:
\begin{align}
\label{DD}
\qe\diff{\varphi}{t_2} &=\varphi^+\rho+\varphi\rho-\varphi\rho^--\varphi^-\rho^-\\
\label{DE}
 \qe\diff{\rho}{t_2} &=\rho\varphi^+\varphi^+-\rho\varphi^2+\rho\rho^+-\rho\rho^-.
\end{align}
Let us perform the linear reciprocal transformation with respect to the flow \eqref{DC}, i.e., we represent $\diff{}{x}$ in terms of the derivatives with respect to the flow $\diff{}{t_1}$:
\begin{align*}
\diff{\varphi}{x} &= \frac1\rho\diff\rho{t_1}+\qe\left(\frac{1}{2\rho^2}\diff\rho{t_1}\diff\varphi{t_1}-\frac{1}{2\rho}\frac{\qp^2\varphi}{\qp t_1^2}\right)+O(\qe^2),\\
\diff{\rho}{x}&=\diff\varphi{t_1}+\qe\left(-\frac{1}{2\rho^2}\left(\diff\rho{t_1}\right)^2+\frac{1}{2\rho}\frac{\qp^2\rho}{\qp t_1^2}\right)+O(\qe^2).
\end{align*}
This linear reciprocal transformation can be applied to other flows of the Toda  hierarchy. For example, a straightforward computation yields
\[
\diff{\varphi}{t_2} = 2\varphi\diff{\varphi}{t_1}+2\diff{\rho}{t_1}-\qe \frac{\qp^2\varphi}{\qp t_1^2},\quad \diff{\rho}{t_2} = 2\varphi\diff{\rho}{t_1}+2\rho\diff{\varphi}{t_1}+\qe \frac{\qp^2\rho}{\qp t_1^2}.
\]
If we rewrite $t_1$ as the new spatial variable $X$, the above flow coincides with the nonlinear Schr\"{o}dinger equation:
\[
\diff{\varphi}{t_2} = 2\varphi\varphi_X+2\rho_X-\qe \varphi_{XX},\quad \diff{\rho}{t_2} = 2\varphi\rho_X+2\rho\varphi_X+\qe \rho_{XX}.
\]  Both  the Toda hierarchy and the nonlinear Schr\"{o}dinger hierarchy are bihamiltonian. The bihamiltonian structure of the Toda hierarchy can be represented by the following Hamiltonian operators in terms of the variables $(\varphi,\rho)$ by
\[\mathcal P_0 = \frac1\qe\begin{pmatrix}0&\rho-\rho^-\Ql^{-1}\\[5pt] -\rho+\rho\Ql&0\end{pmatrix},\quad \mathcal P_1 = \frac1\qe\begin{pmatrix}\rho\Ql-\rho^-\Ql^{-1}&\varphi\rho-\varphi\rho^-\Ql^{-1}\\[5pt] -\varphi\rho+\varphi^+\rho\Ql&\rho\rho^+\Ql-\rho\rho^-\Ql^{-1}\end{pmatrix}.\]
And the Hamiltonian operators of bihamiltonian structure of the nonlinear Schr\"{o}dinger equation in terms of $(\varphi,\rho)$ reads
\[\mathcal P_0 = \begin{pmatrix}0&\qp_X\\[5pt]\qp_X&0\end{pmatrix},\quad \mathcal P_1 = \begin{pmatrix}2\qp_X&\varphi\qp_X+\varphi_X-\qe\qp_X^2\\[5pt] \varphi\qp_X+\qe\qp_X^2&2\rho\qp_X+\rho_X\end{pmatrix}.\]

For a general bihamiltonian integrable hierarchy we prove, under the condition of semisimplicity, that linear reciprocal transformations always preserve the bihamiltonian property. To state this result in a more precise way, let us consider the following bihamiltonian evolutionary PDEs
\begin{equation}
\label{DM}
\diff{u^\qa}{t} = \mathcal P^{\qa\qb}_0\vard{H_0}{u^\qb} = \mathcal P^{\qa\qb}_1\vard{H_1}{u^\qb} = A^\qa_\qb u^{\qb}_x+O(\qe)
\end{equation}
with Hamiltonian operators $\mathcal P_i = (\mathcal P^{\qa\qb}_i)$ and Hamiltonians $H_i$ of the form
\begin{align*}
\mathcal P^{\qa\qb}_i &=g^{\qa\qb}_i\qp_x+\Qg^{\qa\qb}_{\qg}u^{\qg}_x+\qe(A^{\qa\qb}_i\qp_x^2+B^{\qa\qb}_{i;\qg}u^{\qg}_x\qp_x+C^{\qa\qb}_{i;\qg\ql}u^{\qg}_xu^{\ql}_x)+O(\qe^2),\\
H_i &= \int h_i(u)+\qe a_{i;\qa}(u)u^\qa_x+O(\qe^2),
\end{align*}
where $\qe$ is the dispersion parameter. We require that the leading terms of the above bihamiltonian structure are semisimple, i.e., the roots of characteristic equation \[\det(g_1^{\qa\qb}-\ql g_0^{\qa\qb})=0\] are distinct and non-constant. By exchanging the spatial and time variables of the system \eqref{DM}, we arrive at
\[
\diff{u^\qa}{x} = (A^{-1})^\qa_\qb \diff{u^\qb}{t}+O(\qe).
\] 
If we still use $x$ to denote the spatial variable and use $\tilde t$ to denote the time variable of the above system, we obtain from \eqref{DM} by the linear reciprocal transformation the following system of evolutionary PDEs:
\begin{equation}
\label{DP}
\diff{u^\qa}{\tilde t} = (A^{-1})^\qa_\qb u^\qb_x+O(\qe).
\end{equation}

Let us state the main result of the present paper.
\begin{Th}
\label{AE}
The system of evolutionary PDEs \eqref{DP} has the following properties:
\begin{enumerate}
\item It possesses a bihamiltonian structure with semisimple  hydrodymanic limit. The central invariants of this bihamiltonian structure coincide with those of the bihamiltonian structure of \eqref{DM}.
\item After performing the linear reciprocal transformation, any symmetry of the system \eqref{DM} of the form
\[
\diff{u^\qa}{s} = \mathcal P^{\qa\qb}_0\vard{G_0}{u^\qb} = \mathcal P^{\qa\qb}_1\vard{G_1}{u^\qb}
\]
is transformed to a symmetry of \eqref{DP} which shares the same bihamiltonian structure with \eqref{DP}.
\end{enumerate}
\end{Th}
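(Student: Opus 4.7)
The plan is to construct candidate Hamiltonian operators $\tilde{\mathcal{P}}_0,\tilde{\mathcal{P}}_1$ for the transformed system \eqref{DP} explicitly, verify the bihamiltonian property using the cohomological machinery developed in the previous papers of the series, and then compare central invariants. The hydrodynamic limit of the statement is covered by the classical results \cite{xue2006bihamiltonian,pavlov1995cons,tsarev1991geometry}, which provide the transformed bihamiltonian structure for systems of hydrodynamic type; the content of the proof is to lift this construction to all orders in $\qe$.

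To build $\tilde{\mathcal{P}}_i$, I would perturbatively invert the flow \eqref{DM}: from $\qp_t u^\qa = A^\qa_\qb u^\qb_x + O(\qe)$ one extracts $\qp_x u^\qa = (A^{-1})^\qa_\qb \qp_t u^\qb + O(\qe)$ and, iterating, obtains each $\qp_x^k u^\qa$ as a differential polynomial in the $\qp_t$-derivatives of $u$. Substituting these expressions into $\mathcal{P}_0,\mathcal{P}_1$ and reinterpreting the old time $t$ as the new spatial variable yields candidates $\tilde{\mathcal{P}}_0,\tilde{\mathcal{P}}_1$. The Jacobi identity and compatibility would then be established by recognizing $(\tilde{\mathcal{P}}_0,\tilde{\mathcal{P}}_1)$ as a deformation of the already bihamiltonian hydrodynamic pencil of \eqref{DP} and invoking the vanishing of the appropriate second bihamiltonian cohomology group from the previous papers, which forces any such admissible deformation to be Hamiltonian; this is cleaner than a direct Schouten computation, which becomes unwieldy at higher orders. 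To identify the new flow as bihamiltonian for $(\tilde{\mathcal{P}}_0,\tilde{\mathcal{P}}_1)$, one observes that under the substitution it corresponds to the old translation $\qp_x u^\qa = u^\qa_x$, which is Hamiltonian for both $\mathcal{P}_0$ and $\mathcal{P}_1$; the Hamiltonians $\tilde H_0,\tilde H_1$ are obtained by applying the reciprocal rule to the corresponding densities, absorbing the Jacobian arising from the change of spatial integration measure.

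The preservation of central invariants then reduces, by the classification theorem from the series, to a comparison at order $\qe^2$. Working in the canonical coordinates of the pencil $g_1-\ql g_0$, the central invariants appear as explicit local functions of those coordinates, and a direct substitution using the linearity of the reciprocal transformation together with the hydrodynamic-level statement of \cite{xue2006bihamiltonian} should match the two sets of invariants. Part (2) of the theorem then follows from the same substitution argument applied to $G_0,G_1$: a symmetry $\qp_s$ of \eqref{DM} commutes with $\qp_t$, hence after the transformation commutes with $\qp_{\tilde x}=\qp_t$ and is a symmetry of \eqref{DP}; and the transformed densities yield Hamiltonians $\tilde G_0,\tilde G_1$ realizing $\qp_s$ as bihamiltonian for $(\tilde{\mathcal{P}}_0,\tilde{\mathcal{P}}_1)$.

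I expect the main obstacle to be controlling the iterated substitution of the inverted flow so that the resulting $\tilde{\mathcal{P}}_i$ are manifestly local and polynomial in the new jet coordinates. The linearity of the reciprocal transformation is precisely what keeps everything local, in sharp contrast to the nonlinear case, which produces nonlocal Jacobi structures; but setting up the bihamiltonian cocycle condition so as to invoke the vanishing theorem efficiently, rather than drown in combinatorics, is the most delicate step of the argument.
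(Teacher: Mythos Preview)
Your plan has a genuine gap at the step where you ``substitute into $\mathcal P_0,\mathcal P_1$ and reinterpret $t$ as the new spatial variable'' to obtain candidates $\tilde{\mathcal P}_0,\tilde{\mathcal P}_1$. This substitution is not well-defined on the level of Hamiltonian \emph{operators}: a Hamiltonian operator is not a tensor on jet space, and replacing $\partial_x$ by the inverted flow inside $\mathcal P_i$ does not produce a skew-adjoint operator, let alone one satisfying Jacobi. The paper's worked KdV example makes this explicit: after the reciprocal transformation the odd flows \eqref{AS}--\eqref{AV} obtained by substitution are \emph{not} of $D$-type, i.e.\ they do not come from any local functional in $\hm F^2$, hence do not define Hamiltonian operators at all. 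Your appeal to a vanishing cohomology group to ``force any such admissible deformation to be Hamiltonian'' does not rescue this: bihamiltonian cohomology classifies deformations that are already Hamiltonian order by order, it does not certify that an arbitrary candidate with bihamiltonian leading term satisfies Jacobi.

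The missing idea is to work not with the operators but with the super-extended system: the bihamiltonian structure is encoded by the odd flows $D_{P_0},D_{P_1}\in\derx^1$, and the reciprocal transformation acts on these as on any commuting family of flows, so the transformed derivations $\partial_{\tilde\tau_0},\partial_{\tilde\tau_1}$ automatically satisfy $[\partial_{\tilde\tau_a},\partial_{\tilde\tau_b}]=0$. The entire difficulty is then concentrated in a single question: can one change the odd variables $\qth_i\mapsto\qs_i=\qth_i+W_i$ so that these commuting derivations become of $D$-type? At hydrodynamic order the answer is $\qs_i=A^i\qth_i$ (Theorem~\ref{BR}); at higher orders one must invoke the \emph{variational} bihamiltonian cohomology on $\derx$ (not on $\hm F$), specifically $\vbh^1_{\ge 4}=0$ and the explicit structure of $\vbh^1_3$, to build the correction $W_i$ inductively (Theorem~\ref{BS}). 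Only after this change of odd variables do local functionals $\tilde P_0,\tilde P_1\in\hm F^2$ with $\partial_{\tilde\tau_a}=D_{\tilde P_a}$ exist. The central-invariant computation and part~(2) then proceed as you outline, but using $\vbh^0_{\ge2}=0$ on derivations rather than a density-substitution argument.
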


This result provides us a convenient tool to classify bihamiltonian integrable hierarchies with semisimple hydrodynamic limits under linear reciprocal transformations and Miura-type transformations. More precisely, given two bihamiltonian integrable hierarchies with hydrodynamic limits, we obtain a very simple method to determine whether we can transform one hierarchy to the other by performing a linear reciprocal transformation composed by a Miura-type transformation. For example, we show that the equivalence classes of scalar bihamiltonian integrable hierarchy under linear reciprocal transformations and Miura-type transformations are parametrized by smooth functions of a single variable.

This paper is organized as follows. In Sect.\,\ref{vbh}, we review the theory of variational bihamiltonian cohomology which is our main tool to prove that linear reciprocal transformations preserve bihamiltonian property. In Sect.\,\ref{pro} we give the proof of Theorem \ref{AE} and use it to study the problem of classification of bihamiltonian integrable hierarchies under linear reciprocal transformations and Miura-type transformations. In Sect.\ref{ex} we present some examples, and prove some general results on relations of linear reciprocal transformations with the theory of Frobenius manifolds and  Dubrovin-Zhang hierarchies. Finally in Sect.\ref{conc} we give some concluding remarks.

\section{Variational bihamiltonian cohomology}\label{vbh}
This section is a technical preparation for proving the main result of the present paper.
\subsection{Infinite jet spaces and bihamiltonian structures}
We start by recalling the representation of bihamiltonian structures in terms of local functionals on infinite jet spaces of super manifolds, which is also called the $\qth$-formalism in the literature \cite{carlet2018deformations,getzler2002darboux,liu2011jacobi,liu2013bihamiltonian}. One can refer to \cite{liu2018lecture} for a detailed introduction.

Let $M$ be a smooth manifold of dimension $n$ and let $\hat M = \Pi T^*M$ be the super manifold of dimension $(n|n)$ obtained by reversing the parity of the fiber of $T^*M$. Denote by $(u^1,\dots,u^n;\qth_1,\dots,\qth_n)$ a system of local canonical coordinates on $T^*M$, then the manifold $\hat M$ can be described locally by the same coordinates with the parity of the fiber coordinates $\qth_\qa$ reversed, i.e., they satisfy the anti-commuting relations
\[
\qth_\qa\qth_\qb+\qth_\qb\qth_\qa=0.
\]
These fiber coordinates are hence called odd variables. We denote by $J^\infty(\hat M)$ the infinite jet space of $\hat M$ which can be described by local coordinates $(u^{\qa,s};\qth_\qa^s)$, where $\qa  =1,\dots,n$, $s\geq 0$ and $u^{\qa,0} = u^\qa$, $\qth_\qa^0 = \qth_\qa$. Let us denote by $\hm A$ the space of differential polynomials on $J^\infty(\hat M)$, locally it is the ring
\[
C^\infty(u)[[u^{\qa,s+1},\qth_\qa^s\mid \qa  =1,\dots,n; s\geq 0 ]].
\]
There exists a global vector field $\qp_x$ on $J^\infty(\hat M)$ defined by
\[
\qp_x = \sum_{s\geq 0}u^{\qa,s+1}\diff{}{u^{\qa,s}}+\qth_\qa^{s+1}\diff{}{\qth_\qa^s}.
\]
By using this vector field, we define the space $\hm F$ of local functionals as \[\hm F:=\hm A/\qp_x\hm A,\]and for $f\in\hm A$, we denote by $\int f$ the image of $f$ in $\hm F$. Furthermore, we define two gradations on $\hm A$ and $\hm F$ by setting 
\[
\deg_x u^{\qa,s} = s,\quad \deg_x \qth_\qa^s = s,\quad \deg_\qth u^{\qa,s} = 0,\quad \deg_\qth \qth_\qa^s = 1.
\]
We denote the homogeneous subspaces with respect to these gradations by
\[
\hm A^p = \{f\in\hm A\mid\deg_\qth f = p\},\quad \hm A_d = \{f\in\hm A\mid\deg_x f = d\},\quad \hm A^p_d = \hm A^p\cap\hm A_d.
\]
Since the vector field $\qp_x$ is homogeneous with respect to both $\deg_x$ and $\deg_\qth$, the quotient space $\hm F$ has natural gradations induced from those on $\hm A$. We use $\hm F^p$, $\hm F_d$ and $\hm F^p_d$ to denote the corresponding subspaces of homogeneous elements.

There is a graded Lie algebra structure on $\hm F$ given by the so-called Schouten-Nijenhuis bracket. Define the bilinear bracket $[-,-]:\hm F\times\hm F\to\hm F$ by
\begin{equation}\label{DN}
[P,Q] = \int\vard{P}{\qth_\qa}\vard{Q}{u^\qa}+(-1)^p\vard{P}{u^\qa}\vard{Q}{\qth_\qa},\quad P\in\hm F^p,\quad Q\in\hm F,
\end{equation}
here the variational derivative is defined for any $Q\in\hm F$ by
\[
\vard{Q}{u^\qa} = \sum_{s\geq 0}(-\qp_x)^s\diff{f}{u^{\qa,s}},\quad \vard{Q}{\qth_\qa} = \sum_{s\geq 0}(-\qp_x)^s\diff{f}{\qth_\qa^s},\quad Q = \int f.
\]
This bracket is antisymmetric and satisfies the graded Jacobi identity. More precisely, for $P\in\hm F^p$, $Q\in\hm F^q$ and $R\in\hm F^r$, we have
\begin{align}
&[P,Q] = (-1)^{pq}[Q,P],\\
\label{AI}
&(-1)^{pr}[[P,Q],R]+(-1)^{qp}[[Q,R],P]+(-1)^{rq}[[R,P],Q] = 0.
\end{align}

The Schouten-Nijenhuis bracket is preserved under Miura-type transformations. A Miura-type transofrmation is a change of coordinate on $J^\infty(M)$ that preserves the action of $\qp_x$, i.e., it is of the form
\begin{equation}
\label{AF}
u^{\qa,s}\mapsto w^{\qa,s} = \sum_{k\geq 0} \qp_x^sf^\qa_k(u),\quad f^\qa_k(u)\in\hm A^0_k,\quad s\geq 0,
\end{equation}
where the leading terms satisfy the condition
\[
\det\left(\diff{f^\qa_0(u)}{u^\qb}\right)\neq 0.
\]
Given a Miura-type transformation of the form \eqref{AF}, it is proved in \cite{liu2011jacobi} that the following transformation rules for odd variables preserve the Schouten-Nijenhuis bracket:
\[
\qs_\qa^s = \qp_x^s\sum_{k\geq 0}(-\qp_x)^k\left(\diff{u^{\qb}}{w^{\qa,k}}\qth_\qb\right),\quad s\geq 0,
\]
i.e., if we represent two local functionals $P, Q\in\hm F$ in terms of the new coordinates $w^{\qa,s}$ and $\qs_\qa^s$, then their Schouten-Nijenhuis bracket can be computed by using the formula \eqref{DN} with $(u^\qa;\qth_\qa)$ replaced by $(w^\qa;\qs_\qa)$.

We define a Hamlitonian structure to be a local functional $P\in\hm F^2$ such that $[P,P]$ = 0 and a bihamiltonian structure $(P_0,P_1)$ to be a pair of Hamiltonian structures such that $[P_0,P_1] = 0$. We also define a system of evolutionary PDEs to be a local functional $X\in\hm F^1$, and we call it is Hamiltonian if there exists a Hamiltonian structure $P$ and a local functional $H\in\hm F^0$ such that $X = -[P,H]$. Similarly we call $X$ is bihamiltonian if there exists a bihamiltonian structure $(P_0,P_1)$ and two local functionals $H_0,H_1\in\hm F^0$ such that $X = -[P_0,H_0] = -[P_1,H_1]$.

In the literature, a system of Hamiltonian evolutionary PDEs is usually represented as follows:
\[
\diff{u^\qa}{t} = X^\qa =  \mathcal P^{\qa\qb}\vard{H}{u^\qb},\quad X^\qa\in\hm A^0,\quad H\in\hm F^0,
\]
where $\mathcal P = (\mathcal P^{\qa\qb})$ is the Hamiltonian operator which has the form
\[
\mathcal P^{\qa\qb} = \sum_{s\geq 0}P^{\qa\qb}_s\qp_x^s,\quad P^{\qa\qb}_s\in\hm A^0.
\]
Then in the language of local functionals, the above system of evolutionary PDEs corresponds to the local functional \[
X = \int X^\qa\qth_\qa\in\hm F^1
\] 
and the Hamiltonian operator $\mathcal P$ corresponds to a local functional
\[
P = \frac{1}{2}\int \sum_{s\geq 0}P^{\qa\qb}_s\qth_\qa\qth_\qb^s\in\hm F^2
\]
which satisfies the condition $[P,P]=0$. These lcoal functionals are related by the formula
\[
X = -[P,H].
\]
Conversely, given local functionals $H\in \hm F^0$, $X\in\hm F^1$ and $P\in\hm F^2$ which satisfy the conditions $X = -[H,P]$ and $[P,P] = 0$, then we can define a Hamiltonian operator $\mathcal P = (\mathcal P^{\qa\qb})$ with
\begin{equation}
\label{AG}
\mathcal P^{\qa\qb} = \sum_{s\geq 0}\diff{}{\qth_\qb^s}\left(\vard{P}{\qth_\qa}\right)\qp_x^s,
\end{equation}
and a system of evolutionary PDEs
\[
\diff{u^\qa}{t} = \vard{X}{\qth_\qa},
\]
which is a Hamiltonian system in the sense that
\[
\diff{u^\qa}{t} = \mathcal P^{\qa\qb}\vard{H}{u^\qb}.
\]

In the remaining part of the subsection, we review the deformation theory of (bi)Hamiltonian structures. We start by recalling the concept of Hamiltonian structures of hydrodynamic type. For a Hamiltonian structure $P\in\hm F^2_1$, the corresponding Hamiltonian operator $\mathcal P$ defined in \eqref{AG} has the form
\[
\mathcal P^{\qa\qb} = g^{\qa\qb}\qp_x+\Qg^{\qa\qb}_\qg u^{\qg,1}.
\]
If $\det(g^{\qa\qb})\neq 0$, then we call $P$ is of hydrodynamic type. It is proved in \cite{dubrovin1996hamiltonian}
that in this case $(g_{\qa\qb}):=(g^{\qa\qb})^{-1}$ defines a flat metric on $M$ and $\Qg^\qa_{\qb\qg} =-g_{\qb\mu}\Qg^{\mu \qa}_\qg$ are the Christoffel symbols of the Levi-Civita connection of $(g_{\qa\qb})$.

Given a bihamiltonian structure $(P_0,P_1)$ of hydrodynamic type, then we have a pair of flat (contravariant) metrics $(g_0^{\qa\qb},g_1^{\qa\qb})$ which forms a flat pencil \cite{dubrovin1996geometry}. This bihamiltonian structure is called semisimple if the roots of the characteristic equation
\[
\det(g_1^{\qa\qb}-\ql g_0^{\qa\qb}) = 0
\]
are all distinct and non-constant. In this case, if we denote by $\ql^1,\dots,\ql^n$ the $n$ distinct solutions of the above equation, then it is proved in \cite{ferapontov2001compatible} that they can be used as local coordinates of $M$, and they are called the canonical coordinates of the semisimple bihamiltonian structure $(P_0,P_1)$. In terms of the canonical coordinates, the two metrics $g_0^{ij},g_1^{ij}$ are diagonal and have the form
\[
g_0^{ij}(\ql) = f^i(\ql)\qd^{ij},\quad g_1^{ij}(\ql) = \ql^if^i(\ql)\qd^{ij},
\]
here $f^i(\ql)$ are some non-zero functions.

An important problem in the theory of integrable hierarchy is to classify the Hamiltonian (bihamiltonian) structures with hydrodynamic limits, or equivalently, to study the deformation theory of hydrodynamic Hamiltonian (bihamiltonian) structures. The classification is completed in a series of papers \cite{carlet2018central,carlet2018deformations,degiovanni2005deformation,DLZ-1,dubrovin2001normal,getzler2002darboux,liu2013bihamiltonian} by computing the so called Hamiltonian (bihamiltonian) cohomology. The deformation problem can be described as follows. Given a Hamiltonian structure $P$ of hydrodynamic type, we want to classify all the possible deformations
\[
\tilde P = P+Q,\quad Q\in\hm F^2_{\geq 2}
\]
such that $\tilde P$ is a Hamiltonian structure. We regard two deformations to be equivalent if one can be obtained from the other by a Miura-type transformation. Similarly we study the deformation problem of a semisimple bihamiltonian structure $(P_0,P_1)$ of hydrodynamic type. The classification results of Hamiltonian (bihamiltonian) structures are summarized in the following theorem.
\begin{Th}[\cite{carlet2018deformations,degiovanni2005deformation,DLZ-1,getzler2002darboux}]
Let $P$ be a Hamiltonian structure of hydrodynamic type and $(P_0,P_1)$ be a semisimple bihamiltonian structure of hydrodynamic type. 
\begin{enumerate}
\item All the deformations of $P$ are trivial, i.e.,they are Miura-equivalent to the trivial deformation $\tilde P = P$.
\item Equivalent classes of deformations of $(P_0,P_1)$ are parmeterized by $n$ functions of a single variable. These functions are called the central invariants of a deformation.
\end{enumerate}
\end{Th}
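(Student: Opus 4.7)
The plan is to recast both assertions as cohomology computations in the Schouten--Nijenhuis differential graded Lie algebra $(\hm F, [-,-])$. A formal deformation of $P$ is an element $\tilde P = P + \sum_{k\geq 1} Q_k$ with $Q_k\in\hm F^2_{k+1}$ satisfying $[\tilde P,\tilde P]=0$, whose expansion in $\deg_x$ produces the Maurer--Cartan cascade $[P,Q_1]=0$, $2[P,Q_2]+[Q_1,Q_1]=0$, and so on. Miura-type transformations preserving the hydrodynamic leading term act by $\exp(\mathrm{ad}_X)$ with $X\in\hm F^1_{\geq 1}$, so equivalence classes of deformations of $P$ are controlled by the Hamiltonian cohomology $H^2_{\geq 2}(\hm F,[P,-])$. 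For the pair $(P_0,P_1)$ one obtains a bicomplex $(\hm F,d_0,d_1)$ with $d_i=[P_i,-]$, the anticommutativity $d_0d_1+d_1d_0=0$ being a consequence of $[P_0,P_1]=0$ together with the graded Jacobi identity \eqref{AI}; equivalence classes of bihamiltonian deformations are then governed by the bihamiltonian cohomology $\bh^2_{\geq 2}$.

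For Part (1), I would first reduce to the constant case $\mathcal P^{\qa\qb}=\qd^{\qa\qb}\qp_x$ by passing to flat coordinates of the metric $g^{\qa\qb}$ underlying $P$. A standard homotopy operator on the $\qth$-complex then trivializes $H^2_d(\hm F,[P,-])$ for every $d\geq 2$, so the leading obstruction $Q_1$ is automatically a coboundary $[P,X_1]$; an induction on $k$ then kills each $Q_k$ by applying $\exp(\mathrm{ad}_{X_k})$, where $X_k\in\hm F^1_k$ is a primitive of the obstruction cocycle $\tfrac12\sum_{i+j=k}[Q_i,Q_j]$ (which is closed by graded Jacobi and null-cohomologous by the same vanishing result). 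Convergence in the $\qe$-adic topology is immediate because $\deg_x$ strictly increases at each stage.

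For Part (2), I would pass to the canonical coordinates $\ql^1,\dots,\ql^n$ of the semisimple pencil, in which $g_0^{ij}=f^i(\ql)\qd^{ij}$ and $g_1^{ij}=\ql^i f^i(\ql)\qd^{ij}$ are simultaneously diagonal. A direct computation at the first admissible level, in bidegree $(\deg_\qth,\deg_x)=(2,3)$, shows that representatives of classes in $\bh^2_3$ are parametrized by $n$ arbitrary smooth functions $c_i(\ql^i)$ of one variable --- these are the central invariants. The principal obstacle, and the technical heart of the theorem, is to prove that all higher-order deformation obstructions vanish, i.e.\ that $\bh^2_d=0$ for $d\geq 4$ modulo the data already captured by the central invariants. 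My approach would be a delicate induction on $d$ that couples the bihamiltonian bicomplex to the individual Hamiltonian cohomologies of $d_0$ and $d_1$ via a long exact sequence, using the semisimplicity hypothesis (distinctness and non-constancy of the $\ql^i$) in an essential way to construct explicit primitives in canonical coordinates. The scalar case $n=1$ degenerates and will need separate treatment, since the bicomplex structure collapses and one must instead compute the single-variable cohomology directly.
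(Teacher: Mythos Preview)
The paper does not prove this theorem; it is stated with attribution to \cite{carlet2018deformations,degiovanni2005deformation,DLZ-1,getzler2002darboux} and used as background. There is therefore no proof in the paper to compare against. Your outline is a reasonable high-level sketch of the strategy actually carried out in that literature: Getzler's homotopy argument in flat coordinates for Part~(1), and the bihamiltonian-cohomology computation in canonical coordinates for Part~(2), with the long exact sequence linking $\bh$ to the single Hamiltonian cohomologies being the key technical device in \cite{liu2013bihamiltonian,carlet2018deformations}.

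Two small corrections. In Part~(1) you conflate the obstruction cocycle $\tfrac12\sum_{i+j=k}[Q_i,Q_j]$, which lives in $\hm F^3$ and governs \emph{existence} of extensions, with the trivialization step: once $Q_1,\dots,Q_{k-1}$ have been removed, the Maurer--Cartan equation at degree $k+2$ gives $[P,Q_k]=0$ directly, and it is the vanishing of $H^2_{k+1}(\hm F,d_P)$ that produces the primitive $X_k\in\hm F^1_k$ with $Q_k=[P,X_k]$. In Part~(2), the scalar case $n=1$ is not degenerate in the sense you describe: the bicomplex $(\hm F,d_0,d_1)$ is perfectly well-defined for $n=1$ (semisimplicity just means the single eigenvalue is non-constant), and the computation in \cite{liu2013bihamiltonian} treats it first not because the structure collapses but because the combinatorics are simpler.
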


Let us give a more detailed description on the central invairants of any deformation $(\tilde P_0,\tilde P_1)$ of a semisimple bihamiltonian structure $(P_0,P_1)$ of hydrodynamic type. In terms of the canonical coordinates $(\ql^1,\dots,\ql^n)$ of $(P_0,P_1)$, the Hamiltonian operators $\tilde{\mathcal P}_0$ and $\tilde{\mathcal P}_1$ of $\tilde P_0$ and $\tilde P_1$ can be represented in the form
\[
\tilde{\mathcal P}_a^{ij} = \sum_{k\geq 1}\sum_{s=0}^k \tilde P^{ij}_{a;k,s}\qp_x^s,\quad  \tilde P^{ij}_{a;k,s}\in\hm A^0_{k-s},\quad a = 0,1,
\]
in particular we know that
\[
\tilde P^{ij}_{0;1,1} = f^i(\ql)\qd^{ij},\quad \tilde P^{ij}_{1;1,1} = \ql^if^i(\ql)\qd^{ij}.
\]
Then the central invariants can be explicitly computed by the following formulae:
\begin{equation}
\label{BA}
c_i(\ql^i) = \frac{1}{3(f^i)^2}\left(\tilde P^{ii}_{1;3,3}-\ql^i\tilde P^{ii}_{0;3,3}+\sum_{j\neq i}\frac{\left(\tilde P^{ji}_{1;2,2}-\ql^i\tilde P^{ji}_{0;2,2}\right)^2}{f^j(\ql^j-\ql^i)}\right),\quad i = 1,\dots,n.
\end{equation}
Note that $c_i(\ql^i)$ is a function depending only on the variable $\ql^i$. Conversely, given $n$ functions $c_1(\ql^1),\dots,c_n(\ql^n)$ of a single variable, we can construct a unique equivalence class of  deformations of $(P_0,P_1)$ such that its central invariants are exactly given by $c_1(\ql^1),\dots,c_n(\ql^n)$. Such a deformation, up to Miura-type transformations, can be represented by
\[
\tilde P_0 = P_0,\quad \tilde P_1 = P_1+[P_0,Q]+R,\quad R\in\hm F^2_{\geq 4},
\]
where $Q\in\hm F^1_2$ is given by 
\begin{equation}
\label{AL}
Q = \left[P_1,\int\sum_i c_i(\ql^i)\ql^{i,1}\log\ql^{i,1}\right]-\left[P_0,\int\sum_i \ql^ic_i(\ql^i)\ql^{i,1}\log\ql^{i,1}\right].
\end{equation}

The following theorem says that any deformation of a semisimple bihamiltonian structure induces a unique deformation of the corresponding bihamiltonian hierarchy of hydrodynamic type.
\begin{Th}[\cite{DLZ-1}]
\label{BX}Let $(P_0,P_1)$ be a semisimple bihamiltonian structure of hydrodynamic type and let $X\in\hm F^1_1$ satisfy the condition $[X,P_0] = [X,P_1]=0$. If $(\tilde P_0,\tilde P_1)$ is any deformation of $(P_0,P_1)$, then there exists a unique deformation $\tilde X$ of $X$ such that $[\tilde X,\tilde P_0] = [\tilde X,\tilde P_1]=0$.
\end{Th}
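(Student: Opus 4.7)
The plan is to construct the deformation $\tilde X$ order by order in the grading $\deg_x$, reducing each step to a pair of cohomological equations whose solvability and rigidity are controlled by the variational bihamiltonian cohomology of $(P_0,P_1)$ reviewed in this section. Expand
\[
\tilde P_i = P_i+\sum_{k\geq 1} P_{i,k},\quad P_{i,k}\in\hm F^2_{k+1},\quad i=0,1,
\]
and look for $\tilde X = X+\sum_{k\geq 1}X_k$ with $X_k\in\hm F^1_{k+1}$. Expanding $\fk{\tilde X}{\tilde P_i}=0$ and isolating the component of $\deg_x$ equal to $k+2$ yields, for each $k\geq 1$, a pair of equations
\[
\fk{X_k}{P_i} = Z_{i,k},\quad Z_{i,k} := -\fk{X}{P_{i,k}}-\sum_{j=1}^{k-1}\fk{X_j}{P_{i,k-j}},\quad i=0,1,
\]
whose right-hand sides depend only on the previously chosen $X_1,\dots,X_{k-1}$ and on the given deformation.

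For the induction step, I would assume that $X_1,\dots,X_{k-1}$ have been constructed so that $\fk{\tilde X^{(k-1)}}{\tilde P_i}\in\hm F^2_{\geq k+3}$ for $i=0,1$, where $\tilde X^{(k-1)}:=X+X_1+\cdots+X_{k-1}$. Applying the graded Jacobi identity \eqref{AI} to $\tilde X^{(k-1)}$, $\tilde P_0$, $\tilde P_1$, together with $\fk{\tilde P_i}{\tilde P_j}=0$ and $\fk{P_i}{P_j}=0$, one verifies that the pair $(Z_{0,k},Z_{1,k})$ is a cocycle in the bihamiltonian complex attached to $(P_0,P_1)$: writing $d_i:=\fk{\cdot}{P_i}$, one has $d_0 Z_{0,k}=d_1 Z_{1,k}=0$ and $d_0 Z_{1,k}+d_1 Z_{0,k}=0$. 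The obstruction to solving $d_i X_k=Z_{i,k}$ simultaneously then defines a class in a variational bihamiltonian cohomology group of $(P_0,P_1)$ in bidegree $(2,k+2)$, which vanishes for semisimple bihamiltonian structures of hydrodynamic type by the results recalled in this section. Hence a suitable $X_k$ can be chosen.

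Uniqueness is handled by a parallel but shorter argument. Suppose $\tilde X$ and $\tilde X'$ both satisfy the conclusion and set $Y=\tilde X-\tilde X'$. Assuming $Y\neq 0$, let $m\geq 1$ be minimal with $Y_m\in\hm F^1_{m+1}$ nonzero. Subtracting $\fk{\tilde X'}{\tilde P_i}=0$ from $\fk{\tilde X}{\tilde P_i}=0$ and reading off the lowest-order component gives $d_0 Y_m=d_1 Y_m=0$, producing a class in the first variational bihamiltonian cohomology of $(P_0,P_1)$ at $\deg_x=m+1\geq 2$. This group also vanishes under the semisimplicity and hydrodynamic type hypotheses, so $Y_m=0$, contradicting the minimality of $m$. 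Hence $Y=0$.

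The main obstacle is the invocation of the two vanishing results for variational bihamiltonian cohomology in the relevant bidegrees, and this is where semisimplicity enters in an essential way: in the canonical coordinates $(\ql^1,\dots,\ql^n)$ both metrics become simultaneously diagonal, which reduces the cochain-level computation to the vanishing statements established in the earlier papers of this series.
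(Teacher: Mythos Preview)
The paper does not actually prove this theorem: it is quoted from \cite{DLZ-1} without proof, so there is no argument in the paper to compare yours against. Your overall strategy---construct $X_k$ order by order, package the right-hand sides as a bihamiltonian cocycle, and kill the obstruction and the ambiguity with cohomology vanishing---is precisely the standard one used in \cite{DLZ-1}, so in that sense your outline is on the right track.

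There is, however, a genuine slip in what you invoke. Throughout you speak of \emph{variational} bihamiltonian cohomology, i.e.\ the groups $VBH^p_d(\derx,P_0,P_1)$ of Theorem~\ref{AK}, which are defined on the space of derivations $\derx$. But every object in your argument---$X_k$, $Z_{i,k}$, and $Y_m$---lives in $\hm F$, not in $\derx$; the relevant cohomology is the ordinary bihamiltonian cohomology $BH^p_d(\hm F,P_0,P_1)$ on local functionals. These are different theories (related by the map $X\mapsto D_X$, which is not an isomorphism), and Theorem~\ref{AK} does not give you the vanishing you need. For existence you need a vanishing in bidegree $(2,\geq 2)$ on $\hm F$; Theorem~\ref{AK} says nothing about $p=2$. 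For uniqueness you need $\hm F^1_{\geq 2}\cap\ker d_0\cap\ker d_1=0$; Theorem~\ref{AK} only addresses $VBH^0_{\geq 2}$ and $VBH^1_{\geq 3}$ on derivations, and in fact $VBH^1_3\neq 0$, so the phrase ``this group also vanishes'' is not supported by the results recalled here. The correct inputs are the $\hm F$-level vanishing theorems established in \cite{DLZ-1} (and refined in \cite{liu2013bihamiltonian,carlet2018deformations}); once you cite those instead, your argument goes through.
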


\subsection{Variational bihamiltonian cohomology}

In the previous subsection, we mainly consider the space of local functionals $\hm F$. In this subsection, we focus on the space of derivations $\derx$ and it is defined by
\[
\derx = \bigoplus_{p\geq-1}\derx^p,
\]
where $\derx^p$ is the space of derivations with $\deg_\qth = p$ that commute with $\qp_x$. In another word, an element $X\in\derx^p$ is a linear map $\hm A\to\hm A$ such that
\[
X(\hm A^q)\subseteq \hm A^{p+q},\quad [X,\qp_x] = 0,
\] 
and for any differential polynomials $f\in\hm A^q$ and $g\in\hm A$ we have
\[
X(fg) = X(f)g+(-1)^{pq}fX(g).
\]
The differential degree $\deg_x$ of $\hm A$  induces another gradation on $\derx$, in terms of which a homogeneous element $X$ of degree $d$ is specified by the condition $X(\hm A_t)\subseteq\hm A_{d+t}$. We denote by $\derx_d$ the space of derivations $X$ with differential degree $d$ and by $\derx^p_d$ the space $\derx^p\cap\derx_d$.
\begin{Rem}
In what follows we will call an element of $\derx$ a vector field or a flow. In particular, we call an element of $\derx^{2n-1}$ for $n\geq 0$ an odd flow.
\end{Rem}

Given a local functional $X\in\hm F^p$, we can define a derivation $D_X\in\derx^{p-1}$ as follows:
\begin{equation}
\label{AH}
D_X = \sum_{s\geq 0}\qp_x^s\left(\vard{X}{\qth_\qa}\right)\diff{}{u^{\qa,s}}+(-1)^p\qp_x^s\left(\vard{X}{u^{\qa}}\right)\diff{}{\qth_\qa^s}.
\end{equation}
Such a derivation is actually the Hamiltonian vector field of the local functional $X$ with respect to the canonical symplectic structure on $\hat M$. The definition \eqref{AH} yields a map 
\begin{equation}\label{DO}\hm F\to\derx:\quad X\mapsto D_X\end{equation}
and we denote by $\derd$ the image of this map. We call elements in $\derd$ as \emph{ derivations of $D$-type}.

It is proved in \cite{liu2018lecture} that the map \eqref{DO} is a homomorphism of graded Lie algebra, where the graded Lie algebra structure on $\derx$ is given by the natural graded commutator of derivations. More precisely, for $P\in\hm F^p$ and $Q\in\hm F$, we have
\begin{equation}
\label{AJ}
D_{[P,Q]} = (-1)^{p-1}[D_P,D_Q].
\end{equation}
This relation allows us to make super extensions for any evolutionary Hamiltonian PDEs. Given the following system of evolutionary Hamiltonian PDEs
\[
\diff{u^\qa}{t} = X^\qa = \mathcal P^{\qa\qb}\vard{H}{u^\qb},
\]
we have seen in the previous subsection that it corresponds to local functionals $H\in \hm F^0$, $X\in\hm F^1$ and $P\in\hm F^2$ such that $X = -[P,H]$. Then it follows from the graded Jacobi identity \eqref{AI} and the condition $[P,P]= 0$ that $[X,P] = 0$, hence $[D_X,D_P] = 0$. Therefore  the above Hamiltonian system admits the following compatible super extension
\begin{align*}
\diff{u^\qa}{t} &= D_X u^\qa,\quad \diff{\qth_\qa}{t} = D_X\qth_\qa,\\
\diff{u^\qa}{\qt} &= D_P u^\qa,\quad \diff{\qth_\qa}{\qt} = D_P\qth_\qa.
\end{align*}
Conversely if we have two derivations $X\in\derx^0$ and $P\in\derx^1$ such that $[X,P] = 0$ and $[P,P] = 0$, then if both $X$ and $P$ are $D$-type, we conclude that the following system of evolutionary PDEs
\[
\diff{u^\qa}{t} = X(u^\qa)
\]
is a Hamiltonian system.

\begin{Rem}
When $P\in\hm F^2$ is a Hamiltonian structure with hydrodynamic leading term, then for $X\in\hm F^1$, the existence of $H\in\hm F^0$ such that $X = -[H,P]$ is equivalent to the condition that $[X,P] = 0$.
\end{Rem}

Given a Hamiltonian structure $P\in\hm F^2$, it follows from the equation \eqref{AJ} that $[D_P,D_P] = 0$. Let us denote by $d_P$ the adjoint operator $[D_P,-]:\derx\to \derx$, then we see that $d_P^2 = 0$ and hence we have a complex $(\derx,d_P)$. Similarly, for a bihamiltonian structure $(P_0,P_1)$, we have a bi-complex $(\derx,d_{P_0},d_{P_1})$. We can define the cohomology groups for these complexes as follows:
\begin{align*}
H^p_d(\derx, P)&:=\frac{\derx^p_d\cap\ker d_P}{\derx^p_d\cap\ima d_P},\\[3pt]
VBH^p_d(\derx, P_0,P_1)&:=\frac{\derx^p_d\cap\ker d_{P_0}\cap\ker d_{P_1}}{\derx^p_d\cap\ima d_{P_0}d_{P_1}}.
\end{align*}
These cohomology groups are called variational Hamiltonian (bihamiltonian) cohomology groups and are introduced and studied in the paper \cite{liu2023variational}, also see the relevant sections in \cite{liu2022variational}. They are originally introduced to study Virasoro symmetries of bihamiltonian integrable hierarchies. In this paper we will use them to study properties of linear reciprocal transformations, to this end we need the following theorem. 
\begin{Th}[\cite{liu2023variational}]
\label{AK}
Let $P$ be a Hamiltonian structure of hydrodynamic type and $(P_0,P_1)$ be a semisimple bihamiltonian structure. Then we have
\begin{enumerate}
\item $H^p_d(\derx,P) = 0$ for any $p\geq -1$ and $d>0$,
\item $VBH^0_{\geq 2}(\derx, P_0,P_1) = 0$ and $VBH^1_{\geq 4}(\derx, P_0,P_1)=0$,
\item  $VBH^1_{3}(\derx, P_0,P_1)\cong\bigoplus_{i=1}^nC^\infty(\mathbb R)$.
\end{enumerate}
\end{Th}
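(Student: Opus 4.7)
The plan is to reduce the three statements to analogous results about the classical Hamiltonian and bihamiltonian cohomologies on the space of local functionals $\hm F$, exploiting the graded Lie algebra homomorphism $D: \hm F \to \derx$ from \eqref{DO}. Since any derivation $X \in \derx^p_d$ commuting with $\qp_x$ is completely determined by the tuple $(X(u^\qa), X(\qth_\qa))$ of elements of $\hm A$, I can identify $\derx^p_d$ concretely with a direct sum of homogeneous components of $\hm A$ indexed by $\qa$. Comparing the differential $d_P$ on $\derx$ with the differential $\mathrm{ad}_P = [P, -]$ on $\hm F$ via $D$ yields a long exact sequence whose connecting terms are governed by the residue map $\hm A \twoheadrightarrow \hm F$ with kernel $\qp_x \hm A$, which is well understood at the level of $\hm A$-cohomology.

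For part (1), I would first pass to flat Darboux coordinates for the hydrodynamic Hamiltonian structure $P$, in which $\mathcal P^{\qa\qb} = \qe^{\qa\qb}\qp_x$ for a constant nondegenerate symmetric form. In these coordinates $d_P$ acts essentially as $\qp_x$ contracted with the $\qth$-indices, so an explicit homotopy operator lowering $\deg_x$ trivializes the complex in positive differential degree. This recovers the classical vanishing $H^p_d(\hm F, P) = 0$ for $d > 0$ proved by Getzler and Degiovanni--Magri--Sciacca; combining it with the long exact sequence above and the elementary $\qp_x$-cohomology of $\hm A$ extends the vanishing from the $\hm F$-level to the $\derx$-level.

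For part (2), I would argue analogously, starting from the known classification of the bihamiltonian cohomology $BH^p_d(\hm F, P_0, P_1)$ of a semisimple pencil established in \cite{DLZ-1,carlet2018deformations,liu2013bihamiltonian}: the relevant groups $BH^1_{\geq 4}$ and $BH^{-1}_{\geq 2}$ (matching our $VBH^0_{\geq 2}$ under the shift induced by $D$) vanish. Working in canonical coordinates $\ql^i$ diagonalizes the pencil and makes the bi-complex $(\derx, d_{P_0}, d_{P_1})$ tractable; the transfer mechanism $\hm F \to \derx$ then propagates the vanishing. The main obstacle here will be controlling the contribution of derivations outside $\derd$, which requires a careful spectral-sequence argument filtered by $\deg_\qth$ (or equivalently a double-complex chase combining $d_{P_0}$ and $d_{P_1}$ with the auxiliary $\qp_x$-differential arising from the kernel of the residue map).

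Part (3) calls for an explicit construction. A class in $VBH^1_3(\derx, P_0, P_1)$ is represented by an odd flow $X \in \derx^1_3$ simultaneously closed under $d_{P_0}$ and $d_{P_1}$. In canonical coordinates, a direct expansion of the two closedness conditions rigidifies the coefficients of $X$ into a form depending only on $n$ smooth functions $c_i(\ql^i)$ of a single variable, mirroring the central-invariants formula \eqref{BA}. Realizing a prescribed $n$-tuple $(c_1, \dots, c_n)$ amounts to writing down a representative modeled on \eqref{AL}, and showing that a representative with all $c_i \equiv 0$ is a coboundary in $\ima(d_{P_0} d_{P_1})$ reduces, by induction on $\deg_x$, to the vanishing results of part (2). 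The hard part, as in all central-invariant calculations, will be executing the canonical-coordinate bookkeeping cleanly so that no spurious obstructions appear in the inductive step.
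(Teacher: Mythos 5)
You should first note that this paper does not prove Theorem \ref{AK} at all: it is imported by citation from the companion paper \cite{liu2023variational} (see also \cite{liu2022variational}), so there is no in-paper argument to compare yours against. On its own terms, your outline correctly identifies the two ingredients any proof must combine --- the known computations of the classical Hamiltonian and bihamiltonian cohomologies of $\hm F$, and a mechanism transferring them to the larger space $\derx$ --- and your description of part (3) via representatives of the form \eqref{AL}--\eqref{AM} is consistent with how $VBH^1_3$ is described in Sect.\,\ref{vbh}.

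There are, however, two concrete problems. First, your degree bookkeeping is wrong: by \eqref{AH} the map \eqref{DO} sends $\hm F^p$ to $\derx^{p-1}$, so $VBH^0_{\geq 2}(\derx)$ and $VBH^1_{\geq 4}(\derx)$ must be matched against the classical groups $BH^1_{\geq 2}$ and $BH^2_{\geq 4}$ of $\hm F$, not against $BH^{-1}_{\geq 2}$ and $BH^1_{\geq 4}$ as you write; with your indexing the vanishing theorems you invoke are not the relevant ones (it is $BH^2_{\geq 4}=0$ and $BH^2_3\cong\bigoplus_i C^\infty(\mathbb R)$ that sit behind parts (2) and (3)). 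Second, and more seriously, the entire difficulty of the theorem is concentrated in the step you defer to ``a long exact sequence'' and ``a careful spectral-sequence argument.'' The map $D$ is not surjective onto $\derx$, its cokernel is not simply controlled by the $\qp_x$-cohomology of $\hm A$, and $d_P=[D_P,-]$ acting on a derivation $X$ determined by the tuple $(X(u^\qa),X(\qth_\qa))$ couples the $u$-components to $x$-derivatives of the $\qth$-components, so it does not act componentwise as $D_P$ on densities even in flat Darboux coordinates. Constructing the comparison complex that controls $\derx/\derd$ and proving the required acyclicity there is precisely the new technical content of \cite{liu2023variational}; as written, your proposal assumes this transfer rather than establishing it, so it does not yet constitute a proof.
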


Let us describe the cohomology group $VBH^1_{3}(\derx, P_0,P_1)$ in a more detailed way. Given a derivation $X\in \derx^1_3\cap\ker d_{P_0}\cap\ker d_{P_1}$, there exist $n$ functions $c_i(\ql^i)$ of a single variable and a unique derivation $Y\in\derx^{-1}_1$ such that
\begin{equation}
\label{AM}
X = D_{[P_0,Q]}+d_{P_0}d_{P_1}Y,\end{equation}
where $Q\in\hm F^1_2$ is given by the formula \eqref{AL}. And conversely, for any $n$ functions $c_i(\ql^i)$ of a single variable, the derivation $D_{[P_0,Q]}$ defines a cohomology class in the group $VBH^1_{3}(\derx, P_0,P_1)$.

\section{Proofs of the main result}\label{pro}
In this section we prove Theorem \ref{AE}. To explain the basic ideas of the proof, we first consider the linear reciprocal transformation of the KdV equation
\[
\diff{u}{t} = uu_x+\frac{\qe^2}{12}u_{xxx}.
\]
It has a bihamiltonian structure with the Hamiltonian operators
\begin{equation}
\label{CD}
\mathcal P_0 = \qp_x,\quad \mathcal P_1 = u\qp_x+\frac12u_x+\frac{\qe^2}{8}\qp_x^3.
\end{equation}
Note that in this example, $u$ is already the canonical coordinate of the bihamiltonian structure and the central invariant is $\frac{1}{24}$.

According to the discussion given in Sect.\,\ref{vbh}, by introducing the odd variable $\qth$, the above bihamiltonian system can be represented by the local functionals
\[
X = \int \left(uu_x+\frac{\qe^2}{12}u_{xxx}\right)\qth,\quad P_0 = \frac12\int\qth\qth^1,\quad P_1 = \frac12\int u\qth\qth^1+\frac{\qe^2}{8}\qth\qth^3.
\]
Therefore the KdV equation admits the following compatible super extension
\begin{align}
\label{AN}
\diff{u}{t} &= uu_x+\frac{\qe^2}{12}u_{xxx},& &\diff{\qth}{t}= u\qth^1+\frac{\qe^2}{12}\qth^3,\\
\label{AO}
\diff{u}{\qt_0} &= \qth^1,& &\diff{\qth}{\qt_0} = 0,\\
\label{AP}
\diff{u}{\qt_1}&=u\qth^1+\frac12u_x\qth+\frac{\qe^2}{8}\qth^3,& &\diff{\qth}{\qt_0} = \frac12\qth\qth^1.
\end{align}
We perform the linear reciprocal transformation by exchanging the spatial variable $x$ and the time variable $t$. In another word, we represent $u_x$ and $\qth^1$ by derivatives of $u$ and $\qth$ with respect to $t$. A straightforward computation shows that
\begin{align*}
\diff{u}{x}&=\frac1u\diff{u}{t}+\qe^2\left(-\frac{1}{12u^4}\frac{\qp^3u}{\qp t^3}+\frac{3}{4u^5}\diff{u}{t}\frac{\qp^2u}{\qp t^2}-\frac{1}{u^6}\left(\diff{u}{t}\right)^3\right)+O(\qe^4),\\
\diff{\qth}{x}&=\frac1u\diff{\qth}{t}+\qe^2\left(-\frac{1}{12u^4}\frac{\qp^3\qth}{\qp t^3}+\frac{1}{2u^5}\diff{u}{t}\frac{\qp^2\qth}{\qp t^2}+\frac{1}{4u^5}\diff{\qth}{t}\frac{\qp^2u}{\qp t^2}-\frac{1}{u^6}\left(\diff{u}{t}\right)^2\diff{\qth}{t}\right)+O(\qe^4).
\end{align*}
Let us adopt the usual convention and still use $x$ to denote the spatial variable and $\tilde t$ to denote the time variable, then the linear reciprocal transformation of the equation \eqref{AN} reads
\begin{align}
\label{AQ}
\diff{u}{\tilde t}&=\frac{u_x}{u}+\qe^2\left(-\frac{u_{xxx}}{12u^4}+\frac{3u_xu_{xx}}{4u^5}-\frac{u_x^3}{u^6}\right)+O(\qe^4),\\
\label{AR}
\diff{\qth}{\tilde t}&=\frac{\qth^1}{u}+\qe^2\left(-\frac{\qth^3}{12u^4}+\frac{u_x\qth^2}{2u^5}+\frac{u_{xx}\qth^1}{4u^5}-\frac{u_x^2\qth^1}{u^6}\right)+O(\qe^4).
\end{align}
Now we can apply the reciprocal transformation to the odd flows \eqref{AO} and \eqref{AP} by using the substitution
\[
\qp_x^s u\mapsto \frac{\qp^s u}{\qp \tilde t^s},\quad \qth^s\mapsto \frac{\qp^s \qth}{\qp \tilde t^s},
\]
where $\diff{}{\tilde t}$ is the flow given by the equations \eqref{AQ} and \eqref{AR}, and we obtain
\begin{align}
\label{AS}
\diff{u}{\tilde\qt_0} =&\, \frac{\qth^1}{u}+\qe^2\left(-\frac{\qth^3}{12u^4}+\frac{u_x\qth^2}{2u^5}+\frac{u_{xx}\qth^1}{4u^5}-\frac{u_x^2\qth^1}{u^6}\right)+O(\qe^4),\\
\label{AT}
\diff{\qth}{\tilde\qt_0}=&\,0,\\
\label{AU}
\diff{u}{\tilde \qt_1}=&\,\qth^1+\frac{u_x\qth}{2u}+\qe^2\left(\frac{\qth^3}{24u^3}-\frac{u_x\qth^2}{4u^4}-\frac{u_{xx}\qth^1}{8u^4}+\frac{u_x^2\qth^1}{2u^5}\right.\\
\notag
&\left.-\frac{u_{xxx}\qth}{24u^4}+\frac{3u_xu_{xx}\qth}{8u^5}-\frac{u_x^3\qth}{2u^6}\right)+O(\qe^4),\\
\label{AV}
\diff{\qth}{\tilde\qt_1}=&\,\frac{\qth\qth^1}{2u}+\qe^2\left(-\frac{\qth\qth^3}{24u^4}+\frac{u_x\qth\qth^2}{4u^5}+\frac{u_{xx}\qth\qth^1}{8u^5}-\frac{u_x^2\qth\qth^1}{2u^6}\right)+O(\qe^4).
\end{align}
Thus by performing the linear reciprocal transformation to the system \eqref{AN}--\eqref{AP}, we arrive at another system of evolutionary PDEs \eqref{AQ}--\eqref{AV}. Since the linear reciprocal transformation preserves the commutativity of the flows, the system of evolutionary  PDEs \eqref{AQ}--\eqref{AV} is compatible. Therefore if these flows are of $D$-type, then from the discussion given in Sect.\,\ref{vbh}, it follows that the equation \eqref{AQ} is bihamiltonian.

However, it is easy to see that the flows \eqref{AQ}--\eqref{AV} are NOT of $D$-type. To solve this problem, we introduce a new odd variable $\qs$ by
\[
\qs = u\qth+\qe^2\left(\frac{\qth^2}{4u^2}-\frac{u_x\qth^1}{4u^3}\right)+O(\qe^4),
\]
or equivalently we have
\[
\qth = \frac{\qs}{u}+\qe^2\left(-\frac{\qs^2}{4u^4}+\frac{3u_x\qs^1}{4u^5}+\frac{u_{xx}\qs}{4u^5}-\frac{3u_x^2\qs}{4u^6}\right)+O(\qe^4).
\]
We can rewrite the system \eqref{AQ}--\eqref{AV} in terms of the variables $u$ and $\qs$ as
\begin{align}
\label{AW}
\diff{u}{\tilde t}=&\,\frac{u_x}{u}+\qe^2\left(-\frac{u_{xxx}}{12u^4}+\frac{3u_xu_{xx}}{4u^5}-\frac{u_x^3}{u^6}\right)+O(\qe^4),\\
\label{AX}
\diff{\qs}{\tilde t}=&\,\frac{\qs^1}{u}+\qe^2\left(-\frac{\qs^3}{12u^3}+\frac{u_x\qs^2}{4u^5}+\frac{u_{xx}\qs^1}{4u^5}-\frac{u_x^2\qs^1}{2u^6}\right.\\
\notag
&\left.+\frac{u_xu_{xx}\qs}{4u^6}-\frac{u_x^3\qs}{2u^7}\right)+O(\qe^4),\\
\label{AY}
\diff{u}{\tilde \qt_0}=&\,\frac{\qs^1}{u^2}-\frac{u_x\qs}{u^3}+\qe^2\left(-\frac{\qs^3}{3u^5}+\frac{5u_x\qs^2}{2u^6}+\frac{3u_{xx}\qs^1}{2u^6}-\frac{7u_x^2\qs^1}{u^7}\right.\\
\notag
&\left.+\frac{u_{xxx}\qs}{3u^6}-\frac{4u_xu_{xx}\qs}{u^7}+\frac{7u_x^3\qs}{u^8}\right)+O(\qe^4),\\
\diff{\qs}{\tilde \qt_0} =&\,-\frac{\qs\qs^1}{u^3}+\qe^2\left(\frac{\qs\qs^3}{3u^6}-\frac{\qs^1\qs^2}{2u^6}-\frac{2u_x\qs\qs^2}{u^7}-\frac{2u_{xx}\qs\qs^1}{u^7}+\frac{7u_x^2\qs\qs^1}{u^8}\right)+O(\qe^4),\\
\diff{u}{\tilde \qt_1}=&\,\frac{\qs^1}{u}-\frac{u_x\qs}{2u^2}+\qe^2\left(-\frac{5\qs^3}{24u^4}+\frac{5u_x\qs^2}{4u^5}+\frac{3u_{xx}\qs^1}{4u^5}-\frac{23u_x^2\qs^1}{8u^6}\right.\\
\notag
&\left.+\frac{u_{xxx}\qs}{6u^5}-\frac{13u_xu_{xx}\qs}{8u^6}+\frac{19u_x^3\qs}{8u^7}\right)+O(\qe^4),\\
\label{AZ}
\diff{\qs}{\tilde \qt_1} =&\,-\frac{\qs\qs^1}{2u^2}+\qe^2\left(\frac{\qs\qs^3}{6u^5}-\frac{\qs^1\qs^2}{4u^5}-\frac{7u_x\qs\qs^2}{8u^6}-\frac{7u_{xx}\qs\qs^1}{u8^6}+\frac{21u_x^2\qs\qs^1}{8u^7}\right)+O(\qe^4).
\end{align}
Then we see that the flows $\diff{}{\tilde t}$, $\diff{}{\tilde \qt_0}$ and $\diff{}{\tilde \qt_1}$ defined by \eqref{AW}--\eqref{AZ} are now of $D$-type. More precisely, we have
\[
\diff{}{\tilde t} = D_{\tilde X},\quad \diff{}{\tilde \qt_0} = D_{\tilde P_0},\quad \diff{}{\tilde \qt_1} = D_{\tilde P_1},
\]
where the local functionals are given by
\begin{align*}
\tilde X&=\int \frac{u_x\qs}{u}+\qe^2\left(-\frac{u_{xxx}\qs}{12u^4}+\frac{3u_xu_{xx}\qs}{4u^5}-\frac{u_x^3\qs}{u^6}\right)+O(\qe^4),\\
\tilde P_0&=\frac12\int\frac{\qs\qs^1}{u^2}+\qe^2\left(-\frac{\qs\qs^3}{3u^5}+\frac{5u_x\qs\qs^2}{2u^6}+\frac{3u_{xx}\qs\qs^1}{2u^6}-\frac{7u_x^2\qs\qs^1}{u^7}\right)+O(\qe^4),\\
\tilde P_1&=\frac12\int\frac{\qs\qs^1}{u}+\qe^2\left(-\frac{5\qs\qs^3}{24u^4}+\frac{5u_x\qs\qs^2}{4u^5}+\frac{3u_{xx}\qs\qs^1}{4u^5}-\frac{23u_x^2\qs\qs^1}{8u^6}\right)+O(\qe^4).
\end{align*}
Hence the equation \eqref{AQ} is bihamiltonian, with the Hamiltonian operators given by
\begin{align*}
\tilde{\mathcal P}_0=&\,\frac{\qp_x}{u^2}-\frac{u_x}{u^3}+\qe^2\left(-\frac{\qp_x^3}{3u^5}+\frac{5u_x\qp_x^2}{2u^6}+\frac{3u_{xx}\qp_x}{2u^6}-\frac{7u_x^2\qp_x}{u^7}\right.\\
\notag
&\left.+\frac{u_{xxx}}{3u^6}-\frac{4u_xu_{xx}}{u^7}+\frac{7u_x^3}{u^8}\right)+O(\qe^4),\\
\tilde{\mathcal P}_1=&\,\frac{\qp_x}{u}-\frac{u_x}{2u^2}+\qe^2\left(-\frac{5\qp_x^3}{24u^4}+\frac{5u_x\qp_x^2}{4u^5}+\frac{3u_{xx}\qp_x}{4u^5}-\frac{23u_x^2\qp_x}{8u^6}\right.\\
\notag
&\left.+\frac{u_{xxx}}{6u^5}-\frac{13u_xu_{xx}}{8u^6}+\frac{19u_x^3}{8u^7}\right)+O(\qe^4).
\end{align*}
By using the definition \eqref{BA}, it is easy to see that the central invariant of the above bihamiltonian structure is again $\frac{1}{24}$.

We see that the key point of finding the bihamiltonian structure of the equation \eqref{AQ} is to find the new super variable $\qs$ such that the flows \eqref{AQ}--\eqref{AS} are of $D$-type. This is the key step of proving Theorem \ref{AE}.

\subsection{Proofs for the hydrodynamic cases}
We first prove Theorem \ref{AE} for the hydrodynamic bihamiltonian system. In this case the theorem has already been proved in \cite{xue2006bihamiltonian}. However, in order to prepare for the proof of the general cases, we need to re-prove this result using the $\qth$-formalism for Hamiltonian structures.

Let $(P_0,P_1)$ be a semisimple bihamiltonian structure, $(u^1,\dots,u^n)$ be its canonical coordinates and  $(\qth_1,\dots,\qth_n)$ be the corresponding odd variables. In terms of these coordinates, we have \cite{ferapontov2001compatible}
\begin{equation}
\label{norm-p}
P_0 = \frac 12\int \sum_{i,j}\left( \qd^{ij}f^i(u)\qth_i\qth_i^1+ A^{ij}\qth_i\qth_j\right),\ P_1 = \frac 12\int \sum_{i,j} \left(\qd^{ij}u^if^i(u)\qth_i\qth_i^1+ B^{ij}\qth_i\qth_j\right),
\end{equation}
where
\[
A^{ij} = \frac 12\left(\frac{f^i}{f^j}\diff{f^j}{u^i}u^{j,1}-\frac{f^j}{f^i}\diff{f^i}{u^j}u^{i,1}\right),\ B^{ij} = \frac 12\left(\frac{u^if^i}{f^j}\diff{f^j}{u^i}u^{j,1}-\frac{u^jf^j}{f^i}\diff{f^i}{u^j}u^{i,1}\right).
\]
Introduce the notation $f_i:=(f^i)^{-1}$, then we have rotation coefficients
\begin{equation}
\label{BC}
\qg_{ij}(u) = \frac{1}{2\sqrt{f_if_j}}\qp_jf_i,\quad i\neq j
\end{equation}
for the diagonal metric $\mathrm{diag}(f_1,\dots,f_n)$, here and henceforth we use $\qp_i$ to denote $\diff{}{u^i}$. The rotation coefficients satisfy the following equations \cite{ferapontov2001compatible}:
\begin{align}
\label{BB}
&\qp_k\qg_{ij} = \qg_{ik}\qg_{kj},\quad i,j,k\  \text{distinct},\\
&\qp_i\qg_{ij}+\qp_j\qg_{ji}+\sum_{k\neq i,j}\qg_{ki}\qg_{kj} = 0,\quad i\neq j,\\
\label{BD}
&u^i\qp_i\qg_{ij}+u^j\qp_j\qg_{ji}+\sum_{k\neq i,j}u^k\qg_{ki}\qg_{kj}+\frac12(\qg_{ij}+\qg_{ji}) = 0,\quad i\neq j.
\end{align} 
These equations are also defining equations for semisimple bihamiltonian structures. In another word, given $n$ nonzero functions $(f_1,\dots,f_n)$ and define $\qg_{ij}$ by \eqref{BC}, then the local functionals given by \eqref{norm-p} form a semisimple bihamiltonian structure if and only if the functions $\qg_{ij}$ satisfy the equations \eqref{BB}--\eqref{BD} (see the appendix of \cite{DLZ-1} for details).

Assume that we have a local functional $X\in\hm F^1$. It is proved in \cite{dubrovin2018bihamiltonian} that $X$ is a bihamiltonian vector field of $(P_0,P_1)$, i.e., $[X,P_0] = [X,P_1]=0$, if and only if 
\[
X = \int\sum_i A^i(u)u^{i,1}\qth_i,
\]
where the functions $A^i(u)$ satisfy the equations
\begin{equation}
\label{BE}
\qp_jA^i = \sqrt{\frac{f_j}{f_i}}\,\qg_{ij}(A^j-A^i),\quad i\neq j.
\end{equation}
Hence  a hydrodynamic bihamiltonian system must be of the form
\[
\diff{u^i}{t} = D_X(u^i)= A^iu^{i,1}.
\]
It admits the following super extension:
\begin{align}
\label{BF}
\diff{u^i}{t} &= D_X(u^i)= A^iu^{i,1},\\
\label{BG}
\diff{\qth_i}{t}&=D_X(\qth_i) = A^i\qth_i^1+\sum_{j\neq i}\left(\qp_jA^i\qth_i-\qp_iA^j\qth_j\right)u^{j,1},\\
\label{BH}
\diff{u^i}{\qt_0}&=D_{P_0}(u^i),\quad \diff{\qth_i}{\qt_0}=D_{P_0}(\qth_i),\\
\label{BI}
\diff{u^i}{\qt_1}&=D_{P_1}(u^i),\quad \diff{\qth_i}{\qt_1}=D_{P_1}(\qth_i).
\end{align}
We will need the following explicit expressions later:
\begin{align}
\label{DR}
\diff{u^i}{\qt_0}=&\, f^i\qth_i^1+\frac12\qp_if^i u^{i,1}\qth_i+\sum_{j\neq i}\left( a_{ji}u^{j,1}\qth_i+b_{ij}u^{j,1}\qth_j-b_{ji}u^{i,1}\qth_j\right),\\
\label{DS}
\diff{u^i}{\qt_1}=&\, u^if^i\qth_i^1+\frac12f^iu^{i,1}\qth_i+\frac12u^i\qp_if^i u^{i,1}\qth_i\notag\\
&+\sum_{j\neq i} \left(u^ia_{ji}u^{j,1}\qth_i+u^ib_{ij}u^{j,1}\qth_j-u^jb_{ji}u^{i,1}\qth_j\right),
\end{align}
here we denote for $i\neq j$
\begin{equation}
\label{BL}
a_{ij} = -f^j\sqrt{\frac{f^j}{f^i}}\,\qg_{ji},\quad b_{ij} = -\sqrt{f^if^j}\,\qg_{ji}.
\end{equation}

Assume all functions $A^i$ are non-zero, we can perform a linear reciprocal transformation by exchanging $x$ and $t$ in the equations \eqref{BF} and \eqref{BG}. By a straightforward computation, we obtain
\begin{align}
\label{BJ}
\diff{u^i}{\tilde t} &=\frac{1}{A^i}u^{i,1},\\
\label{BK}
\diff{\qth_i}{\tilde t}&=\frac{1}{A^i}\qth_i^1+\sum_{j\neq i}\frac{u^{j,1}}{A^iA^j}(\qp_iA^j\qth_j-\qp_jA^i\qth_i).
\end{align}
The  super extension of this system is obtained from \eqref{BH} and \eqref{BI} by using the substitution
\[
u^{i,1}\mapsto \frac{1}{A^i}u^{i,1},\quad \qth_i^1\mapsto \frac{1}{A^i}\qth_i^1+\sum_{j\neq i}\frac{u^{j,1}}{A^iA^j}(\qp_iA^j\qth_j-\qp_jA^i\qth_i).
\]
We use $\diff{}{\tilde\qt_0}$ and $\diff{}{\tilde\qt_1}$ to denote the odd flows of this super extended system, then they have the form
\begin{align}
\label{BM}
\diff{u^i}{\tilde \qt_0}=&\, f^i\left(\frac{1}{A^i}\qth_i^1+\sum_{j\neq i}\frac{u^{j,1}}{A^iA^j}(\qp_iA^j\qth_j-\qp_jA^i\qth_i)\right)\\
\notag
&+\frac12\qp_if^i\frac{u^{i,1}}{A^i}\qth_i+\sum_{j\neq i} \left(a_{ji}\frac{u^{j,1}}{A^j}\qth_i+b_{ij}\frac{u^{j,1}}{A^j}\qth_j-b_{ji}\frac{u^{i,1}}{A^i}\qth_j\right),\\
\label{BN}
\diff{u^i}{\tilde \qt_1}=&\, u^if^i\left(\frac{1}{A^i}\qth_i^1+\sum_{j\neq i}\frac{u^{j,1}}{A^iA^j}(\qp_iA^j\qth_j-\qp_jA^i\qth_i)\right)+\frac12 f^i\frac{u^{i,1}}{A^i}\qth_i\\
\notag
&+\frac12u^i\qp_if^i\frac{u^{i,1}}{A^i}\qth_i+\sum_{j\neq i}\left(u^ia_{ji}\frac{u^{j,1}}{A^j}\qth_i+u^ib_{ij}\frac{u^{j,1}}{A^j}\qth_j-u^jb_{ji}\frac{u^{i,1}}{A^i}\qth_j\right).
\end{align}

As we have explained in the beginning of this section, we need to define new odd variables $\qs_i$ such that the flow $\diff{}{t}$ defined by \eqref{BJ} and \eqref{BK} is of $D$-type. To this end, we introduce
\[
\qs_i = A^i\qth_i,
\]
then it follows from a direct computation that
\[
\diff{\qs_i}{\tilde t} = \frac{1}{A^i}\qs_i^1+\sum_{j\neq i} u^{j,1}\left(\qp_j\frac{1}{A^i}\qs_i-\qp_i\frac{1}{A^j}\qs_j\right).
\]
Then we see that in terms of the coordinates $(u^i,\qs_i)$, the flow $\diff{}{\tilde t}$ can be written as
\[
\diff{}{\tilde t} = D_{\tilde X},\quad \tilde X = \int\sum_i\frac{1}{A^i}u^{i,1}\qs_i,
\]
thus it is of $D$-type. It remains to show that the flows $\diff{}{\tilde\qt_0}$ and $\diff{}{\tilde\qt_1}$ are also of $D$-type in terms of coordinates $(u^i,\qs_i)$. This can be done by using the following lemma.

\begin{Lem}\label{BO}
Given a derivation $\diff{}{\qt}\in\derx^1_1$ which satisfies the condition $[\diff{}{\qt},\diff{}{\qt}] = 0$, then we have the flow of the form
\[
\diff{u^i}{\qt} = \sum_jg^{ij}\qth_j^1+\sum_{j,k}\Qg^{ij}_ku^{k,1}\qth_j
\]
in terms of some local coordinates $(u^i,\qth_i)$. If $\det(g^{ij})\neq 0$ and $(g_{ij}) := (g^{ij})^{-1}$ defines a flat metric with the Christoffel symbols of its Levi-Civita connection given by $\Qg^i_{jk} =- \sum_{\ell}g_{j\ell}\Qg^{\ell i}_k$, then $\diff{}{\qt}$ is of $D$-type, i.e. there is a local fcuntional 
\[
P = \frac12\int\sum_{i,j}g^{ij}\qth_i\qth_j^1+\sum_{i,j,k}\Qg^{ij}_ku^{k,1}\qth_i\qth_j
\]
 such that
\[
\diff{}{\qt} = D_P.
\]
\end{Lem}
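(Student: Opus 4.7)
The plan is to verify $\diff{}{\qt}=D_P$ directly by checking that the two derivations agree on the generators $u^i$ and $\qth_i$ of $\hm A$. For the $u$-components, I would compute $\vard{P}{\qth_i}$ from the definition; integration by parts on the $g^{ij}\qth_i\qth_j^1$ contribution and the symmetry $g^{ij}=g^{ji}$ give
\[
\vard{P}{\qth_i}=g^{ij}\qth_j^1+\tfrac12\bigl(\Qg^{ij}_k-\Qg^{ji}_k+\qp_k g^{ij}\bigr)u^{k,1}\qth_j.
\]
The identity $\qp_k g^{ij}=\Qg^{ij}_k+\Qg^{ji}_k$ is simply metric compatibility $\nabla_k g^{ij}=0$ rewritten via the Christoffel relation $\Qg^i_{jk}=-g_{j\ell}\Qg^{\ell i}_k$, so the bracketed coefficient collapses to $2\Qg^{ij}_k$ and we obtain $D_P(u^i)=\vard{P}{\qth_i}=\diff{u^i}{\qt}$. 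Furthermore, the flatness of $(g_{ij})$ together with the Christoffel relation is precisely the Dubrovin--Novikov criterion for $P$ to be a Hamiltonian structure, so $[P,P]=0$ and hence $[D_P,D_P]=0$ by \eqref{AJ}.

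For the $\qth$-components, set $Y=\diff{}{\qt}-D_P\in\derx^1_1$; the previous step gives $Y(u^i)=0$ for all $i$. Combining $[\diff{}{\qt},\diff{}{\qt}]=0$ with $[D_P,D_P]=0$, and using the oddness of $D_P$ and $Y$, yields $[D_P,Y]+Y^2=0$. Evaluating this identity on $u^i$, and recalling that $Y$ annihilates every $u^{k,s}$, reduces it to
\[
g^{ij}\qp_x Y(\qth_j)+\Qg^{ij}_ku^{k,1}Y(\qth_j)=0,
\]
which after contraction with $g_{i\ell}$ and use of $g_{i\ell}\Qg^{ij}_k=-\Qg^j_{\ell k}$ becomes $\qp_x Y(\qth_\ell)=\Qg^j_{\ell k}u^{k,1}Y(\qth_j)$. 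By the degree constraints $\deg_\qth=2$, $\deg_x=1$, $Y(\qth_\ell)\in\hm A^2_1$ must take the form $A^{ab}_\ell(u)\qth_a\qth_b^1+B^{ab}_{\ell,m}(u)u^{m,1}\qth_a\qth_b$. Matching the coefficient of the monomial $\qth_a\qth_b^2$ (produced only on the left, when $\qp_x$ hits $\qth_b^1$) forces $A^{ab}_\ell=0$, and then matching $u^{m,2}\qth_a\qth_b$ (only on the left, from $\qp_x$ hitting $u^{m,1}$) forces $B^{ab}_{\ell,m}=0$. Hence $Y(\qth_\ell)=0$ and $\diff{}{\qt}=D_P$.

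The main obstacle is this $\qth$-step: the hypotheses of the lemma constrain only the $u$-component of $\diff{}{\qt}$, so the required $\qth$-action must be extracted purely from the commutator condition $[\diff{}{\qt},\diff{}{\qt}]=0$. The mechanism that makes this work is a jet-order mismatch: any nonzero $A^{ab}_\ell$ or $B^{ab}_{\ell,m}$ would introduce second-jet monomials ($\qth^2$ or $u^{,2}$) on the left-hand side of the derived identity for $Y(\qth_\ell)$ that have no counterpart on the right, where $Y(\qth_j)$ appears multiplied only by the first jet $u^{k,1}$.
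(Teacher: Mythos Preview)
Your argument is correct, and it reaches the same conclusion by a genuinely different route from the paper. The paper's proof exploits flatness \emph{geometrically}: it performs the Miura-type transformation to flat coordinates $(\tilde u^i,\tilde\qth_i)$, where $\tilde g^{ij}$ becomes constant and all $\tilde\Qg^{ij}_k$ vanish, so that $\diff{\tilde u^i}{\qt}=\tilde g^{ij}\tilde\qth_j^1$; then the commutator condition evaluated on $\tilde u^i$ gives simply $\tilde g^{ij}\qp_x\diff{\tilde\qth_j}{\qt}=0$, hence $\diff{\tilde\qth_j}{\qt}=0$ by injectivity of $\qp_x$ on positive-degree polynomials, and $D$-type is immediate. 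The invariance of $D$-type under Miura transformations then carries the conclusion back to the original coordinates. Your approach instead stays in arbitrary coordinates and uses flatness \emph{algebraically}: metric compatibility verifies $D_P(u^i)=\diff{u^i}{\qt}$, the Dubrovin--Novikov criterion gives $[P,P]=0$, and then your jet-order analysis of $\qp_x Y(\qth_\ell)=\Qg^j_{\ell k}u^{k,1}Y(\qth_j)$ forces $Y(\qth_\ell)=0$. The paper's version is shorter and more conceptual (it reduces to the trivial constant-coefficient case), while yours is self-contained and avoids invoking the transformation rule for odd variables under Miura maps; in fact your equation $\qp_x Y(\qth_\ell)=\Qg^j_{\ell k}u^{k,1}Y(\qth_j)$ specialises exactly to the paper's $\qp_x\diff{\tilde\qth_\ell}{\qt}=0$ once the Christoffel symbols vanish, so the two arguments are really the same computation carried out before versus after the coordinate change.
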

\begin{proof}
Consider the following Miura-type transformation from coordinates $(u^i,\qth_i)$ to $(\tilde u^i,\tilde \qth_i)$:
\[
\tilde u^i = \tilde u^i(u),\quad \tilde \qth_i = \sum_j\diff{u^j}{\tilde u^i}\qth_j.
\]
By a straightforward computation, we have
\[
\diff{\tilde u^i}{\qt} = \sum_j\tilde g^{ij}\tilde\qth_j^1+\sum_{j,k}\tilde \Qg^{ij}_{k}\tilde u^{k,1}\tilde\qth_j,
\]
where the functions $\tilde g^{ij}$ and $\tilde \Qg^{ij}_{k}$ are given by
\begin{align*}
\tilde g^{ij} &= \sum_{k,\ell}\diff{\tilde u^i}{u^k}\diff{\tilde u^j}{u^\ell}g^{k\ell},\\
\tilde \Qg^{ij}_{k}&=-\sum_\ell \tilde g^{i\ell}\left(\sum_p\diff{\tilde u^j}{ u^p}\frac{\qp^2 u^p}{\qp\tilde u^\ell\tilde u^k}+\sum_{p,q,r}\diff{u^p}{\tilde u^\ell}\diff{u^q}{\tilde u^k}\diff{\tilde u^j}{u^r}\Qg^r_{pq}\right).
\end{align*}
Therefore  if $(g_{ij})$ defines a flat metric and  $\Qg^i_{jk}$ are the Christoffel symbols of its Levi-Civita connection, then by choosing $\tilde u^i$ to be its flat coordinates we may assume that $\tilde g^{ij}$ are constants and all the corresponding Christoffel symbols vanish. Thus we have
\[
\diff{\tilde u^i}{\qt} = \sum_j\tilde g^{ij}\tilde\qth_j^1.
\]
By using the fact that 
\[
\left[\diff{}{\qt},\diff{}{\qt}\right]\tilde u^i = 2\sum_j\tilde g^{ij}\qp_x\diff{\tilde \qth_j}{\qt} = 0,
\]
it follows that
\[
\diff{\tilde \qth_i}{\qt} = 0.
\]
Therefore in terms of coordinates $(\tilde u^i,\tilde \qth_i)$, the flow $\diff{}{\qt}$ is of $D$-type, i.e., it can be represented as
\[
\diff{}{\qt} = D_{\tilde P},\quad \tilde P = \frac12\int\sum_{ij}\tilde g^{ij}\tilde\qth_i\tilde\qth_j^1.
\]
Since the property of being $D$-type is preserved under Miura-type transformations, we conclude that in terms of the original coordinates $(u^i,\qth_i)$, $\diff{}{\qt}$ is also of $D$-type. The lemma is proved.
\end{proof}

Let us come back to prove that the flows $\diff{}{\tilde\qt_0}$ and $\diff{}{\tilde\qt_1}$ are of $D$-type in terms of coordinates $(u^i,\qs_i)$. We first write down the flows $\diff{u^i}{\tilde\qt_0}$ and $\diff{u^i}{\tilde\qt_1}$ by  substituting
\[
\qth_i\mapsto\frac{\qs_i}{A^i}
\]
into equations \eqref{BM} and \eqref{BN} to obtain
\begin{align*}
\diff{u^i}{\tilde \qt_0}=&\, \frac{f^i}{A^i}\left(\frac{\qs_i^1}{A^i}-\frac{\qp_iA^i}{(A^i)^2}u^{i,1}\qs_i-\sum_{j\neq i}\frac{1}{(A^i)^2}\sqrt{\frac{f_j}{f_i}}\,\qg_{ij}(A^j-A^i)u^{j,1}\qs_i\right)\\
&+f^i\sum_{j\neq i}\frac{u^{j,1}}{A^iA^j}\left(\sqrt{\frac{f_i}{f_j}}\,\qg_{ji}(A^i-A^j)\frac{\qs_j}{A^j}-\sqrt{\frac{f_j}{f_i}}\,\qg_{ji}(A^j-A^i)\frac{\qs_i}{A^i}\right)\\
\notag
&+\frac12\qp_if^i \frac{u^{i,1}}{A^i}\frac{\qs_i}{A^i}+\sum_{j\neq i} \left(a_{ji}\frac{u^{j,1}}{A^j}\frac{\qs_i}{A^i}+b_{ij}\frac{u^{j,1}}{A^j}\frac{\qs_j}{A^j}-b_{ji}\frac{u^{i,1}}{A^i}\frac{\qs_j}{A^j}\right),\\
\diff{u^i}{\tilde \qt_1}=&\, \frac{u^if^i}{A^i}\left(\frac{\qs_i^1}{A^i}-\frac{\qp_iA^i}{(A^i)^2}u^{i,1}\qs_i-\sum_{j\neq i}\frac{1}{(A^i)^2}\sqrt{\frac{f_j}{f_i}}\,\qg_{ij}(A^j-A^i)u^{j,1}\qs_i\right)+\frac12 f^i\frac{u^{i,1}}{A^i}\frac{\qs_i}{A^i}\\
&+u^if^i\sum_{j\neq i}\frac{u^{j,1}}{A^iA^j}\left(\sqrt{\frac{f_i}{f_j}}\,\qg_{ji}(A^i-A^j)\frac{\qs_j}{A^j}-\sqrt{\frac{f_j}{f_i}}\,\qg_{ji}(A^j-A^i)\frac{\qs_i}{A^i}\right)\\
\notag
&+\frac12u^i\qp_if^i \frac{u^{i,1}}{A^i}\frac{\qs_i}{A^i}+\sum_{j\neq i} \left(u^ia_{ji}\frac{u^{j,1}}{A^j}\frac{\qs_i}{A^i}+u^ib_{ij}\frac{u^{j,1}}{A^j}\frac{\qs_j}{A^j}-u^jb_{ji}\frac{u^{i,1}}{A^i}\frac{\qs_j}{A^j}\right),
\end{align*}
here we used the equations \eqref{BE}. According to Lemma \ref{BO}, in order to prove that the above flows are of $D$-type, we only need to show that the (contravariant) metrics 
\[\mathrm{diag}(\tilde f^1,\dots,\tilde f^n),\quad \mathrm{diag}(u^1\tilde f^1,\dots,u^n\tilde f^n),\quad \textrm{with}\ \tilde f^i = \frac{f^i}{(A^i)^2} \]
are flat, and the above flows can be represented in the form
\begin{align}
\label{BP}
\diff{u^i}{\tilde\qt_0}=&\, \tilde f^i\qs_i^1+\frac12 \qp_i\tilde f^iu^{i,1}\qs_i+\sum_{j\neq i}\left(\tilde a_{ji}u^{j,1}\qs_i+\tilde b_{ij}u^{j,1}\qs_j-\tilde b_{ji}u^{i,1}\qs_j\right),\\
\label{BQ}
\diff{u^i}{\tilde\qt_1}=&\, u^i\tilde f^i\qs_i^1+\frac12\tilde f^iu^{i,1}\qs_i+\frac12 u^i\qp_i\tilde f^iu^{i,1}\qs_i\\
&+\sum_{j\neq i}\left( u^i\tilde a_{ji}u^{j,1}\qs_i
+u^i\tilde b_{ij}u^{j,1}\qs_j-u^j\tilde b_{ji}u^{i,1}\qs_j\right).
\end{align}
Here
\begin{equation*}
\tilde a_{ij} = -\tilde f^j\sqrt{\frac{\tilde f^j}{\tilde f^i}}\,\tilde \qg_{ji},\quad \tilde b_{ij} = -\sqrt{\tilde f^i\tilde f^j}\,\tilde \qg_{ji},
\end{equation*}
and $\tilde \qg_{ij}$ are rotation coefficients of the metric $(\tilde f_1,\dots,\tilde f_n)$ with $\tilde f_i = (\tilde f^i)^{-1}$.

By a straightforward computation using the definition \eqref{BC} and the equations \eqref{BE}, one can prove that
\[
\tilde \qg_{ij} = \frac{1}{2\sqrt{\tilde f_i\tilde f_j}}\qp_j\tilde f_i = \qg_{ij}.
\]
In particular it follows that functions $\tilde \qg_{ij}$ satisfy the equations \eqref{BB}--\eqref{BD}, hence the metrics
$\mathrm{diag}(\tilde f^1,\dots,\tilde f^n)$ and $\mathrm{diag}(u^1\tilde f^1,\dots,u^n\tilde f^n)$ are flat. Again by computing directly, it is easy to see that the equations \eqref{BP} and \eqref{BQ} hold true. 

We summarize the results obtained so far in the following theorem.
\begin{Th}
\label{BR}
Let  $(P_0,P_1)$ be a semisimple bihamiltonian structure of hydrodynamic type of the form \eqref{norm-p}, and $\diff{}{t}$ be any of its bihamiltonian vector field which has the form
\[
\diff{u^i}{t} = A^iu^{i,1},\quad i = 1,\dots,n,
\]
here $u^i$ are  canonical coordinates of $(P_0,P_1)$.
Assume that $A^i\neq 0$, then the system of evolutionary  PDEs
\[
\diff{u^i}{t} = \frac{1}{A^i}u^{i,1}\quad i = 1,\dots,n
\]
possesses a bihamiltonian structure given by the local funtionals
\begin{equation}
\label{DV}
\tilde P_0 = \frac 12\int \sum_{i,j} \left(\qd^{ij}\tilde f^i(u)\qth_i\qth_i^1+ \tilde A^{ij}\qth_i\qth_j\right),\ \tilde P_1 = \frac 12\int \sum_{i,j} \left(\qd^{ij}u^i\tilde f^i(u)\qth_i\qth_i^1+\tilde  B^{ij}\qth_i\qth_j\right),
\end{equation}
where\[
\tilde f^i = \frac{f^i}{(A^i)^2}
\]
and
\[
\tilde A^{ij} = \frac 12\left(\frac{\tilde f^i}{\tilde f^j}\diff{\tilde f^j}{u^i}u^{j,1}-\frac{\tilde f^j}{\tilde f^i}\diff{\tilde f^i}{u^j}u^{i,1}\right),\ \tilde B^{ij} = \frac 12\left(\frac{u^i\tilde f^i}{\tilde f^j}\diff{\tilde f^j}{u^i}u^{j,1}-\frac{u^j\tilde f^j}{\tilde f^i}\diff{\tilde f^i}{u^j}u^{i,1}\right).
\]
Moreover, the metrics $\mathrm{diag}(\tilde f^1,\dots,\tilde f^n)$ and $\mathrm{diag}(f^1,\dots,f^n)$ share the same rotation coefficients.
\end{Th}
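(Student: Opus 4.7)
The plan is to make precise the computation already sketched in the body of this subsection, whose key ingredient is the change of odd variables
\[
\qs_i = A^i \qth_i, \quad i = 1, \dots, n,
\]
which is an invertible Miura-type transformation in the $\qth$-direction (recall $A^i \neq 0$). First I would verify, by direct chain-rule substitution and using that $\qp_x$-derivatives commute with such a change of variables, that after this substitution the flow $\qp/\qp \tilde t$ obtained from \eqref{BJ}--\eqref{BK} takes the $D$-type form
\[
\diff{}{\tilde t} = D_{\tilde X}, \quad \tilde X = \int \sum_i \frac{1}{A^i}\, u^{i,1}\, \qs_i;
\]
the cross terms $\qp_i A^j$ and $\qp_j A^i$ appearing in \eqref{BK} are absorbed exactly because $\qp_x \qs_i$ produces the matching $(\qp_j A^i)\, u^{j,1} \qth_i$ piece, and the bihamiltonian condition \eqref{BE} for the $A^i$ guarantees the algebraic consistency of the resulting expression.

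Next, substituting $\qth_i \mapsto \qs_i/A^i$ (together with its $\qp_x$-derivatives) into the reciprocal-transformed odd flows \eqref{BM} and \eqref{BN}, I would rearrange the right-hand sides, again using \eqref{BE} to rewrite every $\qp_j A^i$ in terms of $\sqrt{f_j/f_i}\,\qg_{ij}(A^j - A^i)$, and check that the outputs are exactly \eqref{BP}--\eqref{BQ} with $\tilde f^i = f^i/(A^i)^2$. The heart of the whole argument is then the identity
\[
\tilde \qg_{ij} = \frac{1}{2\sqrt{\tilde f_i \tilde f_j}}\, \qp_j \tilde f_i = \qg_{ij}, \quad i \neq j,
\]
which I would prove by expanding $\tilde f_i = (A^i)^2 f_i$ via the product rule, so that $\qp_j \tilde f_i$ splits into a piece involving $\qp_j f_i$ (giving the old $\qg_{ij}$) plus a piece $2 A^i (\qp_j A^i) f_i$; the second piece is then eliminated using \eqref{BE}, leaving precisely $\qg_{ij}$.

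Once $\tilde \qg_{ij} = \qg_{ij}$ is in hand, the equations \eqref{BB}--\eqref{BD} hold automatically for the tilded rotation coefficients, so by the characterization recalled just after \eqref{BD} (cf.\ the appendix of \cite{DLZ-1}) the local functionals $\tilde P_0, \tilde P_1$ defined in \eqref{DV} form a semisimple bihamiltonian structure of hydrodynamic type; in particular the two metrics $\mathrm{diag}(\tilde f^1, \dots, \tilde f^n)$ and $\mathrm{diag}(u^1 \tilde f^1, \dots, u^n \tilde f^n)$ are flat and form a flat pencil. Applying Lemma \ref{BO} to the flows \eqref{BP} and \eqref{BQ} (whose hypotheses are met because $\qp/\qp \tilde \qt_0$ and $\qp/\qp \tilde \qt_1$ inherit self-commutation from the original odd flows under the reciprocal transformation) then identifies them with $D_{\tilde P_0}$ and $D_{\tilde P_1}$ respectively, which is exactly the bihamiltonian statement. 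The final assertion about preservation of rotation coefficients is a restatement of the identity $\tilde \qg_{ij} = \qg_{ij}$. The main obstacle is precisely this identity: it is the only place where the defining relation \eqref{BE} for a bihamiltonian vector field of hydrodynamic type must be used in a nontrivial way, and everything else in the proof is bookkeeping, i.e.\ matching coefficients of $u^{j,1} \qs_k$ term by term.
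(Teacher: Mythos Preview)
Your proposal is correct and follows essentially the same route as the paper: introduce $\qs_i = A^i\qth_i$, verify $\partial/\partial\tilde t$ is $D$-type, substitute into \eqref{BM}--\eqref{BN}, prove $\tilde\qg_{ij}=\qg_{ij}$ via \eqref{BE}, and then invoke the characterization \eqref{BB}--\eqref{BD} together with Lemma~\ref{BO}. One small correction to your sketch of the $\tilde\qg_{ij}$ computation: the $\qp_j f_i$ piece contributes $(A^i/A^j)\qg_{ij}$ rather than $\qg_{ij}$ itself, and the $2A^i(\qp_j A^i)f_i$ piece contributes $(1-A^i/A^j)\qg_{ij}$ after applying \eqref{BE}, so it is the \emph{sum} that yields $\qg_{ij}$ rather than one piece surviving and the other being eliminated.
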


To complete the proof of Theorem \ref{AE} for the hydrodynamic case, we need to pick a bihamiltonian vector field $\diff{}{s}$ which is a symmetry of the system \eqref{BF} and can be represented as 
\begin{equation}
\label{DW}
\diff{u^i}{s} = B^iu^{i,1}.
\end{equation}
After performing the linear reciprocal transformation corresponding to $\diff{}{t}$, the flow $\diff{}{s}$ is transformed to
\begin{equation}
\label{DX}
\diff{u^i}{\tilde s} = \frac{B^i}{A^i}u^{i,1}.
\end{equation}
We need to show that this flow shares the same bihamiltonian structure with the flow \eqref{BJ}. By applying Theorem \ref{BR} and using the equation \eqref{BE}, we only need to check the validity of  the identities
\[
\qp_j\frac{B^i}{A^i} = \sqrt{\frac{\tilde f_j}{\tilde f_i}}\,\qg_{ij}(\frac{B^j}{A^j}-\frac{B^i}{A^i}),\quad i\neq j,
\]
which are easy to verify. Moreover, it is also easy to verify by a direct calculation that in terms of coordinates $(u^i,\qs_i)$, the flow $\diff{}{\tilde s}$ is of $D$-type. Hence Theorem \ref{AE} holds true for the hydrodynamic case.

We can further describe transformation rules of Hamiltonians under linear reciprocal transformations. Let us write the flow \eqref{DW} into the Hamiltonian form
\[
\diff{u^i}{s} = B^iu^{i,1} = \sum_j\mathcal P_0^{ij}\vard{H}{u^j},
\]
where $(\mathcal P_0^{ij})$ is the Hamiltonian operator of the Hamiltonian structure $P_0$ given in \eqref{norm-p} and $H = \int h$, where $h\in\hm A^0_0$. By performing  the linear reciprocal transformation with respect to the flow \eqref{BF} to \eqref{DW}, we arrive at the bihamiltonian vector field \eqref{DX}. We know that there exists $G\in\hm F^0_0$ such that
\[
\diff{u^i}{\tilde s} = \frac{B^i}{A^i}u^{i,1} = \sum_j\tilde{\mathcal P}_0^{ij}\vard{G}{u^j},
\]
where $(\tilde{\mathcal P}_0^{ij})$ is the Hamiltonian operator of the Hamiltonian structure $\tilde P_0$ given in \eqref{DV}. Let $G = \int g$ for $g\in\hm A^0_0$, then we have the following result.
\begin{Th}
\label{ham-reci}
The function $g$ can be chosen to be a solution of the following system of PDEs:
\[
\diff{g}{u^i} = A^i\diff{h}{u^i},\quad i = 1,\dots,n.
\]
\end{Th}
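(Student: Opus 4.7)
The plan is to construct $g$ as a solution of the compatible system of first-order PDEs $\qp_i g = A^i \qp_i h$ and to verify directly that $G = \int g$ generates the flow \eqref{DX} via the Hamiltonian structure $\tilde P_0$. The argument has three parts: (i) check that the system $\qp_i g = A^i \qp_i h$ is integrable; (ii) translate the Hamiltonian condition for $G$ against $\tilde P_0$ into explicit equations on $g$; (iii) verify these equations using the defining relation and the known equations satisfied by $h$ and $A^i$.

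Expanding $\sum_j \mathcal P_0^{ij} \qp_j h = B^i u^{i,1}$ in canonical coordinates and collecting powers of $u^{k,1}$ yields, for $k \neq i$, the off-diagonal condition
\[
\qp_i \qp_k h = \sqrt{f_k/f_i}\,\qg_{ik}\,\qp_i h + \sqrt{f_i/f_k}\,\qg_{ki}\,\qp_k h,
\]
while the coefficient of $u^{i,1}$ gives the diagonal condition $f^i \nabla_i^2 h = B^i$, where $\nabla$ is the Levi-Civita connection of the diagonal metric $\mathrm{diag}(f_1,\dots,f_n)$. A direct computation using \eqref{BE} and the off-diagonal equation above shows
\[
\qp_k(A^i \qp_i h) - \qp_i(A^k \qp_k h) = (A^i - A^k)\bigl(\qp_i\qp_k h - \sqrt{f_k/f_i}\,\qg_{ik}\,\qp_i h - \sqrt{f_i/f_k}\,\qg_{ki}\,\qp_k h\bigr) = 0,
\]
so $g$ is well-defined up to an additive constant.

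To verify the Hamiltonian condition for $G$ relative to $\tilde P_0$, one writes the analogous off-diagonal and diagonal equations for $g$, noting from Theorem \ref{BR} that $\tilde \qg_{ij} = \qg_{ij}$ and that $\sqrt{\tilde f_k/\tilde f_i} = (A^k/A^i)\sqrt{f_k/f_i}$. Substituting $\qp_i g = A^i \qp_i h$ into the tilded off-diagonal equation and expanding $\qp_i\qp_k g$ via \eqref{BE} reduces it, after the cancellations between the $\qp_i A^k$ terms and the $A^k \sqrt{f_k/f_i}\,\qg_{ik}$ terms, to the off-diagonal equation for $h$, which holds. For the diagonal equation $\tilde f^i \tilde\nabla_i^2 g = B^i/A^i$, one expresses the difference $\tilde\nabla_i^2 g - A^i \nabla_i^2 h$ in terms of Christoffel symbols and uses the two identities $\tilde\Qg^i_{ii} - \Qg^i_{ii} = \qp_i A^i/A^i$ and $A^j\tilde\Qg^j_{ii} = A^i\Qg^j_{ii}$ for $j \neq i$ to show it vanishes; both identities follow from $\tilde f^i = f^i/(A^i)^2$ and \eqref{BE}.

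The main obstacle is the careful verification of the identity $A^j\tilde\Qg^j_{ii} = A^i\Qg^j_{ii}$: the key preliminary computation, using \eqref{BE}, is
\[
\qp_j \tilde f^i = \frac{A^j\,\qp_j f^i}{(A^i)^3},\quad j\neq i,
\]
from which the identity follows by direct substitution into the standard formula for the Christoffel symbols of a diagonal metric. Once these two identities are established, the contribution $(\qp_i A^i)\,\qp_i h$ coming from $\qp_i^2 g$ is exactly absorbed by $A^i(\tilde\Qg^i_{ii} - \Qg^i_{ii})\,\qp_i h$, the off-diagonal contributions pair up as $A^i \Qg^j_{ii} - A^j \tilde\Qg^j_{ii} = 0$, and the diagonal equation reduces to $A^i \nabla_i^2 h = A^i B^i/f^i$, which holds by hypothesis.
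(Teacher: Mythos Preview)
Your argument is correct and follows essentially the same route as the paper's proof: derive the off-diagonal and diagonal conditions on $h$ from $\sum_j\mathcal P_0^{ij}\qp_j h = B^i u^{i,1}$, then verify by direct computation that $g$ with $\qp_i g = A^i\qp_i h$ satisfies the tilded versions of these conditions. Your explicit compatibility check of the system $\qp_i g = A^i\qp_i h$ and the repackaging of the diagonal condition via Christoffel symbols are details the paper omits, but the underlying verification is the same.
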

\begin{proof}
The identity
\[ B^iu^{i,1} = \sum_j\mathcal P_0^{ij}\vard{H}{u^j}\] yields the following identities satisfied by the function $h$:
\[
\qp_i\qp_jh = \sqrt{\frac{f^i}{f^j}}\,\qg_{ij}\qp_ih+\sqrt{\frac{f^j}{f^i}}\,\qg_{ji}\qp_jh\]
for $i\neq j$ and we also have
\[
B^i = f^i\qp_i^2h+\frac12\qp_if^i\qp_ih+\sum_{j\neq i}\sqrt{f^if^j}\qg_{ij}\qp_jh.
\]
Then the theorem is proved by a direct computation by using the above identities.
\end{proof}

\subsection{Proofs for the general case}
We proceed to prove Theorem \ref{AE} for the general case. As we have already seen from the last subsection, the fundamental idea to prove this theorem is to find suitable transformations of odd variables. Let us first  prove the following theorem.
\begin{Th}
\label{BS}
Let the vector fields $\diff{}{t}\in\derx^0_{\geq 1}$ and $\diff{}{\qt_0},\diff{}{\qt_1}\in\derx^1_{\geq 1}$ satisfy the conditions
\[
\left[\diff{}{t},\diff{}{\qt_a}\right] =\left[\diff{}{\qt_a},\diff{}{\qt_b}\right]= 0,\quad a,b = 0,1,
\]
and admit the following decomposition:
\begin{align*}
\diff{}{t} &= \diff{}{t^{[0]}}+\sum_{k\geq 2}\diff{}{t^{[k]}},\quad \diff{}{t^{[k]}}\in\derx^0_{k+1},\quad k\geq 0,\\
\diff{}{\qt_a} &= \diff{}{\qt_a^{[0]}}+\sum_{k\geq 2}\diff{}{\qt_a^{[k]}},\quad \diff{}{\qt_a^{[k]}}\in\derx^1_{k+1},\quad k\geq 0,\quad a = 0,1.
\end{align*}
Here, in terms of local coordinates $(u^i,\qth_i)$, the leading terms
$\diff{}{t^{[0]}}$, $\diff{}{\qt_0^{[0]}}$ and  $\diff{}{\qt_1^{[0]}}$ are of $D$-type. Denote
\[
\diff{}{\qt_a^{[0]}} = D_{P_a^{[0]}},\quad P_a^{[0]}\in\hm F^2_1,\quad a = 0,1,
\]
then if $(P_0^{[0]},P_1^{[0]})$ is a semisimple bihamiltonian structure of hydrodynamic type, there exist transformations of the odd variables
\[
\qs_i = \qth_i+W_i,\quad W_i\in\hm A^1_{\geq 2}
\]
such that, in terms of local coordinates $(u^i,\qs_i)$, the vector fields $\diff{}{t}$, $\diff{}{\qt_0}$ and  $\diff{}{\qt_1}$ are also of $D$-type.
\end{Th}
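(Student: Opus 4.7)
The plan is to construct $W_i=\sum_{k\geq 2} W_i^{[k]}$ with $W_i^{[k]}\in\hm A^1_k$ by induction on the differential degree, using the vanishing results of Theorem~\ref{AK} to kill the obstructions to being of $D$-type at each step. Since the leading pieces $\diff{}{t^{[0]}}$ and $\diff{}{\qt_a^{[0]}}$ are of $D$-type by hypothesis, nothing needs to be done at degree $1$; the iterative construction then builds up $W_i$ degree by degree, at each stage removing the lowest-order non-$D$-type contribution from all three flows simultaneously.

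For the inductive step, I would assume that $W_i^{[2]},\dots,W_i^{[k]}$ have been constructed so that, in the modified odd variables, the three vector fields admit decompositions
\[
\diff{}{t} = D_{\tilde X_k} + E^{(t)},\qquad \diff{}{\qt_a} = D_{\tilde P_{a,k}} + E^{(\qt_a)},\quad a=0,1,
\]
with non-$D$-type remainders $E^{(t)}\in\derx^0_{\geq k+2}$ and $E^{(\qt_a)}\in\derx^1_{\geq k+2}$. The commutation relations among the three flows, which are preserved by the coordinate change, restrict at leading differential degree $k+2$ to joint compatibility conditions on the triple $(E^{(t)},E^{(\qt_0)},E^{(\qt_1)})$. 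A direct computation, analogous to the hydrodynamic case treated in the previous subsection, would show that the further change $\sigma_i\mapsto \sigma_i + W_i^{[k+1]}$ shifts these leading components by explicit $d_{P_0^{[0]}}d_{P_1^{[0]}}$-coboundaries built canonically from $W_i^{[k+1]}$. Hence removing the obstruction is equivalent to trivialising the corresponding classes in $VBH^0_{k+2}(\derx,P_0^{[0]},P_1^{[0]})$ and $VBH^1_{k+2}(\derx,P_0^{[0]},P_1^{[0]})$.

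By Theorem~\ref{AK}, $VBH^0_{\geq 2}=0$ and $VBH^1_{\geq 4}=0$, which disposes of the obstructions for the even flow at every order and for the odd flows at every order except $k+2=3$, i.e., the first inductive step itself. At this step $VBH^1_3\cong\bigoplus_{i=1}^n C^\infty(\mathbb{R})$ is genuinely nonzero, but the explicit description of cocycles given after equation~\eqref{AM} shows that every cocycle in $\derx^1_3\cap\ker d_{P_0^{[0]}}\cap \ker d_{P_1^{[0]}}$ splits as a $D$-type derivation $D_{[P_0^{[0]},Q]}$ plus a $d_{P_0^{[0]}}d_{P_1^{[0]}}$-coboundary. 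The $D$-type piece is absorbed into $D_{\tilde P_{a,2}}$ and contributes to the central invariants of the transformed bihamiltonian structure, while the coboundary part is cancelled by a careful choice of $W_i^{[2]}$ that works simultaneously for $a=0,1$.

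The main obstacle lies in this degree-$3$ step. Unlike at higher degrees where the cohomology simply vanishes, here one must manipulate explicit representatives of classes in $VBH^1_3$ and verify that a single $W_i^{[2]}$ suffices to cancel the coboundary parts of the two odd flows at once. Compatibility of the two requirements is forced by the commutation $[\diff{}{\qt_0},\diff{}{\qt_1}]=0$ and by the commutations with $\diff{}{t}$ (whose degree-$3$ obstruction already sits in the vanishing group $VBH^0_3$), so I expect the construction to succeed; but carrying it out requires the sort of dispersive bookkeeping illustrated by the KdV example at the beginning of Section~\ref{pro}.
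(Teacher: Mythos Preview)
Your overall plan is on the right track, but there is a genuine gap in the mechanism you describe for how a change of odd variables acts on the flows. A transformation $\sigma_i\mapsto\sigma_i+W_i^{[k+1]}$ corresponds, via Lemma~\ref{DQ}, to conjugation by $\exp(T)$ for some $T\in\derx^0_{k+1}$; at leading order it shifts the degree-$(k+2)$ component of $\diff{}{\qt_a}$ by $[T,\diff{}{\qt_a^{[0]}}]=d_{P_a^{[0]}}T$, not by a $d_{P_0^{[0]}}d_{P_1^{[0]}}$-coboundary. Thus a single $T$ produces two \emph{different} shifts for $a=0$ and $a=1$, and killing both obstructions simultaneously amounts to solving $d_{P_0^{[0]}}T=E^{(\qt_0)}_{\mathrm{lead}}$ and $d_{P_1^{[0]}}T=E^{(\qt_1)}_{\mathrm{lead}}$ for one unknown $T$. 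Nothing in the commutation relations guarantees this overdetermined system is solvable, and the cohomological input you quote ($VBH^1$) is not the relevant obstruction group for that problem.

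The paper resolves this by proceeding \emph{asymmetrically}. First (Lemma~\ref{BV}) one uses only the vanishing of the single-Hamiltonian cohomology $H^1_{>0}(\derx,P_0^{[0]})$ to reduce $\diff{}{\qt_0}$ entirely to its leading term $D_{P_0^{[0]}}$. Only then does one attack $\diff{}{\qt_1}$: the commutations now give that each $\diff{}{\qt_1^{[k]}}$ lies in $\ker d_{P_0^{[0]}}\cap\ker d_{P_1^{[0]}}$, so it defines a class in $VBH^1_{k+1}$; moreover one takes $T^{[k]}$ of the special form $d_{P_0^{[0]}}Y$, which simultaneously (i) preserves the already-fixed $\diff{}{\qt_0}$ because $d_{P_0^{[0]}}^2=0$, and (ii) shifts $\diff{}{\qt_1}$ by $d_{P_1^{[0]}}d_{P_0^{[0]}}Y$, exactly a $VBH$-coboundary. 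This is what makes the $VBH$ groups the correct receptacle. At degree~$3$ the description~\eqref{AM} then yields $\diff{}{\qt_1^{[2]}}=D_S+d_{P_1^{[0]}}d_{P_0^{[0]}}Y$ and one absorbs $D_S$ as you suggest; at degree $\geq4$ one must also supply, from ordinary bihamiltonian cohomology, a $D$-type correction $D_R$ compensating for $-\tfrac12[D_S,D_S]$, after which $VBH^1_{\geq4}=0$ finishes the step. Finally, the even flow $\diff{}{t}$ is not handled degree by degree: once both odd flows are of $D$-type, Theorem~\ref{BX} produces a $D$-type vector field $D_X$ with the same leading term, and $VBH^0_{\geq2}=0$ forces $\diff{}{t}=D_X$ in one stroke.
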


It is easy to see that the first part of Theorem \ref{AE} follows from Theorem \ref{BS}. Indeed, consider a bihamiltonian structure $(P_0,P_1)$ and any of its bihamiltonian vector field $\diff{}{t}$. We assume that the leading term of $(P_0,P_1)$ is a semisimple bihamiltonian structure of hydrodynamic type. By applying a suitable Miura-type transformation if necessary, we may require that the $\deg_x = 2$ components of $(P_0,P_1)$ and $\diff{}{t}$ vanish \cite{DLZ-1}. After performing the linear reciprocal transformation with respect to $\diff{}{t}$, we obtain the vector fields $\diff{}{\tilde t}$, $\diff{}{\tilde \qt_0}$ and $\diff{}{\tilde \qt_1}$. Firstly, by using the results of Theorem \ref{BR}, we may assume that the leading terms of $\diff{}{\tilde t}$, $\diff{}{\tilde \qt_0}$ and $\diff{}{\tilde \qt_1}$ are of $D$-type. Since the $\deg_x = 2$ components of these vector fields  vanish, we may apply Theorem \ref{BS} to find new odd variables such that, in terms of the new odd variables, these vector fields are also of $D$-type. Therefore we may represent the odd flows as
\[
\diff{}{\tilde \qt_a} = D_{\tilde P_a},\quad \tilde P_a\in\hm F^2_{\geq 1},\quad a = 0,1.
\]
In particular, $\diff{}{\tilde t}$ is a bihamiltonian vector field of $(\tilde P_0,\tilde P_1)$.

To prove the second part of Theorem \eqref{AE}, we take another bihamiltonian vector field $\diff{}{s}$ of $(P_0,P_1)$, which is transformed to $\diff{}{\tilde s}$ by the linear reciprocal transformation with respect to $\diff{}{t}$. We need to prove that $\diff{}{\tilde s}$ is a bihamiltonian vector field with respect to $(\tilde P_0,\tilde P_1)$ in terms of new odd variables described in Theorem \ref{BS}. By applying Theorem \eqref{AE} for the hydrodynamic case, we know that the leading term of  $\diff{}{\tilde s}$ is a bihamiltonian vector field with respect to the leading term of $(\tilde P_0,\tilde P_1)$. According to the result of \cite{DLZ-1}, this leading term determines a unique bihamiltonian vector field of $(\tilde P_0,\tilde P_1)$, i.e., there exists $X\in \hm F^1$ such that $[X,\tilde P_0]=[ X,\tilde P_1]=0$ and the leading term of $D_{X}$ coincides with that of $\diff{}{\tilde s}$. Therefore it follows that the vector field \[\diff{}{\tilde s}-D_X\in\derx^0_{\geq 2}\] commutes with $D_{\tilde P_0}$ and $D_{\tilde P_1}$. By applying the result  \[VBH^0_{\geq 2}(\derx) =\derx^0_{\geq 2}\cap\ker d_{\tilde P_0}\cap\ker d_{\tilde P_1}= 0\]
of Theorem \ref{AK}, we conclude that $\diff{}{\tilde s} = D_X$. So the second part of Theorem \eqref{AE} is proved. By a further computation, we can check that the central invariants of $(\tilde P_0,\tilde P_1)$ coincide with those of $(P_0,P_1)$. Hence Theorem \ref{AE} is proved.

In the remaining part of this subsection, we prove Theorem \ref{BS}. Instead of finding suitable transformations of odd variables, it is much more convenient to consider the following general transformations on $\hm A$:
\begin{align}
\label{BT}
\tilde u^{i,s}&= u^{i,s}+\qp_x^sZ^i,\quad Z^i\in\hm A^0_{\geq 1},\\
\label{BU}
\tilde\qth_i^s&= \qth_i^s+\qp_x^sW_i,\quad W_i\in\hm A^1_{\geq 1}.
\end{align}
Note that such a transformation can always be decomposed into a Miura-type transformation coupled with a transformation of odd variables, and Miura-type transformations preserve the property of being $D$-type of a vector field, therefore we may freely consider such general transformations when proving Theorem \ref{BS}.

As a first step towards proving Theorem \ref{BS}, we give an alternative description of transformations \eqref{BT} and \eqref{BU}, and describe their actions on the space $\derx$.
\begin{Lem}
\label{DQ}
Given a transformation of the form \eqref{BT} and \eqref{BU}, there exists a vector field $T\in\derx^0_{\geq 1}$, such that
\[
\tilde u^i = \exp(T)u^i,\quad \tilde \qth_i = \exp(T)\qth_i.
\]
Moreover, a vector field $X\in\derx$ is transformed to $\tilde X$ after performing the transformations \eqref{BT} and \eqref{BU}, where
\[
\tilde X = \exp(-\mathrm{ad}_T)X.
\]
\end{Lem}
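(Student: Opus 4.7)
The plan is to realize the change of generators \eqref{BT}--\eqref{BU} as a unipotent algebra automorphism of $\hm A$ commuting with $\qp_x$, exhibit it as the exponential of a derivation in $\derx^0_{\geq 1}$, and then read off the transformation law for vector fields as conjugation by this exponential.

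For the existence of $T$, I would argue by induction on the $\deg_x$-filtration. Writing $T=\sum_{k\geq 1}T_k$ with $T_k\in\derx^0_k$, each $T_k$ is determined by the elements $T_ku^i\in\hm A^0_k$ and $T_k\qth_i\in\hm A^1_k$, since commutation with $\qp_x$ together with the Leibniz rule then fixes $T_k$ on the full jet algebra. Decomposing $Z^i$ and $W_i$ into their $\deg_x$-homogeneous pieces $Z^i_k,W_{i,k}$ and collecting the $\deg_x=k$ component of the identities $\exp(T)u^i=u^i+Z^i$ and $\exp(T)\qth_i=\qth_i+W_i$, one obtains inhomogeneous equations of the form
\[
T_ku^i=Z^i_k-\Psi^i_k(T_1,\dots,T_{k-1}),\qquad T_k\qth_i=W_{i,k}-\Xi_{i,k}(T_1,\dots,T_{k-1}),
\]
where $\Psi^i_k$ and $\Xi_{i,k}$ are universal polynomials in the previously constructed pieces coming from the higher-order tails of the exponential series. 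This determines $T_k$ uniquely and the recursion closes. Convergence of $\exp(T)$ on any fixed element of $\hm A$ is automatic because each $T_j$ strictly raises $\deg_x$, so only finitely many terms contribute to any bounded $\deg_x$-component.

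For the second assertion, set $\Phi:=\exp(T)$, which is an algebra automorphism of $\hm A$ commuting with $\qp_x$ and sending the untilded generators to the tilded ones. The transformation rule of a derivation under such a change of generators is conjugation, $\tilde X=\Phi^{-1}\circ X\circ\Phi$, expressing the compatibility that $\tilde X$ acts on the new generators the way $X$ acts on the old ones, translated through $\Phi$. Combined with the standard operator identity $e^{-T}Xe^{T}=\exp(-\mathrm{ad}_T)(X)$, proved by integrating the ODE $\frac{d}{ds}(e^{-sT}Xe^{sT})=-\mathrm{ad}_T(e^{-sT}Xe^{sT})$ from $s=0$ to $s=1$, this yields the claimed formula. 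No substantive obstacle arises: the lemma is essentially the fact that a unipotent automorphism of a filtered algebra has a unique logarithm which is a derivation, and the only point requiring attention is verifying that the universal polynomials $\Psi^i_k$ and $\Xi_{i,k}$ genuinely land in $\hm A^0_k$ and $\hm A^1_k$ respectively, which follows at once from the homogeneity of each factor $T_j\in\derx^0_j$ and the constraint $j_1+\cdots+j_m=k$ on products $T_{j_1}\cdots T_{j_m}$ contributing to the $\deg_x=k$ part of $\exp(T)$ applied to a generator.
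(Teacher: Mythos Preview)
Your argument is correct and supplies exactly the details the paper omits: the authors simply cite Lemma~2.5.6 and Theorem~2.5.7 of \cite{liu2011jacobi} and give no in-text proof, and your recursive construction of $T$ by $\deg_x$-homogeneous components together with the conjugation identity $e^{-T}Xe^{T}=\exp(-\mathrm{ad}_T)X$ is precisely the standard argument one expects that reference to contain. One small remark: your verbal gloss on the conjugation direction (``$\tilde X$ acts on the new generators the way $X$ acts on the old ones'') literally describes $\Phi X\Phi^{-1}$ rather than $\Phi^{-1}X\Phi$; the formula you actually use is the right one for the paper's convention (as one checks against the cancellation in the proof of Lemma~\ref{BV}), so you may wish to rephrase that sentence to say instead that $\tilde X$ is the expression of $X$ in the new generators, pulled back to the old ones via $\Phi^{-1}$.
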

\begin{proof}
This lemma is easy to prove by slightly modifying the proof of Lemma 2.5.6 and Theorem 2.5.7 given in \cite{liu2011jacobi} and we omit the details here. The lemma is proved.
\end{proof}

\begin{Lem}
\label{BV}
Consider the vector field $\diff{}{\qt}\in\derx^1_{\geq 1}$ which satisfies the condition
\[
\left[\diff{}{\qt},\diff{}{\qt}\right] =0,
\]
and has the decomposition
\begin{align*}
\diff{}{\qt} &= \sum_{k\geq 0}\diff{}{\qt^{[k]}},\quad \diff{}{\qt^{[k]}}\in\derx^1_{k+1}.
\end{align*}
Assume that in terms of local coordinates $(u^i,\qth_i)$ the vector field
$\diff{}{\qt^{[0]}}$ is of $D$-type, i.e., it can be represented as
\[
\diff{}{\qt^{[0]}} = D_{P^{[0]}},\quad P^{[0]}\in\hm F^2_1.
\] 
If $P^{[0]}$ is a Hamiltonian structure of hydrodynamic type, then there exists a transformation 
\[
\qs_i = \qth_i+W_i,\quad W_i\in\hm A^1_{\geq 2}
\]
of the odd variables such that, in terms of local coordinates $(u^i,\qs_i)$, the vector field $\diff{}{\qt}$ is of $D$-type.
\end{Lem}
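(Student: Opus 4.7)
The plan is to prove Lemma \ref{BV} by induction on the differential degree: at each stage, the commutativity $[\diff{}{\qt},\diff{}{\qt}]=0$ furnishes a cocycle condition on the leading non-$D$-type component of $\diff{}{\qt}$, and Theorem \ref{AK}(1) produces a primitive. The primitive generates, via Lemma \ref{DQ}, an infinitesimal transformation that absorbs the obstruction. As in the proof of Theorem \ref{BS}, I would work with the general transformations of the form \eqref{BT}--\eqref{BU}; any Miura-type component preserves $D$-typeness and can be separated at the end, leaving a pure odd transformation of the required type.

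The inductive hypothesis at stage $N\geq 0$ is that, after a finite composition of such transformations,
\[
\diff{}{\qt} = D_{P_N} + R_N,\quad
P_N = \sum_{j=0}^{N} P^{[j]} \in \hm F^2,\quad
P^{[j]} \in \hm F^2_{j+1},\quad
R_N \in \derx^1_{\geq N+2},
\]
with the partial Jacobi identity $[P_N, P_N] \in \hm F^3_{\geq N+3}$. The base case $N=0$ is given by $P_0 = P^{[0]}$ and $R_0 = \sum_{k\geq 1}\diff{}{\qt^{[k]}}$.

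For the inductive step, expand $0 = [\diff{}{\qt},\diff{}{\qt}] = [D_{P_N}, D_{P_N}] + 2[D_{P_N}, R_N] + [R_N, R_N]$ and read off the differential-degree-$(N+3)$ component. The bracket $[R_N, R_N]$ starts in degree $2(N+2) > N+3$ and drops out; $[D_{P_N}, D_{P_N}] = \pm D_{[P_N, P_N]}$ vanishes by the partial Jacobi hypothesis; and in $[D_{P_N}, R_N]$ only $[D_{P^{[0]}}, R_N^{[N+1]}]$ survives at this degree. Thus $R_N^{[N+1]} \in \derx^1_{N+2}$ is $d_{P^{[0]}}$-closed, and by Theorem \ref{AK}(1) together with $H^1_{N+2}(\derx, P^{[0]}) = 0$ one obtains $R_N^{[N+1]} = [D_{P^{[0]}}, Y]$ for some $Y \in \derx^0_{N+1}$. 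Taking $T = -Y$ in Lemma \ref{DQ}, the transformed vector field is $\exp(\mathrm{ad}_Y)(\diff{}{\qt}) = \diff{}{\qt} + [Y, \diff{}{\qt}] + O(Y^2)$; since $Y$ has positive differential degree, the only contribution at degree $N+2$ is $[Y, D_{P^{[0]}}] = -R_N^{[N+1]}$, so the new leading obstruction at that degree vanishes while lower-degree pieces remain untouched. Setting $P_{N+1} = P_N$ (i.e.\ $P^{[N+1]} = 0$) preserves the partial Jacobi identity, yielding $\diff{}{\qt} = D_{P_{N+1}} + R_{N+1}$ with $R_{N+1} \in \derx^1_{\geq N+3}$.

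The composition of all these infinitesimal transformations is a formal transformation of the form \eqref{BT}--\eqref{BU}. Extracting its Miura piece (which preserves $D$-typeness and can absorb any low-degree contribution from the earliest steps) leaves a pure odd-variable transformation $\qs_i = \qth_i + W_i$ with $W_i \in \hm A^1_{\geq 2}$ in whose coordinates $\diff{}{\qt}$ is of $D$-type. The decisive ingredient is the cohomology vanishing of Theorem \ref{AK}(1), which supplies the primitive at each step; the rest is bookkeeping, the main care being to maintain $[P_N, P_N] \in \hm F^3_{\geq N+3}$, which the simple choice $P_{N+1} = P_N$ ensures automatically.
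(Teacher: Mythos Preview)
Your argument is correct and follows the same inductive scheme as the paper: at each step $[\diff{}{\qt},\diff{}{\qt}]=0$ forces the leading remainder to be $d_{P^{[0]}}$-closed, Theorem~\ref{AK}(1) supplies a primitive in $\derx^0$, and conjugation via Lemma~\ref{DQ} pushes the remainder to higher degree. The paper's version is marginally leaner in that it reduces $\diff{}{\qt}$ directly to $D_{P^{[0]}}$ without your $P_N$-bookkeeping; note also that your stated partial Jacobi hypothesis $[P_N,P_N]\in\hm F^3_{\geq N+3}$ would not by itself kill the degree-$(N+3)$ piece of $D_{[P_N,P_N]}$, but since you always choose $P_{N+1}=P_N=P^{[0]}$ this bracket vanishes identically and the issue is moot.
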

\begin{proof}
Let us consider a transformation
\[
\tilde u^i = \exp(T)u^i,\quad \tilde \qth_i = \exp(T)\qth_i
\]
with a certain $T\in\derx^0_{\geq 1}$, then Lemma \ref{DQ} implies that the vector field $\diff{}{\qt}$ is transformed to
\[
\diff{}{\tilde \qt} = \exp(-ad_T)\diff{}{\qt} .
\]
If we can find such a $T$ satisfying the condition
\[
\diff{}{\tilde \qt} = \diff{}{\qt^{[0]}},
\]
then it follows that $\diff{}{\tilde \qt}$ is of $D$-type in terms of the new variables $(\tilde u^i,\tilde\qth_i)$.

Let us find $T$ recursively by induction. Firstly, the condition $[\diff{}{\qt},\diff{}{\qt}]=0$ implies that
\[
\left[\diff{}{\qt^{[1]}},D_{P^{[0]}}\right] = 0.
\]
Therefore $\diff{}{\qt^{[1]}}$ defines a cohomology class in $H^1_{2}(\derx,P^{[0]})$. By applying Theorem \ref{AK} we know that such a cohomology class is trivial, so there exists $T^{[1]}\in\derx^0_1$ such that
\[
\diff{}{\qt^{[1]}} = \fk{T^{[1]}}{D_{P^{[0]}}} = \fk{T^{[1]}}{\diff{}{\qt^{[0]}}}.
\]
Then by performing the transformation
\[
u^i\mapsto \exp(T^{[1]})u^i,\quad \qth_i\mapsto\exp(T^{[1]})\qth_i,
\]
the vector field $\diff{}{\qt}$ is transformed to
\[
\diff{}{\qt}\mapsto\diff{}{\tilde \qt} = \diff{}{\qt^{[0]}}+\sum_{k\geq 2}\diff{}{\tilde \qt^{[k]}},\quad \diff{}{\tilde \qt^{[k]}}\in\derx^1_{k+1}.
\]
Since the vector field $\diff{}{\tilde\qt}$ still satisfies the condition
\[
\fk{\diff{}{\tilde\qt}}{\diff{}{\tilde\qt}} = 0,
\]
by a similar argument we can find a transformation to eliminate the term $\diff{}{\tilde \qt^{[2]}}$ of $\diff{}{\tilde \qt}$. Continuing this procedure in a  recursive way, we arrive at a desired transformation such that
\[
\diff{}{\qt}\mapsto\diff{}{\qt^{[0]}}.
\]
The lemma is proved.
\end{proof}

Now we are ready to prove Theorem \ref{BS}.
\begin{proof}[Proof of Theorem \ref{BS}.] The idea is the same as that of the proof of Lemma \ref{BV}. We are to find a transformation of the form \eqref{BT} and \eqref{BU} such that the flows $\diff{}{\qt_0}$ and $\diff{}{\qt_1}$ are transformed to vector fields of $D$-type. By applying Lemma \ref{BV}, we may assume that $\diff{}{\qt_0} = \diff{}{\qt_0^{[0]}}$. Then the condition
\begin{equation}
\label{BW}
\fk{\diff{}{\qt_a}}{\diff{}{\qt_b}} = 0,\quad a,b = 0
\end{equation}
implies that
\[
\fk{\diff{}{\qt_1^{[2]}}}{D_{P_0^{[0]}}} =\fk{\diff{}{\qt_1^{[2]}}}{D_{P_1^{[0]}}} = 0,
\]
therefore $\diff{}{\qt_1^{[2]}}$ defines a cohomology class of $VBH^1_{3}(\derx, P_0^{[0]},P_1^{[0]})$. It follows from \eqref{AM} the existence of $S\in\hm F^2_3$ and $Y\in\derx^{-1}_1$ such that
\[
\diff{}{\qt_1^{[2]}} = D_S+\fk{\diff{}{\qt_1^{[0]}}}{\fk{\diff{}{\qt_0^{[0]}}}{Y}}.
\]
Therefore if we define
\[
T^{[2]} = \fk{\diff{}{\qt_0^{[0]}}}{Y}\in\derx^0_2,
\]
then after the transformation
\[
u^i\mapsto \exp(T^{[2]})u^i,\quad \qth_i\mapsto \exp(T^{[2]})\qth_i,
\]
the vector fields $\diff{}{\qt_0}$ and $\diff{}{\qt_1}$ are transformed to
\[
\diff{}{\qt_0}\mapsto \diff{}{\qt_0},\quad \diff{}{\qt_1}\mapsto D_{P_1^{[0]}}+D_S+\sum_{k\geq 3}\diff{}{\tilde \qt_1^{[k]}},\quad \diff{}{\tilde \qt_1^{[k]}}\in\derx^1_{k+1}.
\]

We proceed to find a transformation of the form \eqref{BT} and \eqref{BU} which turns the vector field $\diff{}{\tilde \qt_1^{[3]}}$ into $D$-type. Recall that the local functional $S$ has the expression 
\[
S = \fk{P_0^{[0]}}{\left[P_1^{[0]},\int\sum_i c_i(u^i)u^{i,1}\log u^{i,1}\right]},
\]
so we have \[\left[S,P_0^{[0]}\right] = \left[S,P_1^{[0]}\right] = 0.\] Then by using the theory of bihamiltonian cohomology \cite{carlet2018central,carlet2018deformations,liu2013bihamiltonian}, we can find $R\in\hm F^2_4$ such that
\[
\left[R,P_0^{[0]}\right] = 0,\quad \left[R,P_1^{[0]}\right] = -\frac12 [S,S].
\]
Since the condition \eqref{BW} yields the relations
\[
\fk{\diff{}{\tilde \qt_1^{[3]}}}{D_{P_0}^{[0]}} = 0,\quad \fk{\diff{}{\tilde \qt_1^{[3]}}}{D_{P_1}^{[0]}} = -\frac 12[D_S,D_S], 
\]
by applying the identity \eqref{AJ} we see that the vector field 
\[
\diff{}{\tilde \qt_1^{[3]}}-D_R
\]
defines a cohomology class of $VBH^1_{4}(\derx, P_0^{[0]},P_1^{[0]})$. Hence it follows from Theorem \ref{AK} the existence of $\tilde Y\in\derx^{-1}_2$ such that
\[
\diff{}{\tilde \qt_1^{[3]}}-D_R = \fk{\diff{}{\qt_1^{[0]}}}{\fk{\diff{}{\qt_0^{[0]}}}{\tilde Y}}.
\]
Let us define
\[
T^{[3]} = \fk{\diff{}{\qt_0^{[0]}}}{\tilde Y}\in\derx^0_3,
\]
then after the transformation 
\[
u^i\mapsto \exp(T^{[3]})u^i,\quad \qth_i\mapsto \exp(T^{[3]})\qth_i,
\]
the vector fields $\diff{}{\tilde \qt_0}$ and $\diff{}{\tilde \qt_1}$ are transformed as
\[
\diff{}{\tilde \qt_0}\mapsto \diff{}{\tilde \qt_0},\quad \diff{}{\tilde \qt_1}\mapsto D_{P_1^{[0]}}+D_S+D_R+\dots.
\]
By continuing this procedure recursively, we arrive at a transformation which turns $\diff{}{\qt_{0}}$ and $\diff{}{\qt_1}$ into $D$-type.

Finally we prove that after $\diff{}{\qt_{0}}$ and $\diff{}{\qt_1}$ are transformed to vector fields of $D$-type, the vector field $\diff{}{t}$ is automatically turned into $D$-type. To avoid complicated notations, we assume that $\diff{}{\qt_{0}}$ and $\diff{}{\qt_1}$ are already of $D$-type. By the assumption of the theorem, the leading term of $\diff{}{t}$ is a vector field of $D$-type, i.e., we have
\[
\diff{}{t^{[0]}} = D_{X^{[0]}},\quad X^{[0]}\in\hm F^1_1.
\]
It follows from Theorem \ref{BX} the existence of  $X\in\hm F^1_{\geq 1}$ with leading term $X^{[0]}$ such that
\[
\fk{D_X}{\diff{}{\qt_0}} = \fk{D_X}{\diff{}{\qt_1}} = 0.
\]
Then we conclude that 
\[
\fk{\diff{}{t}-D_X}{\diff{}{\qt_0}} = \fk{\diff{}{t}-D_X}{\diff{}{\qt_1}} = 0,
\]
hence it follows from $VBH^0_{\geq 2}(\derx) = 0$ that $\diff{}{t} = D_X$. The theorem is proved.
\end{proof}

Finally, let us consider the central invariants of the bihamiltonian structure after the linear reciprocal transformation. Assume we have a bihamiltonian structure $(P_0,P_1)$ whose leading term forms a semisimple bihamiltonian structure of hydrodynamic type.   After performing a Miura-type transformation if necessary, we may assume $P_0 = P_0^{[0]}\in\hm F^2_1$ and
\[
P_1 = P_1^{[0]}+R,\quad P_1^{[0]}\in\hm F^2_1,\quad R\in\hm F^2_{\geq 3},
\]
where $(P_0^{[0]},P_1^{[0]})$ is a semisimple bihamiltonian structure of the form \eqref{norm-p}. Let $X\in\hm F^1_{\geq 1}$ be a bihamiltonian vector field with respect to $(P_0,P_1)$. Let us denote by $\diff{}{t}$ the vector field $D_X$, then in terms of the canonical coordinates of $(P_0^{[0]},P_1^{[0]})$ we have
\begin{align}
\label{BY}
\diff{u^i}{t}:=&\,D_X(u^i) = A^iu^{i,1}+\sum_{j}B^i_ju^{j,3}+\dots,\\
\label{BZ}
\diff{\qth_i}{t}:=&\,D_X(\qth_i)=A^i\qth_i^1+\sum_{j\neq i}\left(\qp_jA^i\qth_i-\qp_iA^j\qth_j\right)u^{j,1}+\sum_j B^i_j\qth_j^3+\dots,
\end{align}
here and henceforth we omit all the terms that are irrelevant to the computations below. Assume that the central invariants of $(P_0,P_1)$ are $c_1(u^1),\dots,c_n(u^n)$, then we the corresponding odd flows have the form
\begin{align*}
\diff{u^i}{\qt_0}=&\, f^i\qth_i^1+\frac12\qp_if^iu^{i,1}\qth_i+\sum_{j\neq i}\left( a_{ji}u^{j,1}\qth_i+b_{ij}u^{j,1}\qth_j-b_{ji}u^{i,1}\qth_j\right),\\
\diff{u^i}{\qt_1}=&\, u^if^i\qth_i^1+\frac12f^iu^{i,1}\qth_i+\frac12u^i\qp_if^iu^{i,1}\qth_i+\sum_{j\neq i}\left( u^ia_{ji}u^{j,1}\qth_i+u^ib_{ij}u^{j,1}\qth_j-u^jb_{ji}u^{i,1}\qth_j\right)\\&+3c_i(u^i)(f^i)^2\qth_i^3+\dots,
\end{align*}
where the leading terms of the above flows are given by \eqref{DR} and \eqref{DS}, and the $\qth_i^3$ coefficient of the $\deg_x  = 3$ component of $\diff{u^i}{\qt_1}$ follows from the definition \eqref{BA} of the central invariants. 
We perform the linear reciprocal transformation with respect to the flows \eqref{BY} and \eqref{BZ} to arrive at the following new flows:
\begin{align}
\diff{u^i}{\tilde t}&=\frac{1}{A^i}u^{i,1}-\sum_j \frac{B^i_j}{A^i(A^j)^3}u^{j,3}+\dots,\\
\diff{\qth_i}{\tilde t}&=\frac{1}{A^i}\qth_i^1+\sum_{j\neq i}\frac{u^{j,1}}{A^iA^j}(\qp_iA^j\qth_j-\qp_jA^i\qth_i)-\sum_j\frac{B^i_j}{A^i(A^j)^3}u^{j,3}+\dots.
\end{align}
Then the flows $\diff{}{\qt_0}$ and $\diff{}{\qt_1}$ are transformed to
\begin{align*}
\diff{u^i}{\tilde \qt_0} &= \frac{f^i}{A^i}\qth_{i}^1-f^i\sum_j\frac{B^i_j}{A^i(A^j)^3}u^{j,3}+\dots,\\
\diff{u^i}{\tilde \qt_1} &= u^i\frac{f^i}{A^i}\qth_{i}^1-u^if^i\sum_j\frac{B^i_j}{A^i(A^j)^3}u^{j,3}+3c_i(u^i)(f^i)^2\frac{1}{(A^i)^3}\qth_i^3+\dots.
\end{align*}
According to Theorem \ref{BS}, there exist new odd variables
\[
\qs_i = A^i\qth_i+\sum_jD^j_i\qth_j^2+\dots,
\]
or equivalently
\[
\qth_i = \frac{1}{A^i}\qs_i-\sum_j\frac{D^j_i}{A^iA^j}\qs_j^2+\dots,
\]
such that $\diff{}{\tilde \qt_0}$ and $\diff{}{\tilde\qt_1}$ are turned into vector fields of $D$-type in terms of the new odd variables $\qs_i$. Then to compute the central invariants, we only need to compute the coefficients $Z^i_a$ of $\qs_i^3$ in the $\deg_x = 3$ components of $\diff{u^i}{\tilde \qt_a}$ for $a = 0,1$. By a direct computation, it is easy to obtain
\begin{align*}
Z^i_0&= -\frac{f^iD^i_i}{(A^i)^3}-\frac{f^i B^i_i}{(A^i)^5},\\
Z^i_1 &= -\frac{u^if^iD^i_i}{(A^i)^3}-\frac{u^if^i B^i_i}{(A^i)^5}+3c_i\frac{(f^i)^2}{(A^i)^4}.
\end{align*}
By using the definition \eqref{BA},  we see that the central invariants $\tilde c_i$ of the bihamiltonian structure of the flow $\diff{}{\tilde t}$ read
\[
\tilde c_i = \frac{1}{3(\tilde f^i)^2}(Z^i_1-u^iZ^i_0) = c_i.
\]
Thus we complete the proof of Theorem \ref{AE}.

\section{Applications and examples}\label{ex}
\subsection{Classification of bihamiltonian integrable hierarchies by linear reciprocal transformations}
In this subsection, we classify, under the actions of both Miura-type transformations and linear reciprocal transformations, bihamiltonian integrable  hierarchies whose bihamiltonian structures admit semisimple hydrodynamic limits.

It follows from the theory of bihamiltonian cohomology that the Miura-equivalent classes of bihamiltonian structures are parameterized by their leading terms and central invariants, and leading terms are in one-to-one correspondence with flat pencils of metrics of the base manifolds. Therefore for two bihamiltonian integrable hierarchies, if the leading terms of their bihamiltonian structures, represented in their canonical coordinates,  coincide and their bihamiltonian structures share the same central invariants, then each flow of one bihamiltonian integrable hierarchy is a symmetry of the other bihamiltonian integrable hierarchy.

Next we consider the actions of both Miura-type transformations and linear reciprocal transformations on bihamiltonian integrable hierarchies. Since linear reciprocal transformations preserve central invariants, we conclude that two bihamiltonian integrable hierarchies are equivalent under actions of Miura-type transformations and linear reciprocal transformations if and only if the central invariants of their bihamiltonian structures coincide and their leading terms can be related by linear reciprocal transformations in their canonical coordinates. So we only need to consider the classification of bihamiltonian hierarchies of hydrodynamic type. 

Assume that we have two $n$-component semisimple bihamiltonian structures of hydrodynamic type $(P_0,P_1)$ and $(Q_0,Q_1)$ which, in terms of their canonical coordinates, are given by two metrics $\mathrm{diag}(f^1,\dots,f^n)$ and $\mathrm{diag}(g^1,\dots,g^n)$. As we have shown in Theorem \eqref{BR}, the necessary condition for the existence of a linear reciprocal transformation relating them is that these two metrics share a same set of rotation coefficients. Let us show that this is also a sufficient condition. Indeed, assume that their rotation coefficients coincide, i.e.,
\begin{equation}
\label{CA}
\qg_{ij}(u) = \frac{1}{2\sqrt{f_if_j}}\qp_jf_i = \frac{1}{2\sqrt{g_ig_j}}\qp_jg_i,\quad i\neq j,
\end{equation}
here as before we denote $f_i = (f^i)^{-1}$ and $g_i = (g^i)^{-1}$. If we can show that the flow
\[
\diff{u^i}{t} = \sqrt{\frac{f^i}{g^i}}u^{i,1}
\]
is a bihamiltonian vector field of $(P_0,P_1)$, then it follows from Theorem \ref{BR} that $(P_0,P_1)$ is transformed to $(Q_0,Q_1)$ after performing the linear reciprocal transformation with respect to the flow $\diff{}{t}$. According to the equation \eqref{BE}, we only need to check that
\[
\qp_j \sqrt{\frac{f^i}{g^i}} = \sqrt{\frac{f_j}{f_i}}\qg_{ij}\left(\sqrt{\frac{f^j}{g^j}}-\sqrt{\frac{f^i}{g^i}}\right),\quad i\neq j,
\]
which is easy to verify by using equations \eqref{CA}. Therefore we proved the following theorem.
\begin{Th}
The equivalent classes of $n$-component bihamiltonian integrable hierarchies under the actions of Miura-type transformations and linear reciprocal transformations are parameterized by the data $(\qg_{ij};c_1(u^1),\dots,c_n(u^n))$, where $\qg_{ij}$ is a solution of the system \eqref{BB}-\eqref{BD} and $c^1,\dots,c^n$ are $n$ functions of a single variable. 
\end{Th}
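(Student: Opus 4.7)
The plan is to assemble the result from three pieces already established in the paper: the Miura classification of bihamiltonian structures with semisimple hydrodynamic limits, the preservation of central invariants under linear reciprocal transformations (part (1) of Theorem \ref{AE}), and the explicit description of the transformed leading-term bihamiltonian structure given by Theorem \ref{BR}. With these in hand, the theorem is essentially a packaging statement.

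First I would reduce to leading-term data plus central invariants. By the Miura classification, every semisimple hydrodynamic bihamiltonian structure is Miura-equivalent to the canonical deformation of its leading term $(P_0^{[0]},P_1^{[0]})$ specified by the central invariants $c_i(u^i)$. The leading term, read in its canonical coordinates $(u^1,\dots,u^n)$, which are intrinsically determined up to permutation as the roots of $\det(g_1^{\qa\qb}-\ql g_0^{\qa\qb})=0$, is encoded by the diagonal metrics $\mathrm{diag}(f^1,\dots,f^n)$, equivalently by the rotation coefficients $\qg_{ij}$ via \eqref{BC}. Combined with Theorem \ref{BX}, which provides the unique extension of each leading flow to a bihamiltonian vector field of the full deformation, the classification reduces to that of the pair $(\qg_{ij}; c_1,\dots,c_n)$ modulo the action of linear reciprocal transformations.

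Well-definedness is handled as follows. Miura-type transformations fix canonical coordinates, the leading metrics and their rotation coefficients, and the central invariants. For linear reciprocal transformations, Theorem \ref{BR} shows that canonical coordinates are preserved and that the new leading metrics $\mathrm{diag}(\tilde f^1,\dots,\tilde f^n)$ with $\tilde f^i = f^i/(A^i)^2$ share the same rotation coefficients as the originals, while part (1) of Theorem \ref{AE} guarantees invariance of the $c_i$. Hence $(\qg_{ij}; c_i)$ is truly an invariant of the equivalence class.

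For completeness, suppose two semisimple hydrodynamic bihamiltonian structures $(P_0,P_1)$ and $(Q_0,Q_1)$ share canonical coordinates and rotation coefficients, and the full structures share central invariants. Following the recipe sketched just before the theorem, set $A^i := \sqrt{f^i/g^i}$ and consider the flow $\qp u^i/\qp t = A^i u^{i,1}$. Using \eqref{CA} one checks that these $A^i$ satisfy \eqref{BE}, so this is a bihamiltonian vector field of $(P_0,P_1)$. Applying Theorem \ref{BR} to the linear reciprocal transformation with respect to it produces a hydrodynamic bihamiltonian structure with metrics $\mathrm{diag}(g^1,\dots,g^n)$; since this and $(Q_0,Q_1)$ share canonical coordinates and leading metrics, they coincide. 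To promote this to the full dispersive hierarchy, I would invoke part (2) of Theorem \ref{AE} together with the Miura uniqueness of deformations with prescribed leading term and central invariants. The main technical work is already carried out in Theorems \ref{AE} and \ref{BR}; once these are in hand the remaining content is bookkeeping. The only subtlety worth flagging is that canonical coordinates are determined only up to permutation of indices, so the parameterizing data $(\qg_{ij};c_i)$ should be understood modulo the corresponding $S_n$-action.
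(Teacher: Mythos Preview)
Your proposal is correct and follows essentially the same route as the paper's own argument: reduce via the Miura classification to the pair (leading-term flat pencil, central invariants), observe that both the rotation coefficients (Theorem \ref{BR}) and the central invariants (Theorem \ref{AE}) are preserved under linear reciprocal transformations, and for the converse direction build the bihamiltonian flow with $A^i=\sqrt{f^i/g^i}$, verify \eqref{BE} from \eqref{CA}, and apply Theorem \ref{BR}. Your additional remarks on promoting to the full dispersive hierarchy and on the $S_n$-ambiguity of canonical coordinates are reasonable refinements that the paper leaves implicit.
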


\begin{Ex}
Due to the dimensional reason, under the actions of Miura-type transformations and linear reciprocal transformations, equivalent classes of scalar bihamiltonian integrable hierarchies whose bihamiltonian structures possess semisimple hydrodynamic limits  are parameterized by functions $c_1(u^1)$ of a single variable. 
\end{Ex}

\begin{Ex}
Consider the Volterra hierarchy given by the following Lax equations:
\begin{equation}
\label{CL}
-4^k\qe\diff{L}{t_k} = [(L^{2k})_+,L], \quad L = \Ql+e^u\Ql^{-1},\quad k\geq 1,
\end{equation}
where $\Ql = \exp(\qe\qp_x)$ is the shift operator along the spatial variable. For example, we have
\begin{equation}
\label{CB}
\diff{u}{t_1} = -\frac{1}{4\qe}\left(\Ql-\Ql^{-1}\right)e^u.
\end{equation}
This integrable hierarchy admits a bihamiltonian structure $(P_0,P_1)$ with central invariant $-\frac{1}{12}$ and  Hamiltonian operators  \cite{adler1978trace,dubrovin2016hodge,faddeev1987hamiltonian}
\[
\mathcal P_0 = \frac{1}{8\qe}\left((\Ql+1)e^u(\Ql+1)-(\Ql^{-1}+1)e^u(\Ql^{-1}+1)\right),\quad \mathcal P_1 =  \frac{1}{2\qe}\left(\Ql-\Ql^{-1}\right).
\]
The leading terms of these operators are
\[
\mathcal P_0 ^{[0]} = e^u\qp_x+\frac12e^uu_x,\quad \mathcal P_1 ^{[0]} = \qp_x.
\]
In terms of the canonical coordinate $v = e^{-u}$ the above bihamiltonian structure of hydrodynamic type can be represented as
\[\tilde{\mathcal P}_0^{[0]} = v\qp_x+\frac12 v_x,\quad \tilde{\mathcal P}_1^{[0]} = v^2\qp_x+vv_x.
\]

Let us show that the first flow \eqref{CB} of the Volterra hierarchy is transformed to the first negative flow of the KdV hierarchy by a linear reciprocal transformation. Firstly, we need to adjust the central invariant of the bihamiltonian structure of the Volterra hierarchy to $\frac{1}{24}$ by rescaling the dispersion parameter as follows:
\[
\qe\mapsto \frac{\sqrt{-1}}{\sqrt 2}\qe.
\]
After this substitution, the flow \eqref{CB} can be represented in terms of $v$ by
\begin{align}
\label{CQ}
\diff{v}{t_1} =&\, -\frac{v_x}{2v}+\qe^2\left(\frac{v^{(3)}}{24v}-\frac{v_xv_{xx}}{4v^2}+\frac{v_x^3}{4v^3}\right)+\qe^4\left(-\frac{v^{(5)}}{960v}+\frac{v_xv^{(4)}}{96v^2}\right.\\
\notag
&\left.+\frac{v_{xx}v^{(3)}}{48v^2}-\frac{v_x^2v^{(3)}}{16v^3}-\frac{3v_xv_{xx}^2}{32v^3}+\frac{v_x^3v_{xx}}{4v^4}-\frac{v_x^5}{8v^5}\right)+O(\qe^6),
\end{align}
here and henceforth we will use $v^{(s)}$ to denote $\qp_x^sv$.
Note that for any smooth function $f(v)$, the flow
\[
\diff{v}{s^{[0]}} = f(v)v_x
\]
is a bihamiltonian vector field of $(P_0^{[0]},P_1^{[0]})$, therefore it determines a unique bihamiltonian vector field of $(P_0,P_1)$  which is a symmetry of the flow \eqref{CQ}. In particular, we have the following symmetry:
\begin{align}
\label{CR}
\diff{v}{s}=&\,\sqrt v v_x+\qe^2\sqrt v\left(\frac{v^{(3)}}{24}-\frac{v_xv_{xx}}{8v}+\frac{5v_x^3}{64v^2}\right)+\qe^4\sqrt v\left(\frac{v^{(5)}}{480}-\frac{v_xv^{(4)}}{96v}\right.\\
\notag
&\left.-\frac{v_{xx}v^{(3)}}{48v}+\frac{21v_x^2v^{(3)}}{512v^2}+\frac{v_xv_{xx}^2}{16v^2}-\frac{63v_x^3v_{xx}}{512v^3}+\frac{399v_x^5}{8192v^4}\right)+O(\qe^6).
\end{align}
Let us show that, after performing the linear reciprocal transformation with respect to the flow \eqref{CR} to the flow \eqref{CQ}, one obtains the first negative flow of the KdV hierarchy. Indeed, according to Theorem \ref{BR}, the two leading terms $\tilde{\mathcal P}_0^{[0]}$ and $\tilde{\mathcal P}_1^{[0]}$ of the  bihamiltonian structure are transformed to
\[
\tilde{\mathcal P}_0^{[0]}\mapsto \qp_x,\quad \tilde{\mathcal P}_1^{[0]}\mapsto v\qp_x+\frac12v_x,
\]
which are exactly the leading terms of the  bihamiltonian structure of the KdV hierarchy. Therefore we conclude that the flow \eqref{CQ} is transformed to a symmetry of the KdV hierarchy. By a straightforward computation, we see that the flow obtained from \eqref{CR} by the
 linear reciprocal transformation reads
\begin{align}
\label{CS}
\diff{v}{\tilde s}=&\,\frac{v_x}{\sqrt v} +\qe^2\frac{1}{\sqrt v}\left(-\frac{v^{(3)}}{24v}+\frac{5v_xv_{xx}}{16v^2}-\frac{23v_x^3}{64v^3}\right)+\qe^4\frac{1}{\sqrt v}\left(\frac{v^{(5)}}{320v^2}-\frac{3v_xv^{(4)}}{64v^3}\right.\\
\notag
&\left.-\frac{11v_{xx}v^{(3)}}{128v^3}+\frac{557v_x^2v^{(3)}}{1536v^4}+\frac{395v_xv_{xx}^2}{768v^4}-\frac{1791v_x^3v_{xx}}{1024v^5}+\frac{8571v_x^5}{8192v^6}\right)+O(\qe^6).
\end{align}
Let us perform the linear reciprocal transformation to the flow \eqref{CQ}, in another word, we make the following substitution to the flow \eqref{CQ}:
\[
\frac{\qp^n v}{\qp x^n}\mapsto \frac{\qp^n v}{\qp \tilde s^n}.
\]
Then the flow \eqref{CQ} is transformed to
\begin{align}
\label{CT}
\diff{v}{t} =&\, -\frac{v_x}{2v\sqrt v}+\qe^2\frac{1}{\sqrt v}\left(\frac{v^{(3)}}{16v^2}-\frac{19v_xv_{xx}}{32v^3}+\frac{107v_x^3}{128v^4}\right)+\qe^4\frac{1}{\sqrt v}\left(-\frac{v^{(5)}}{128v^3}+\frac{9v_xv^{(4)}}{64v^4}\right.\\
\notag
&\left.+\frac{64v_{xx}v^{(3)}}{256v^4}-\frac{1293v_x^2v^{(3)}}{1024v^5}-\frac{909v_xv_{xx}^2}{512v^5}+\frac{14193v_x^3v_{xx}}{2048v^6}-\frac{76467v_x^5}{16384v^7}\right)+O(\qe^6).
\end{align}
After performing the Miura-type transformation
\begin{align}
\label{CZ}
v\mapsto w =&\, v+\qe^2\left(\frac{3v_{xx}}{16v}-\frac{9v_x^2}{32v^2}\right)+\qe^4\left(\frac{3v^{(4)}}{512v^2}-\frac{3v_xv^{(3)}}{64v^3}\right.\\
\notag
&\left.-\frac{v_{xx}^2}{32v^3}+\frac{51v_x^2v_{xx}}{256v^4}-\frac{135v_x^4}{1024v^5}\right)+O(\qe^6),
\end{align}
the flow \eqref{CT} is transformed to
\begin{align}
\label{CU}
\diff{w}{ t} =&\, -\frac{w_x}{2w\sqrt w}+\qe^2\frac{1}{\sqrt w}\left(\frac{w^{(3)}}{16w^2}-\frac{5w_xw_{xx}}{16w^3}+\frac{35w_x^3}{128w^4}\right)+\qe^4\frac{1}{\sqrt w}\left(-\frac{w^{(5)}}{128w^3}+\frac{21w_xw^{(4)}}{256w^4}\right.\\
\notag
&\left.+\frac{35w_{xx}w^{(3)}}{256w^4}-\frac{483w_x^2w^{(3)}}{1024w^5}-\frac{651w_xw_{xx}^2}{1024w^5}+\frac{231w_x^3w_{xx}}{128w^6}-\frac{15015w_x^5}{16384w^7}\right)+O(\qe^6).
\end{align}
Let us verify that the flow \eqref{CU} is exactly the first negative flow of the KdV hierarchy. Recall that the negative flows of the KdV hierarchy are constructed in \cite{verosky1991negative} with the first negative flow given by
\begin{equation}
\label{CV}
\diff{w}{t} = p_x,\quad w_x = \left(w\qp_x+\frac12 w_x+\frac{\qe^2}{8}\qp_x^3\right)p.
\end{equation}
Other negative flows are defined recursively by the recursion operator
\[
w+\frac12w_x\qp_x^{-1}+\frac{\qe^2}{8}\qp_x^2.
\]
We can solve the variable $p$ as a differential polynomial in $w$ from the second equation of \eqref{CV}. For example, when we take the dispersionless limit $\qe = 0$, we obtain
\[
w_x = wp_x+\frac12w_xp,
\]
which can be integrated to obtain
\[
2\sqrt w+ C = \sqrt w p,
\]
here $C$ is the integral constant and we take $C = 1$ for simplicity. Then we can recursively solve the dispersion part of $p$ to obtain
\begin{align*}
p =&\, 2+\frac{1}{\sqrt w}+\qe^2\frac{1}{\sqrt w}\left(\frac{w_{xx}}{16w^2}-\frac{w_x^2}{64 w^3}\right)+\qe^4\frac{1}{\sqrt w}\left(-\frac{w^{(4)}}{128w^3}+\frac{7w^{(3)}w_x}{128w^4}\right.\\
&\left.+\frac{21w_{xx}^2}{512 w^4}-\frac{231w_{xx}w_x^2}{1024w^5}+\frac{1155w_x^4}{8192w^6}\right)+O(\qe^6).
\end{align*}
After substituting the expression of $p$ into
\[\diff{w}{t} = p_x,\]
we obtain the flow \eqref{CU}.

By a further computation, we can prove that actually the  hierarchy \eqref{CL} is transformed to the negative KdV hierarchy by performing the linear reciprocal transformation given by \eqref{CR}. Since we have already seen that the bihamiltonian structure of the Volterra hierarchy is transformed to that of the KdV hierarchy after the linear reciprocal transformation together with the Miura-type transformation \eqref{CZ}, we only need to check the above-mentioned relation for the leading terms of the two integrable hierarchies. Firstly, it is easy to see that the Volterra hierarchy \eqref{CL} has leading terms
\[
\diff{u}{t_k} = -\frac{k}{4^k}\binom{2k}{k}e^{ku}u_x+O(\qe).
\]
In terms of the canonical coordinate $v = e^{-u}$ we have
\[
\diff{v}{t_k} = -\frac{k}{4^k}\binom{2k}{k}\frac{v_x}{v^k}+O(\qe).
\]
After performing the linear reciprocal transformation, i.e., after making the substitution
\[
v_x\mapsto \frac{v_x}{\sqrt v}+O(\qe),
\]
we obtain
\begin{equation}
\label{DA}
\diff{v}{t_k} = -\frac{k}{4^k}\binom{2k}{k}\frac{v_x}{v^k\sqrt v}+O(\qe),\quad k\geq 1.
\end{equation}
These flows \eqref{DA} satisfy the recursion relations
\[
\left(v+\frac12v_x\qp_x^{-1}+\frac{\qe^2}{8}\qp_x^2\right)\diff{v}{t_{k+1}} = \diff{v}{t_k}.
\]
When $k = 1$, the flow \eqref{DA} is just the flow \eqref{CU}, so we conclude that the flows \eqref{DA} form the negative KdV hierarchy.
\end{Ex}
\subsection{Reciprocal transformations of the Ablowitz--Ladik hierarchy}
\label{CF}
In this subsection we consider the linear reciprocal transformations of the Ablowitz--Ladik hierarchy. The tri-Hamiltonian structure is given in \cite{li2022tri,oevel1989mastersymmetries} and the central invariants of the  associated bihamiltonian structures are presented in \cite{li2022tri}. We consider here the bihamiltonian structure $(P_0,P_1)$ (which is labeled by $(P_1,P_2)$ in the paper \cite{li2022tri}) with central invariants $c_1 = c_2 = \frac{1}{24}$ and with the semisimple leading term $(P_0^{[0]},P_1^{[0]})$. The Hamiltonian operators of this leading term read
\begin{align*}
\mathcal P_0^{[0]} &= \begin{pmatrix}
-2u^2\qp_x-u^{2,1} & -u^2\qp_x-u^{2,1} \\[6pt]
-u^2\qp_x & 0 
\end{pmatrix},\\[8pt]
 \mathcal P_1^{[0]} &= \begin{pmatrix}
0 & u^1u^2\qp_x+u^1u^{2,1} \\[6pt]
u^1u^2\qp_x+u^2u^{1,1} & 2(u^2)^2\qp_x+u^2u^{2,1} 
\end{pmatrix}.
\end{align*}
It has the canonical coordinates
\[
\ql^1 = 2u^2-u^1-2\sqrt{u^2(u^2-u^1)},\quad \ql^2 = 2u^2-u^1+\sqrt{u^2(u^2-u^1)},
\]
and we see that the semisimple bihamiltonian structure $(P_0^{[0]},P_1^{[0]})$ corresponds to the flat pencil $\mathrm{diag}(f^1,f^2)$ and $\mathrm{diag}(\ql^1f^1,\ql^2f^2)$, with
\[
f^1 = \frac{2\ql^1(\sqrt{\ql^1}+\sqrt{\ql^2})}{\sqrt{\ql^1}-\sqrt{\ql^2}},\quad f^2 = \frac{2\ql^2(\sqrt{\ql^1}+\sqrt{\ql^2})}{\sqrt{\ql^2}-\sqrt{\ql^1}}.
\]
By a direct computation we see that $\qg_{12} = \qg_{21}$. 

Let us find a hydrodynamic bihamiltonian vector field  $\diff{}{t^{[0]}}$ of $(P_0^{[0]},P_1^{[0]})$ with
\[
\diff{\ql^i}{t^{[0]}} = A^i\ql^{i,1},\quad i=1,2,
\]
such that the metric $\mathrm{diag}(\tilde f^1,\tilde f^2)$ given by 
\[
\tilde f^i = \frac{f^i}{(A^i)^2},\quad i=1,2
\]
satisfies the condition
\begin{equation}
\label{CE}
\diff{\tilde f^i}{\ql^1}+\diff{\tilde f^i}{\ql^2} = 0,\quad i=1,2.
\end{equation}
This is called the exactness condition, and we will give a more detailed and general discussion on this condition in Sect.\,\ref{dz}. For the present case, we can explicitly write down the equations \eqref{BE} and \eqref{CE} as follows:
\begin{align*}
\diff{A^1}{\ql^1} &=\frac{\ql^1-\ql^2-\sqrt{\ql^1\ql^2}}{2\ql^1(\ql^1-\ql^2)}A^1 -\frac{\sqrt{\ql^1}}{2\sqrt{\ql^2}(\ql^2-\ql^1)}A^2,\\
\diff{A^1}{\ql^2} &=\frac{\sqrt{\ql^1}}{2\sqrt{\ql^2}(\ql^2-\ql^1)}(A^2-A^1),\\
\diff{A^2}{\ql^1} &=\frac{\sqrt{\ql^2}}{2\sqrt{\ql^1}(\ql^1-\ql^2)}(A^1-A^2),\\
\diff{A^2}{\ql^2} &=\frac{\ql^1-\ql^2+\sqrt{\ql^1\ql^2}}{2\ql^2(\ql^1-\ql^2)}A^2-\frac{\sqrt{\ql^2}}{2\sqrt{\ql^1}(\ql^1-\ql^2)}A^1.
\end{align*}
The general solution of this system of PDEs can be represented by
\begin{align*}
A^1 &= \sqrt{\ql^1}(\sqrt{\ql^1}+\sqrt{\ql^2})\left(\frac{\qa}{\ql^1-\ql^2}+\qb\right),\\
A^2 &= \sqrt{\ql^2}(\sqrt{\ql^1}+\sqrt{\ql^2})\left(\frac{\qa}{\ql^2-\ql^1}+\qb\right),
\end{align*}
where $\qa,\qb$ are two arbitrary constants. Therefore we obtain
\begin{align*}
\tilde f^1 = \frac{2}{(\ql^1-\ql^2)\left(\frac{\qa}{\ql^1-\ql^2}+\qb\right)^2},\quad
\tilde f^2 = \frac{2}{(\ql^2-\ql^1)\left(\frac{\qa}{\ql^2-\ql^1}+\qb\right)^2}.
\end{align*}

It is easy to see that, when $\qa = 0,\qb = \frac12$ we get the leading term of the bihamiltonian structure of the extended nonlinear Schr\"{o}dinger hierarchy \cite{carlet2004extended,shabat1972exact}, and when $\qa = 2,\qb = 0$ the corresponding semisimple bihamiltonian structure is the leading term of the bihamiltonian structure of the extended Toda hierarchy \cite{carlet2004extended}. More explicitly, let us rewrite the flow
\begin{align*}
\diff{\ql^1}{t^{[0]}}&= \sqrt{\ql^1}(\sqrt{\ql^1}+\sqrt{\ql^2})\left(\frac{\qa}{\ql^1-\ql^2}+\qb\right)\ql^{1,1},\\
\diff{\ql^2}{t^{[0]}}&=\sqrt{\ql^2}(\sqrt{\ql^1}+\sqrt{\ql^2})\left(\frac{\qa}{\ql^2-\ql^1}+\qb\right)\ql^{2,1}
\end{align*}
in terms of the original coordinate $(u^1,u^2)$. Firstly, for $\qa = 0,\qb = \frac12$ we have
\begin{equation}
\label{DT}
\diff{u^1}{t^{[0]}} =  u^1u^{2,1},\quad \diff{u^2}{t^{[0]}} = 2 u^2u^{2,1}- u^2u^{1,1}.
\end{equation}
This flow coincides with the leading term of the first positive flow of the Ablowitz--Ladik hierarchy, see \cite{brini2012local,brini2012integrable,li2022tri,suris2012problem}. In the notations of \cite{li2022tri}, this flow reads
\begin{align}
\label{CX}
\diff{P}{t_0} &= \frac{1}{\qe}P(Q(x+\qe)-Q(x)),\\\label{CY} \diff{Q}{t_0} &= \frac{1}{\qe}Q(Q(x+\qe)-Q(x-\qe)-P(x)+P(x-\qe)).
\end{align} 
After the identification
\[
P(x,\qe = 0) = u^1(x),\quad  Q(x,\qe = 0) = u^2(x),
\]
the leading terms of \eqref{CX} and \eqref{CY} yield the flow \eqref{DT}. This is also the flow $\diff{}{t^{2,0}}$ of the Principal Hierarchy of the generalized Frobenius manifold corresponding to the Ablowitz--Ladik hierarchy \cite{li2022tri}.

For $\qa = 2,\qb = 0$, a direct calculation yields
\[
\diff{u^1}{t^{[0]}} = \frac{u^1(u^{2,1}-u^{1,1})}{u^2-u^1},\quad \diff{u^2}{t^{[0]}} = \frac{2u^2u^{2,1}-u^1u^{2,1}-u^2u^{1,1}}{u^2-u^1}.
\]
After the change of variables
\[r = \log u^2,\quad w = u^2-u^1\]
we obtain
\begin{equation}
\label{CP}
\diff{r}{t} = r_x+\frac{w_x}{w},\quad \diff{w}{t} = w_x+e^rr_x,
\end{equation}
which is the flow $\diff{}{t^{1,0}}$ of the Principal Hierarchy of the generalized Frobenius manifold corresponding to the Ablowitz--Ladik hierarchy (\cite{brini2012integrable,liu2022generalized}). It is expected that this flow is the leading term of a flow in a certain extension of the Ablowitz--Ladik hierarchy.

By using Theorem \eqref{ham-reci}, we can relate the the Principal Hierarchies associated with the Frobenius manifolds $M_{Toda}$, $M_{NLS}$ and $M_{AL}$, which correspond respectively to the Toda hierarchy, the nonlinear Schr\"{o}dinger hierarchy and the Ablowitz--Ladik hierarchy. The precise relations are summarized in the following table.
\vskip 0.3truecm
\begin{center}
    \begin{tabular}{ | c | c | c |}
    \hline & &\\[-7pt]
     Flow Transformed & Flow used for L.R.T. & Transformation Result \\[5pt] \hline & &\\[-7pt]
     $\diff{}{t^{1,p}_{Toda}}$ & $\diff{}{t^{2,0}_{Toda}}$ &$\diff{}{t^{2,p}_{NLS}}$  \\[7pt] \hline  & &\\[-7pt]
      $\diff{}{t^{2,p}_{Toda}}$ & $\diff{}{t^{2,0}_{Toda}}$ &$\diff{}{t^{1,p}_{NLS}}$  \\[7pt] \hline & &\\[-7pt]
       $\diff{}{t^{1,p}_{NLS}}$ & $\diff{}{t^{2,0}_{NLS}}$ &$\diff{}{t^{2,p}_{Toda}}$  \\[7pt] \hline & &\\[-7pt]
        $\diff{}{t^{2,p}_{NLS}}$ & $\diff{}{t^{2,0}_{NLS}}$ &$\diff{}{t^{1,p}_{Toda}}$  \\[7pt] \hline & &\\[-7pt]
         $\diff{}{t^{1,p}_{AL}}$ & $\diff{}{t^{1,0}_{AL}}$ &$\diff{}{t^{1,p}_{Toda}}$  \\[7pt] \hline & &\\[-7pt]
          $\diff{}{t^{1,p}_{AL}}$ & $\diff{}{t^{2,0}_{AL}}$ &$\diff{}{t^{2,p}_{NLS}}$  \\ [7pt] \hline & &\\[-7pt]
          
          $\diff{}{t^{2,p}_{AL}}$ & $\diff{}{t^{1,0}_{AL}}$ &$\diff{}{t^{2,p}_{Toda}}$  \\ [7pt] \hline & &\\[-7pt]
           $\diff{}{t^{2,p}_{AL}}$ & $\diff{}{t^{2,0}_{AL}}$ &$\diff{}{t^{1,p}_{NLS}}$  \\ [7pt]\hline
    \end{tabular}
\end{center}
\vskip 0.3truecm
Here for each row, the flow presented in the first column is transformed to the flow in the third column by performing the linear reciprocal transformation given by the flow in the second column.

Let us verify the relation given in, for example, the penultimate row in the above table.  We need to show that  when we perform the linear reciprocal transformation with respect to the flow $\diff{}{t^{1,0}_{AL}}$ in the Principal Hierarchy of $M_{AL}$, which is just the flow \eqref{CP}, then the flows  $\diff{}{t^{2,p}_{AL}}$ in the Principal Hierarchy of $M_{AL}$ are transformed to  $\diff{}{t^{2,p}_{Toda}}$ in the Principal Hierarchy of $M_{Toda}$ for any $p\geq 0$. Denote by $h^{Toda}_{p+1}$ and $h^{AL}_{p+1}$ the Hamiltonian densities of the flows $\diff{}{t^{2,p}_{Toda}}$ and $\diff{}{t^{2,p}_{AL}}$ with respect their first Hamiltonian structures. By using Theorem \ref{ham-reci}, it suffices to show that
 \[
 \diff{h^{Toda}_{p+1}}{\ql^1} = \frac{2\ql^1}{\sqrt{\ql^1}-\sqrt{\ql^2}}\diff{h^{AL}_{p+1}}{\ql^1},\quad  \diff{h^{Toda}_{p+1}}{\ql^2} = \frac{2\ql^2}{\sqrt{\ql^2}-\sqrt{\ql^1}}\diff{h^{AL}_{p+1}}{\ql^2}.
 \]
Note that the gradients of the Hamiltonian densities of Principal Hierarchies are determined by certain recursion relations and quasi-homogeneous conditions \cite{dubrovin1996geometry,dubrovin2001normal}, therefore the above identities can be proved by a straightforward computation by verifying these recursion relations and homogeneous conditions. All other rows in the table can be similarly verified.

According to the result of the bihamiltonian cohomology \cite{DLZ-1}, the bihamiltonian structure $(P_0,P_1)$ of the Ablowitz-Ladik hierarchy determines uniquely the deformations of the flows $\diff{}{t^{\qa,p}_{AL}}$ of the Principal Hierarchy of $M_{AL}$. More explicitly, the deformations of the flows $\diff{}{t^{2,p}_{AL}}$ coincide with the positive flows of the Ablowitz-Ladik hierarchy \cite{li2022tri} and the deformations of the flows $\diff{}{t^{1,p}_{AL}}$ define a certain extension of the Ablowitz-Ladik hierarchy. Since the central invariants of the bihamiltonian structures of the Toda hierarchy, the nonlinear Schr\"{o}dinger hierarchy and the Ablowitz-Ladik hierarchy are all equal to $\frac{1}{24}$, we conclude that the transformation relations presented in the table also hold true for the deformed flows up to Miura-type transformations. For example, by performing a Miura-type transformation if necessary, the positive flows of the Ablowitz-Ladik hiearchy are transformed to the flows of the Toda hierarchy by the linear reciprocal transformation with respect to the first extended flow of the Ablowitz-Ladik hiearchy.

\subsection{Reciprocal transformations and Dubrovin--Zhang hierarchies}
\label{dz}
We can generalize the discussion given in Sect.\,\ref{CF} and ask the following question: when a bihamiltonian integrable hierarchy can be transformed to symmetries of some Dubrovin--Zhang hierarchy \cite{dubrovin2001normal} by performing Miura-type transformations and linear reciprocal transformations?

Recall that the Dubrovin--Zhang hierarchies can be described as tau-symmetric bihamiltonian deformations of the Principal Hierarchies of semisimple Frobenius manifolds with the property that the central invariants of the deformed bihamiltonian structures are all equal to $\frac{1}{24}$ \cite{liu2021linearization}. Therefore we see that flows of a bihamiltonian integrable hierarchy can be transformed to symmetries of a certain Dubrovin--Zhang hierarchy if and only if the central invariants of their bihamiltonian structure are all equal to $\frac{1}{24}$, and their leading terms can be transformed to symmetries of the Principal Hierarchy of a certain semisimple Frobenius manifold. So again it suffices to consider only the hydrodynamic cases.

We will not review the theory of Frobenius manifolds and  associated Principal Hierarchies in this paper, one may refer to \cite{dubrovin1992integrable,dubrovin1996geometry,dubrovin1999painleve,dubrovin2001normal} for details. Instead, we review some results given in \cite{dubrovin2018bihamiltonian}. It is proved in \cite{dubrovin2018bihamiltonian} that a flat exact irreducible semisimple bihamiltonian structure $(P_0,P_1)$ must be the bihamiltonian structure of the Principal Hierarchy of a certain semisimple Frobenius manifold, and conversely bihamiltonian structures of Principal Hierarchies of  semisimple Frobenius manifolds are always flat exact. More precisely, assume that, in terms of the canonical coordinate $(u^1,\dots,u^n)$, the flat pencil corresponding to $(P_0,P_1)$ is given by $\mathrm{diag}(f^1,\dots,f^n)$ and $\mathrm{diag}(u^1f^1,\dots,u^nf^n)$, then the flat exact condition means that
\[
\sum_j \diff{f^i}{u^j} = 0,\quad i = 1,\dots,n,
\]
and the rotation coefficients satisfy the condition $\qg_{ij} = \qg_{ji}$ for $i\neq j$. The irreducible condition means that for any disjoint nonempty partition $I,J$ of the set $\{1,\dots,n\}$ and any index $i\in I$, there exists $j\in J$ such that $\qg_{ij}\neq 0$. 

Now we consider a bihamiltonian integrable hierarchy of hydrodynamic type with semisimple bihamiltonian structure $(P_0,P_1)$. We assume that $(P_0,P_1)$ corresponds to the flat pencil given by $\mathrm{diag}(f^1,\dots,f^n)$ and $\mathrm{diag}(u^1f^1,\dots,u^nf^n)$. Since linear reciprocal transformations preserve the rotation coefficients, the necessary condition of the existence of a linear reciprocal transformation to transform the flows of this integrable hierarchy to symmetries of the Principal Hierarchy of a semisimple Frobenius manifold is $\qg_{ij} = \qg_{ji}$. In what follows, we prove that this is also a sufficient condition under the irreducibility assumption.

Assume that $\qg_{ij} = \qg_{ji}$. In this case, the equations \eqref{BB}--\eqref{BD} read
\begin{align}
\label{CG}
&\qp_k\qg_{ij} = \qg_{ik}\qg_{kj},\quad i,j,k\  \text{distinct},\\
&\sum_k\qp_k\qg_{ij} = 0,\quad i\neq j,\\
\label{CH}
&\sum_ku^k\qp_k\qg_{ij} = -\qg_{ij},\quad i\neq j.
\end{align} 
Our goal is then to find a bihamiltonian vector field 
\[
\diff{u^i}{t} = A^iu^{i,1}
\]
of $(P_0,P_1)$ such that
\begin{equation}
\label{CI}
\sum_j\qp_j\tilde f^i = 0,\quad i=1,\dots,n,
\end{equation}
where
\[
\tilde f^i = \frac{f^i}{(A^i)^2}.
\]
From the the equations \eqref{BE} and \eqref{CI}, we obtain the following system of PDEs for $A^i$:
\begin{align}
\label{CJ}
\qp_i A^i& = \frac{A^i}{2f^i}\qp_if^i-\sum_{j\neq i}\sqrt{\frac{f^i}{f^j}}\qg_{ij}A^j,\\
\label{CK}
\qp_jA^i &= \sqrt{\frac{f_j}{f_i}}\qg_{ij}(A^j-A^i),\quad i\neq j.
\end{align}
By a direct computation, we can show that these equations are compatible, so the solution space is of dimension $n$. Let us prove that if the rotation coefficients $\qg_{ij}$ are irreducible, then we can find a solution $A^1,\dots,A^n$ with $A^i\neq 0$ for each $i  =1,\dots,n$. The argument is similar to that of Lemma 3.3 in \cite{dubrovin2018bihamiltonian}. For a nonempty subset $S\subseteq\{1,\dots,n\}$, let us denote
\[
A_S:=\prod_{i\in S}A^i,
\]
then we will prove that if $A_{\{1,\dots,n\}} = 0$, then for any nonempty subset $S\subseteq\{1,\dots,n\}$, $A_S = 0$ and in particular each $A^i = 0$. Let us prove by induction on $|S|$. For $|S| = n$, this is just our assumption that  $A_{\{1,\dots,n\}} = 0$. Assume that $A_S = 0$ for any $|S|\geq m$ for some $m\leq n$. Now take a nonempty subset $T\subseteq\{1,\dots,n\}$ with $|T|=m-1$ and take $i\in T$. By the irreducibility  assumption, we can find $j\notin T$ such that $\qg_{ij}\neq 0$. Denote $S = T\cup\{j\}$, then it follows from the induction hypothesis that $A_S = 0$. By using the equations \eqref{CJ} and \eqref{CK} we obtain
\begin{align*}
0 =&\,\diff{A_S}{u^i} = \sum_{k\in S}A_{S-\{k\}}\diff{A^k}{u^i}\\ 
=&\,\sum_{k\in S,k\neq i}A_{S-\{k\}}\sqrt{\frac{f_i}{f_k}}\qg_{ij}(A^i-A^k)\\&+A_{S-\{i\}}\frac{A^i}{2f^i}\qp_if^i-A_{S-\{i\}}\sum_{k\neq i}\sqrt{\frac{f^i}{f^k}}\qg_{ik}A^k\\
=&\,\sum_{k\in S,k\neq i}A_{S-\{k\}}\sqrt{\frac{f_i}{f_k}}\qg_{ij}A^i-A_{S-\{i\}}\sum_{k\in S,k\neq i}\sqrt{\frac{f^i}{f^k}}\qg_{ik}A^k.
\end{align*}
Thus it follows from the induction hypothesis that
\[
A_TA_{S-\{i\}} = 0;\quad A_TA_{S-\{k\}} = 0,\quad k\neq j,
\]
and we arrive at
\[
0 = A_T\diff{A_S}{u^i} = A_TA_{S-\{j\}}\sqrt{\frac{f_i}{f_j}}\qg_{ij}A^i = A_T^2\sqrt{\frac{f_i}{f_j}}\qg_{ij}A^i.
\]
By multiplying $A_{T-\{i\}}$ on both sides of the above identity, we obtain $A_T^3\qg_{ij} = 0$ and it follows that $A_T = 0$.

We summarize the results obtained in this subsection in the following theorem.
\begin{Th}
\label{DG}
Assume that we have a bihamiltonian integrable hierarchy and the central invariants of its bihamiltonian structure are all equal to $\frac{1}{24}$. If the rotation coefficients $\qg_{ij}$ of the flat pencil associated with the  leading term of the bihamiltonian structure satisfy the conditions
\[
\qg_{ij} = \qg_{ji},\quad i\neq j,\]
and if these coefficients are irreducible, then flows of  this integrable hierarchy can be transformed to symmetries of a certain Dubrovin--Zhang hierarchy by a linear reciprocal transformation.
\end{Th}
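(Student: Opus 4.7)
The plan is to reduce to the hydrodynamic level and then construct an explicit linear reciprocal transformation. Since linear reciprocal transformations preserve central invariants (by Theorem \ref{AE}), and since Dubrovin--Zhang hierarchies are precisely those bihamiltonian integrable hierarchies whose central invariants equal $\frac{1}{24}$ and whose hydrodynamic leading term is the bihamiltonian structure of the Principal Hierarchy of a semisimple Frobenius manifold, it suffices to transform the leading term $(P_0^{[0]}, P_1^{[0]})$ of our bihamiltonian structure into such a Principal Hierarchy bihamiltonian structure. By the Dubrovin characterization recalled above, this means achieving the flat exact condition together with the irreducible semisimple conditions at the hydrodynamic level. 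Symmetry of the rotation coefficients and irreducibility are automatically preserved by Theorem \ref{BR}, so the only remaining condition to arrange is the exactness relation $\sum_j \partial_j \tilde f^i = 0$, where $\tilde f^i = f^i/(A^i)^2$ and the reciprocal transformation is taken with respect to a bihamiltonian vector field $\partial u^i/\partial t = A^i u^{i,1}$. Combined with the bihamiltonian condition \eqref{BE}, this produces the system \eqref{CJ}--\eqref{CK} for $A^1,\ldots,A^n$.

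The next step is to verify compatibility of the system \eqref{CJ}--\eqref{CK}: computing $\partial_j \partial_k A^i$ in both orders and applying the defining relations \eqref{CG}--\eqref{CH} of symmetric rotation coefficients confirms that the local solution space has dimension $n$. The main obstacle is to select a solution in which every $A^i$ is non-vanishing, since otherwise the reciprocal transformation degenerates. Here I would adapt the argument of Lemma 3.3 of \cite{dubrovin2018bihamiltonian}: assume for contradiction that the product $\prod_i A^i$ vanishes identically, and prove by downward induction on $|S|$ that $A_S := \prod_{i \in S} A^i \equiv 0$ for every non-empty $S \subseteq \{1,\ldots,n\}$, which would then force each $A^i \equiv 0$. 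The inductive step uses irreducibility decisively: for $|T| = m-1$ and $i \in T$, choose $j \notin T$ with $\gamma_{ij} \neq 0$ and set $S = T \cup \{j\}$. Differentiating $A_S = 0$ along $u^i$ and substituting \eqref{CJ}--\eqref{CK}, the induction hypothesis kills every summand except one proportional to $A_T^2 \sqrt{f_i/f_j}\,\gamma_{ij}A^i$; multiplying this identity by $A_{T-\{i\}}$ yields $A_T^3 \gamma_{ij} = 0$, hence $A_T = 0$. A generic initial datum therefore produces a solution with all $A^i$ non-zero.

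With such $A^1, \ldots, A^n$ in hand, performing the linear reciprocal transformation with respect to $\partial u^i/\partial t = A^i u^{i,1}$ yields, by Theorem \ref{BR}, a new semisimple bihamiltonian structure of hydrodynamic type with flat pencil $\mathrm{diag}(\tilde f^i)$ and $\mathrm{diag}(u^i \tilde f^i)$, with the same (symmetric, irreducible) rotation coefficients and now satisfying the exactness condition by construction. The Dubrovin characterization then identifies it as the bihamiltonian structure of the Principal Hierarchy of some semisimple Frobenius manifold. Since the central invariants remain equal to $\frac{1}{24}$ by Theorem \ref{AE}, the full transformed hierarchy is, up to a Miura-type transformation, the Dubrovin--Zhang hierarchy associated with this Frobenius manifold, and the flows of the original integrable hierarchy thereby become symmetries of it. The anticipated main difficulty is the non-vanishing argument in the irreducibility step; the compatibility verification and the final identification are essentially mechanical given the preceding machinery.
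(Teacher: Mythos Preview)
Your proposal is correct and follows essentially the same route as the paper: reduce to the hydrodynamic level via the central-invariant preservation of Theorem~\ref{AE}, set up the system \eqref{CJ}--\eqref{CK} to achieve the exactness condition $\sum_j\partial_j\tilde f^i=0$, verify compatibility, and then run the downward induction on $|S|$ (adapted from Lemma~3.3 of \cite{dubrovin2018bihamiltonian}) using irreducibility to produce a non-vanishing solution. One small clarification: when you differentiate $A_S$ the resulting summands involve $A_{S\setminus\{k\}}$, which have size $m-1$ and are not covered directly by the induction hypothesis; the paper first multiplies by $A_T$ so that each such product acquires the factor $A^j$ (for $k\neq j$) and hence contains $A_S=0$, leaving only the $k=j$ term $A_T^2\sqrt{f_i/f_j}\,\gamma_{ij}A^i$ --- your sketch has the right idea but this intermediate multiplication is what actually makes the cancellation work.
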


In the remaining part of this subsection, we show that two Dubrovin-Zhang hierarchies can be related by a linear reciprocal transformation if and only if their corresponding Frobenius manifolds can be related by a Legendre transformation. 

We start by recalling  the definitions of Legendre transformations for Frobenius manifolds \cite{dubrovin1996geometry,dubrovin2001normal}. Let $M$ be a Frobenius manifold with flat coordinates $v^1,\dots,v^n$. Let us denote by $\eta_{\qa\qb}$ the flat metric on $TM$ and by $c^\qa_{\qb\qg}(v)$ the structure constants of the Frobenius algebra on $T_vM$, i.e., on $TM$ we have 
\[
\langle\qp_\qa,\qp_\qb\rangle = \eta_{\qa\qb},\quad \qp_\qa\cdot\qp_\qb = c_{\qa\qb}^\qg\qp_\qg.
\]
 Choose a flat invertible vector field $b = b^\qa\qp_\qa$, then a Legendre transformation can be defined by 
\[
v^\qa\mapsto \hat v^\qa = b^\qb\eta^{\qa\qg}\qp_\qg\qp_\qb F,
\]
where $F$ is the potential of the Frobenius manifold $M$. Note that we have the relation
\[
\qp_\qa = b\cdot\hat\qp_\qa,
\]
here $\hat\qp_\qa = \diff{}{\hat v^\qa}$. It is proved in \cite{dubrovin1996geometry} that there exists a function $\hat F(\hat v)$ such that
\[
\hat\qp_\qa\hat\qp_\qb\hat F = \qp_\qa\qp_\qb F
\]
and that $\hat F$ defines a new Frobenius manifold structure on  $M$ with a new flat metric on $\hat M$ given by
\begin{equation}
\label{DF}
\langle X ,Y\rangle_b:=\langle b\cdot X ,b\cdot Y\rangle,\quad X,Y\in\Qg(TM),
\end{equation}
and the multiplication on $TM$ is not changed. Note that the metric $\langle - ,-\rangle_b$ has flat coordinates $\hat v^1,\dots,\hat v^n$ and
\[
\langle \hat\qp_\qa ,\hat\qp_\qb\rangle_b = \eta_{\qa\qb}.
\]
In the semisimple case, it is proved that Legendre transformations preserve the rotation coefficients \cite{Strachan2017}, therefore it follows from Theorem \ref{DG} that if two Frobenius manifolds are related by a Legendre transformation, then their corresponding Dubrovin-Zhang hierarchy is related by a linear reciprocal transformation. 

Now let us assume that the Principal Hierarchy of a semisimple Frobenius manifold $M$ can be transformed to symmetries of the Principal Hierarchy of another semisimple Frobenius manifold $M'$. Let us use $(u^1,\dots,u^n)$ to denote the canonical coordinates of $M$, then we know that the metric $\langle-,-\rangle$ is of diagonal form and we denote
\begin{equation}
\label{DK}
\langle du^i,du^j\rangle = \frac{\qd_{i,j}}{\psi_{i1}^2}.
\end{equation}
We define functions $\psi_{i\qa}$ by
\[
\diff{v^\qa}{u^i} = \psi_{i1}\psi_{i}^\qa,\quad \psi_{i}^\qa = \eta^{\qa\qb}\psi_{i\qb}.
\]
The following identities are useful \cite{dubrovin1996geometry,dubrovin2001normal}: 
\begin{align}
\label{DH}
c^\qa_{\qb\qg} &=\sum_i \frac{\psi_i^\qa\psi_{i\qb}\psi_{i\qg}}{\psi_{i1}},\\
\label{DI}
\diff{v^\qa}{u^i} &= \psi_{i1}\psi_{i}^\qa,\quad \diff{u^i}{v^\qa} = \frac{\psi_{i\qa}}{\psi_{i1}},\\
\label{DJ}
\diff{\psi_{i\qa}}{u^j} &= \qg_{ij}\psi_{j\qa},\quad i\neq j,\quad \sum_j\diff{\psi_{i\qa}}{u^j}  = 0.
\end{align}
If the Principal Hierarchy of $M$ can be transformed to symmetries of the Principal Hierarchy of $M'$ by performing the linear reciprocal transformation with respect to the flow
\begin{equation}
\label{DU}
\diff{u^i}{s} = A^iu^{i,1},
\end{equation}
then in this case the equations \eqref{CJ}, \eqref{CK} satisfied by $A^i$ read
\begin{align*}
&\qp_j A^i = \frac{\psi_{j1}}{\psi_{i1}}\qg_{ij}(A^j-A^i),\quad i\neq j,\\
&\sum_j\qp_jA^i = 0 .
\end{align*}
One can show that the general solution of the above PDEs have the from
\[
A^i = \frac{b^\qa\psi_{i\qa}}{\psi_{i1}},
\]
where $b^\qa$ are certain constants. So we assume that the functions $A^i$ in \eqref{DU} are given by the formulae above. It follows from Theorem \eqref{BR} that  after performing linear reciprocal transformation with respect to the flow
\[
\diff{u^i}{s} = \frac{b^\qa\psi_{i\qa}}{\psi_{i1}} u^{i,1},
\]
the metric \eqref{DK} is transformed to
\[
\langle-,-\rangle\mapsto \langle-,-\rangle'_b,
\]
where 
\begin{equation}
\label{DL}
\langle du^i,du^j\rangle'_b = \frac{\qd_{i,j}}{b^\qa b^\qb \psi_{i\qa}\psi_{i\qb}}.
\end{equation}
Let us check that this metric is exactly the metric defined by \eqref{DF} and hence we can choose $\hat v^1,\dots,\hat v^n$ as flat coordinates of $M'$ and this defines a Legendre transformation from $M$ to $M'$. For this purpose, we only need to compute the metric \eqref{DF} in terms of the canonical coordinate and prove that it coincides with \eqref{DL}. Firstly, we see that
\[
\langle \qp_\qa,\qp_\qb\rangle_b = \langle b\cdot\qp_\qa,b\cdot\qp_\qb\rangle=b^\mu b^\qg\eta_{\ql\qd}c^\ql_{\qa\mu}c^\qd_{\qb\qg}.
\]
Thus by using the identities \eqref{DH} and \eqref{DI} we arrive at
\[
\left\langle \diff{}{u^i},\diff{}{u^j}\right\rangle_b = \psi_{i1}\psi_{j1}\psi_i^\qa\psi_j^\qb\langle \qp_\qa,\qp_\qb\rangle_b = \qd_{i,j}{b^\qa b^\qb \psi_{i\qa}\psi_{i\qb}},
\]
here we also used the identities 
\[
\psi_{i\qa}\psi_j^\qa = \qd_{i,j},
\]
which  can be obtained from \eqref{DI}. Therefore we see that $\langle-,-\rangle'_b = \langle-,-\rangle_b$.

\begin{Ex}
It is proved in \cite{carlet2004extended} that the Toda hierarchy and the nonlinear Schr\"{o}dinger hierarchy can be related by a linear reciprocal transformation, therefore their associated Frobenius manifolds must be related by a Legendre transformation. Indeed, the associated Frobenius manifolds are given by the potentials
\[
F_{\textrm{Toda}} = \frac 12 v^2u+e^u,\quad F_{\textrm{NLS}} = \frac12v^2u+\frac12u^2\log u,
\]
and by choosing $b = \qp_u$, the potential $F_{\textrm{Toda}}$ can be transformed to $F_{\textrm{NLS}}$, see \cite{dubrovin1996geometry,dubrovin2001normal,dubrovin2018bihamiltonian} for details.
\end{Ex}

\section{Conclusion}\label{conc}
In this paper we study the actions of linear reciprocal transformations on bihamiltonian integrable hierarchies. We prove that these transformations preserve the bihamiltonian property, and the only invariants are the central invariants of the bihamiltonian structures and  rotation coefficients of the flat pencils associated with their leading terms.

One may notice that bihamiltonian cohomologies and their variational generalizations play important rules in the proof. So it is natural to ask if the methods presented in this paper work for an integrable hierarchy with only one Hamiltonian structure. More precisely, given flows $\diff{}{t}\in\derx^0_{\geq 1}$ and $\diff{}{\qt}\in\derx^1_{\geq 1}$ satisfying the condition
\[
\fk{\diff{}{t}}{\diff{}{\qt}}=
\fk{\diff{}{\qt}}{\diff{}{\qt}}=0,
\]
we ask whether one can find transformations of odd variables to turn them to vector fields of $D$-type if their leading terms are already of $D$-type. Unfortunately, this is not correct and one can easily write down some examples to justify this. So the method of transformations of odd variables cannot be used to answer the question whether linear reciprocal transformations preserve the Hamiltonian property for integrable hierarchies with only one Hamiltonian structure.

Another open problem is to study the general reciprocal transformations of integrable hierarchies. In \cite{ferapontov2003reciprocal}, the reciprocal transformations of Hamiltonian integrable hierarchies of  hydrodynamic type are studied and it is proved that in general  Hamiltonian structures are transformed to non-local ones.  We hope that we can develop a general theory for non-local Hamiltonian structures and study general reciprocal transformations by using the idea of super tau-cover introduced in \cite{liu2020super}.

The study of linear reciprocal transformations may also find its applications in the study of Gromov--Witten invariants. For example, in Sect.\,\ref{CF}, we relate the Ablowitz--Ladik hierarchy to the extended Toda hierarchy. It is well-known that the Gromov--Witten potential of $\mathbb P^1$ is a particular tau-function of the extended Toda hierarchy \cite{carlet2004extended,getzler2001toda}, and it is conjectured by Brini that the Gromov--Witten potential of local $\mathbb P^1$ is a tau-function of Ablowitz--Ladik hierarchy \cite{brini2012local}. Note that linear reciprocal transformations should preserve tau-functions, so the results in Sect.\,\ref{CF} probably suggest relations among Gromov--Witten invariants of $\mathbb P^1$ and those of local $\mathbb P^1$. 

\medskip

\noindent\textbf{Acknowledgments}
This work is supported by NSFC No.\,12171268.
\vskip 0.3truecm


\begin{thebibliography}{10}

\bibitem{adler1978trace}
{ Adler, M.}
\newblock On a trace functional for formal pseudo-differential operators and
  the symplectic structure of the {Korteweg-de Vries type} equations.
\newblock {\em Invent. Math. \bf 50\/} 219--248 (1978)

\bibitem{brini2012local}
{ Brini, A.}
\newblock The local {Gromov--Witten theory of $\mathbb{CP}^1$ and integrable
  hierarchies}.
\newblock {Commun. Math. Phys. \bf 313}, 571--605 (2012)

\bibitem{brini2012integrable}
{ Brini, A., Carlet, G., and Rossi, P.}
\newblock Integrable hierarchies and the mirror model of local $\mathbb{CP}^1$.
\newblock { Phys. D \bf 241}, 2156--2167 (2012)

\bibitem{camassa1993integrable}
{ Camassa, R., and Holm, D.~D.}
\newblock An integrable shallow water equation with peaked solitons.
\newblock { Phys. Rev. Lett. \bf 71},  1661--1664 (1993)

\bibitem{camassa1994new}
{ Camassa, R., Holm, D.~D., and Hyman, J.~M.}
\newblock A new integrable shallow water equation.
\newblock {Adv. Appl. Mech. \bf  31\/}, 1--33 (1994)

\bibitem{carlet2004extended}
{ Carlet, G., Dubrovin, B., and Zhang, Y.}
\newblock The extended {Toda} hierarchy.
\newblock {Mosc. Math. J.  \bf 4}, 313--332 (2004)

\bibitem{carlet2018central}
{ Carlet, G., Kramer, R., and Shadrin, S.}
\newblock Central invariants revisited.
\newblock {J. {\'E}c. polytech. Math. \bf  5\/}, 149--175 (2018)

\bibitem{carlet2018deformations}
{ Carlet, G., Posthuma, H., and Shadrin, S.}
\newblock Deformations of semisimple{ Poisson} pencils of hydrodynamic type are
  unobstructed.
\newblock {J. Differential Geom. \bf  108}, 63--89 (2018)

\bibitem{chen2006two}
{ Chen, M., Liu, S.-Q., and Zhang, Y.}
\newblock {A two-component generalization of the Camassa-Holm equation and its
  solutions}.
\newblock { Lett. Math. Phys.  \bf  75\/}, 1--15 (2006)

\bibitem{degiovanni2005deformation}
{ Degiovanni, L., Magri, F., and Sciacca, V.}
\newblock On deformation of poisson manifolds of hydrodynamic type.
\newblock {Communications in mathematical physics  \bf  253\/}, 1--24 (2005)

\bibitem{dubrovin1992integrable}
{ Dubrovin, B.}
\newblock Integrable systems in topological field theory.
\newblock {Nuclear Phys. B \bf  379}, 627--689 (1992)

\bibitem{dubrovin1996geometry}
{ Dubrovin, B.}
\newblock Geometry of {2D} topological field theories.
\newblock In: {Integrable systems and quantum groups}, pp.~120--348. Springer (1996)

\bibitem{dubrovin1999painleve}
{ Dubrovin, B.}
\newblock Painlev{\'e} transcendents in two-dimensional topological field
  theory.
\newblock In {\em The Painlev{\'e} property}, pp.~287--412. Springer, (1999) 

\bibitem{dubrovin2016hodge}
{ Dubrovin, B., Liu, S.-Q., Yang, D., and Zhang, Y.}
\newblock Hodge integrals and tau-symmetric integrable hierarchies of
  {Hamiltonian evolutionary PDE}s.
\newblock { Adv. Math. \bf 293\/}, 382--435 (2016)

\bibitem{DLZ-1}
{ Dubrovin, B., Liu, S.-Q., and Zhang, Y.}
\newblock On{ Hamiltonian perturbations of hyperbolic systems of conservation
  laws I: Quasi-Triviality of bi-Hamiltonian} perturbations.
\newblock {Comm. Pure Appl. Math. \bf  59},559--615 (2006)

\bibitem{dubrovin2018bihamiltonian}
{ Dubrovin, B., Liu, S.-Q., and Zhang, Y.}
\newblock Bihamiltonian cohomologies and integrable hierarchies {II}: the tau
  structures.
\newblock {Comm. Math. Phys. \bf  361}, 467--524 (2018)

\bibitem{dubrovin1996hamiltonian}
{ Dubrovin, B., and Novikov, S.}
\newblock {Hamiltonian Formalism of One-Dimensional Systems of Hydrodynamic
  Type, and the Bogolyubov-Whitman Averaging Me}thod.
\newblock {Sov. Math. Dokl.\/}, 665–--669 (1983)

\bibitem{dubrovin2001normal}
{ Dubrovin, B., and Zhang, Y.}
\newblock Normal forms of hierarchies of integrable {PDE}s, {F}robenius
  manifolds and {Gromov-W}itten invariants.
\newblock {\em arXiv:math/0108160v1 [math.DG]\/} (2001)

\bibitem{faddeev1987hamiltonian}
{ Faddeev, L.~D., and Takhtajan, L.~A.}
\newblock { Hamiltonian methods in the theory of solitons}, vol.~23.
\newblock Springer (1987)

\bibitem{fera1995conformal}
{ Ferapontov, E.}
\newblock {Conformally flat metrics, systems of hydrodynamic type and nonlocal
  Hamiltonian operators}.
\newblock { Uspekhi Mat. Nauk  \bf 50}, 175--176 (1995)

\bibitem{ferapontov2001compatible}
{ Ferapontov, E.}
\newblock Compatible {Poi}sson brackets of hydrodynamic type.
\newblock { J. Phys. A \bf  34}, 2377--2391 (2001)

\bibitem{ferapontov2003reciprocal}
{ Ferapontov, E., and Pavlov, M.}
\newblock {Reciprocal transformations of Hamiltonian operators of hydrodynamic
  type: nonlocal Hamiltonian formalism for linearly degenerate} systems.
\newblock { J. Mathematical Physics \bf  44}, 1150--1172 (2003)

\bibitem{fuchssteiner1996some}
{ Fuchssteiner, B.}
\newblock {Some tricks from the symmetry-toolbox for nonlinear equations:
  generalizations of the Camassa-Holm equation}.
\newblock {Phys. D \bf  95}, 229--243 (1996)

\bibitem{fuchssteiner1981symplectic}
{ Fuchssteiner, B., and Fokas, A.~S.}
\newblock Symplectic structures, their b{\"a}cklund transformations and
  hereditary symmetries.
\newblock { Phys. D  \bf 4}, 47--66 (1981)

\bibitem{getzler2001toda}
{ Getzler, E.}
\newblock The {To}da conjecture.
\newblock In { Symplectic Geometry And Mirror Symmetry}, pp.~51--79. World Scientific (2001)

\bibitem{getzler2002darboux}
{ Getzler, E.}
\newblock A {Darboux theorem for Hamiltonian} operators in the formal calculus
  of variations.
\newblock { Duke Math. J.  \bf 111}, 535--560 (2002)

\bibitem{li2022tri}
{ Li, S., Liu, S.-Q., Qu, H., and Zhang, Y.}
\newblock {Tri-Hamiltonian structure of the Ablowitz--Ladik hierarchy}.
\newblock { Phys. D  \bf 433\/}, 133180 (2022)

\bibitem{liu2018lecture}
{ Liu, S.-Q.}
\newblock Lecture notes on bihamiltonian structures and their central
  invariants.
\newblock In { B-Model Gromov--Witten Theory},  pp.~573--625. Springer (2018)

\bibitem{liu2022generalized}
{ Liu, S.-Q., Qu, H., and Zhang, Y.}
\newblock {Generalized Frobenius Manifolds with Non-flat Unity and Integrable
  Hierarchies}.
\newblock {arXiv preprint arXiv:2209.00483\/} (2022).

\bibitem{liu2020super}
{ Liu, S.-Q., Wang, Z., and Zhang, Y.}
\newblock Super tau-covers of bihamiltonian integrable hierarchies.
\newblock {J. Geom. Phys. \bf 170\/} 104351 (2020)

\bibitem{liu2021linearization}
{ Liu, S.-Q., Wang, Z., and Zhang, Y.}
\newblock {Linearization of Virasoro symmetries associated with semisimple
  Frobenius manifolds}.
\newblock {\em arXiv preprint arXiv:2109.01846\/} (2021).

\bibitem{liu2022variational}
{ Liu, S.-Q., Wang, Z., and Zhang, Y.}
\newblock Variational bihamiltonian cohomologies and integrable hierarchies
  {II}: Virasoro symmetries.
\newblock {Communications in Mathematical Physics \bf  395}, 459--519 (2022)

\bibitem{liu2023variational}
{ Liu, S.-Q., Wang, Z., and Zhang, Y.}
\newblock Variational bihamiltonian cohomologies and integrable hierarchies
  {I}: foundations.
\newblock {Communications in Mathematical Physics\/}. (2023)

\bibitem{liu2011jacobi}
{ Liu, S.-Q., and Zhang, Y.}
\newblock Jacobi structures of evolutionary partial differential equations.
\newblock { Adv. Math.  \bf 227}, 73--130 (2011)

\bibitem{liu2013bihamiltonian}
{ Liu, S.-Q., and Zhang, Y.}
\newblock Bihamiltonian cohomologies and integrable hierarchies {I}: a special
  case.
\newblock {Comm. Math. Phys. \bf  324}, 897--935 (2013)

\bibitem{lorenzoni2023miura}
{ Lorenzoni, P., Shadrin, S., and Vitolo, R.}
\newblock Miura-reciprocal transformations and localizable {Poisson pencils}.
\newblock {arXiv preprint arXiv:2301.04475\/} (2023).

\bibitem{lorenzoni2020weakly}
{ Lorenzoni, P., and Vitolo, R.}
\newblock {Weakly nonlocal Poisson brackets, Schouten brackets and
  supermanifolds}.
\newblock { J. Geom. Phys. \bf  149\/}, 103573 (2020)

\bibitem{oevel1989mastersymmetries}
{ Oevel, W., Fuchssteiner, B., Zhang, H., and Ragnisco, O.}
\newblock Mastersymmetries, angle variables, and recursion operator of the
  relativistic toda lattice.
\newblock {J. Math. Phys. \bf  30}, 2664--2670 (1989)

\bibitem{pavlov1995cons}
{ Pavlov, M.}
\newblock {Conservation of the ”forms” of the Hamiltonian structures upon
  linear substitution for independent variables}.
\newblock { Math. Notes  \bf 57\/}, 489--495 (1995)

\bibitem{rogers1982backlund}
{ Rogers, C., and Shadwick, W.~F.}
\newblock {B{\"a}cklund transformations and their applications}, vol.~161.
\newblock Academic press New York (1982)

\bibitem{shabat1972exact}
{ Shabat, A., and Zakharov, V.}
\newblock Exact theory of two-dimensional self-focusing and one-dimensional
  self-modulation of waves in nonlinear media.
\newblock {Sov. Phys. JETP  \bf 34}, 62--70 (1972)

\bibitem{Strachan2017}
{ Strachan, I. A.~B., and Stedman, R.}
\newblock Generalized {Legendre transformations} and symmetries of the {WDVV}
  equations.
\newblock { J. Phys. A \bf  50}, 095202 (2017)

\bibitem{suris2012problem}
{ Suris, Y.~B.}
\newblock {\em {The problem of integrable discretization: Hamiltonian
  approach}}, vol.~219.
\newblock Birkh{\"a}user  (2012)

\bibitem{tsarev1991geometry}
{ Tsarev, M.}
\newblock {The geometry of Hamiltonian systems of hydrodynamic type: The
  generalized hodograph method}.
\newblock {Math. USSR Izv. \bf  37}, 397--419 (1991)

\bibitem{verosky1991negative}
{ Verosky, J.~M.}
\newblock Negative powers of {Olver} recursion operators.
\newblock {J. Math. Phys. \bf  32}, 1733--1736 (1991)

\bibitem{xue2006bihamiltonian}
{ Xue, T., and Zhang, Y.}
\newblock Bihamiltonian systems of hydrodynamic type and reciprocal
  transformations.
\newblock { Lett. Math. Phys.  \bf 75\/}, 79--92 (2006)

\end{thebibliography}

\vskip 0.6truecm
\noindent Si-Qi Liu,

\noindent Department of Mathematical Sciences, Tsinghua University \\ 
Beijing 100084, P.R.~China\\
liusq@tsinghua.edu.cn
\medskip

\noindent Zhe Wang,

\noindent Department of Mathematical Sciences, Tsinghua University \\ 
Beijing 100084, P.R.~China\\
zhe-wang17@mails.tsinghua.edu.cn
\medskip

\noindent Youjin Zhang,

\noindent Department of Mathematical Sciences, Tsinghua University \\ 
Beijing 100084, P.R.~China\\
youjin@tsinghua.edu.cn

\end{document}